\DeclareMathOperator*{\argmin}{arg\,min}
\tikzstyle{state}=[circle,fill=!50,text=white,minimum size =25pt,inner sep=0pt]
\tikzstyle{selected state}=[state, fill=!100]
\tikzstyle{Markov state}=[circle,fill=!50,text=white,minimum size =40pt,inner sep=0pt]
\begin{document}

\newtheorem{Lemma}{Lemma}
\newtheorem{Theorem}{Theorem}
\newtheorem{Example}{Example}
\newtheorem{Definition}{Definition}
\newtheorem{ProblemStatement}{Problem Statement}
\newtheorem{Corollary}{Corollary}
\newtheorem{Claim}{Claim}
\newtheorem{Fact}{Fact}
\newtheorem{Property}{Property}
\newtheorem{Remark}{Remark}

\def \hence {\Rightarrow}
\def \ham {\mathcal{H}}
\def \Ham {\mathcal{H}}
\def\Sec {\S}
\newcommand \remove[1] {}
\newcommand \hide[1] {}
\newcommand \green[1]         {{\color{green}#1}}
\newcommand \blue[1]         {{\color{blue}#1}}
\newcommand \yellow[1]         {{\color{yellow}#1}}
\newcommand \red[1]         {{\color{red}#1}}

\newcommand{\boldf}{\mathbf{f}}

\newcommand{\bFu}[1]{\bF^{(#1)}}
\newcommand{\bI}{\mathbf{I}}
\newcommand{\BS}{\mathbb{S}}
\newcommand{\bS}{\mathbf{S}}
\newcommand{\bR}{\mathbf{R}}
\newcommand{\bu}{\mathbf{u}}
\newcommand{\bx}{\mathbf{x}}
\newcommand{\by}{\mathbf{y}}
\newcommand{\bF}{\mathbf{F}}

\title{Optimal Patching in Clustered Malware Epidemics}

%\author{
%\IEEEauthorblockN{MHR. Khouzani, S. Eshghi, S. Sarkar, S. S. Venkatesh}
%\IEEEauthorblockA{Electrical and Systems Engineering, University of Pennsylvania, Philadelphia, PA 19104\\
%Emails:\emph{khouzani,eshghi,swati,venkates@seas.upenn.edu}}
%}

%\title{Spectrum White Space Trade in Cognitive Radio Networks}

\author{S. Eshghi, MHR. Khouzani, S. Sarkar, S. S. Venkatesh}
        \maketitle

{\renewcommand{\thefootnote}{} \footnotetext{
S. Eshghi, S. Sarkar and S. S. Venkatesh are with the Department of Electrical and Systems Engineering at the University of Pennsylvania,
Philadelphia, PA, U.S.A. Their email addresses are \emph{eshghi,swati,venkates@seas.upenn.edu}. MHR. Khouzani is with the Department of Electrical Engineering at the University of Southern California, Los Angeles, CA. His e-mail address is \emph{rezaeikh@usc.edu}.
\ \\
This paper was presented [in part] at the IEEE Information Theory and Applications Workshop (ITA '12) , San Diego, CA, February, 2012}}

\maketitle

%%%%%%%%%%%%%%%%%%%%%%%%%%%%%%%%%%%%%%%%%%%%%%%%%%%%%%%%%%%%%%%%%%%%%%%%%%%%%%%%%%%%%%%%%%%%%%%%%%%%%%%%%%%%%%%%%%%%%%%%%%%%
%
% : corrections
% : comments
%
%%%%%%%%%%%%%%%%%%%%%%%%%%%%%%%%%%%%%%%%%%%%%%%%%%%%%%%%%%%%%%%%%%%%%%%%%%%%%%%%%%%%%%%%%%%%%%%%%%%%%%%%%%%%%%%%%%%%%%%%%%%%
\begin{abstract}
Studies on the propagation of malware in mobile networks have revealed that the spread of malware can be highly inhomogeneous\red{\hide{ across different regions.}}.
Platform diversity, contact list utilization by the malware, clustering in the network structure, etc. can also lead to differing spreading rates. In this paper, a general formal framework is proposed for leveraging such \hide{heterogeneity information}heterogeneity to derive optimal patching policies that attain the minimum aggregate cost due to the spread of malware and the surcharge of patching.
% We consider a stratified epidemic model in which nodes are divided into different types. Nodes of the same type have homogeneous mixing with a rate specific to that type, and nodes across different types contact each other at rates particular to each pair of types. We consider both non-replicative and replicative patching: in the replicative patching a number of hosts are pre-loaded with the patch and they transmit the patch to nodes of the network. In replicative patching, as the name suggests, each recipient of the patch can also forward the patch to nodes which it contacts, in a mechanism similar to the spread of the malware itself.
Using Pontryagin's Maximum Principle for a stratified epidemic model, it is analytically proven that in the mean-field deterministic regime,
% , for both non-replicative and replicative modes,
optimal patch disseminations are simple single-threshold policies\hide{ that are amenable to implementation in a distributed manner}. Through numerical simulations, the behavior of optimal patching policies is investigated in sample topologies and their advantages are demonstrated.
\end{abstract}

\begin{IEEEkeywords}
Security, Wireless Networks, Immunization and Healing, Belief Propagation, Technology Adoption
\end{IEEEkeywords}
\IEEEpeerreviewmaketitle

\section{Introduction}\label{sec:Introduction}
Worms (self-propagating malicious codes) are a decades-old
% \remove{persistent and recognized}
threat in the realm of the Internet. Worms undermine the network \hide{by performing various malicious activities}in various ways: they can eavesdrop {on} and analyze traversing data, access privileged information, hijack sessions, disrupt network functions such as routing, etc. %the list continues.
Although the Internet is the traditional arena for \hide{malicious codes such as} trojans, spyware, and viruses, \hide{the battle is expanding to new territories: }the current boom in mobile devices, combined with their spectacular software and hardware capabilities, has created a tremendous opportunity for future malware. Mobile devices communicate with each other and with computers through myriad means\hide{\hide{. N}{; n}ot only can they interact using}-- Bluetooth or Infrared when they are in close proximity, \hide{ or through an exchange of} multimedia \hide{{content}} messages (MMS), \hide{they can {also} have ubiquitous access to} mobile Internet, and peer to peer networks\hide{ via a telecom{munications} provider}. Current smartphones are equipped with operating systems, CPUs, and memory powerful enough to execute\hide{increasingly more} complex codes. Wireless malware such as \emph{cabir}, \emph{skulls}, \emph{mosquito}, \emph{commwarrior} have already sounded the alarm~\cite{ramachandran2007stability}. It \hide{has, in fact,}has been theoretically predicted~\cite{wang2009understanding} that it is only a matter of time before major malware outbreaks are witnessed in the wireless domain.

Malware spreads when an infective node \emph{contacts}, i.e., communicates  with,  a susceptible node, i.e., a node
 without a copy of the malware and vulnerable to it. This spread can be countered through patching~\cite{zhu2009social}: the \hide{underlying }vulnerability utilized by
the worm can be fixed by installing security patches that immunize the susceptible and potentially remove the malware {from the infected, hence simultaneously healing and immunizing} %\hide{and thus heal and immunize the}
infective nodes.
However, the distribution of these patches
burdens the limited resources of the network, and can \hide{{lead to}}\remove{significant}\hide{{major}}{wreak havoc on the system if
not carefully controlled}. In wired networks, the spread of \emph{Welchia},
a counter-worm to thwart \emph{Blaster}, \hide{created substantial  traffic which in turn }rapidly destabilized
important sections of the Internet~\cite{leyden2003welchia}. Resource constraints are even more pronounced in wireless networks, {\hide{in which}where} bandwidth is \hide{{\hide{constrained}limited} and is }more sensitive to overload
% as a result of the sha nature of the media,
and nodes {\hide{are limited in their}have limited} energy reserves.
Recognizing the above, works such as~\cite{khouzani2010dispatch,khouzani2011optimal} have included the cost of patching in the aggregate damage of the malware and have characterized the optimal dynamic patching policies that attain desired trade-offs between the patching efficacy and the extra taxation of {\hide{the}}network resources. \hide{However, as we will explain next, these studies suffer{\hide{ed}} from a {common} drawback{\hide{:}; namely, a strong simplifying assumption.

The results in~\cite{khouzani2010dispatch,khouzani2011optimal}}}These results, however,
critically rely on the homogeneous mixing assumption: that all
pairs of nodes have identical expected inter-contact times.
Optimality may now be attained by policies that constrain
the  nodes in the same state to take the same
action. While this assumption may serve as an approximation in cases where  detailed information about the network is unavailable, studies~\cite{wang2009understanding,mickens2005modeling,ramachandran2007modeling,chen2005spatial,li2010cpmc} show that the spread of malware in mobile networks can be {\hide{considerablysignificantly}}very inhomogeneous, owing primarily to the non-uniform distribution of nodes. Thus, a uniform action may be sub-optimal.

 \hide{Wireless nodes in  high density areas, sometimes referred to as ``popular content {\hide{''} regions''} or ``hot-spots'',  have more frequent opportunities to contact each other %other nodes in the same area
than to contact nodes in distant and less dense areas.
Heterogeneity in the contact process can arise for other reasons too{\hide{.}:} %In fact,
%recent studies \cite{??,??} propose systematically injecting diversity in platforms, software and protocols as a means of achieving higher resilience to malware attacks. This way,
malware may have a lower rate of {spread\hide{contact}} between devices with differing operating systems or communication protocols % as compa to devices of the same
  \cite{yang2008improving,lioptimal,nguyen2009macro}. Mobile malware may also select targets from  the  address books of the infective hosts~\cite{zhu2009social}: the contact rate is therefore higher amongst \emph{friendship cliques} in the social network of users. % \cite{zhu2009social}.
Malware that spreads using (mobile or wired) Internet can have easier access to the IP-addresses of the subnet to which the infective host belongs compared to the rest of the masked IP addresses \cite{liljenstam2002mixed}. %Such heterogeneity in the spread of malware in IP-based networks has been previously recognized in wired networks \cite{??} and can be applied in the wireless setting.
% Also,
The behavioral pattern of the users can also cause heterogeneous contact rates, e.g., a safe user may avoid unsolicited mass-messages or may install  firewalls, hence hindering the spread of malware as compared to one with risky behavior.
Moreover, cloud-computing seems to be a natural context for heterogeneous mixing: computers inside the same cluster have a much higher speed of communication amongst themselves than with computers of distant clusters.} %Some clusters may not even be directly interconnected.
We can motivate heterogeneous epidemics in various ways: 1) Proximity-based spread, where locality causes heterogeneity;
2) Software/protocol diversity~\cite{yang2008improving,lioptimal,nguyen2009macro}, which has even been envisioned as a defense mechanism against future threats~\cite{yang2008improving}; 3) IP space diversity, e.g., worms that use the IP masks of specific Autonomous Systems (AS) to increase their yield~\cite{liljenstam2002mixed};
4) Differing clique sizes, especially for epidemics for which there is an underlying social network graph~\cite{faghani2009malware, zhu2009social}; 5) Behavioral patterns, specifically how much risky behaviour every agent engages in\cite{zyba2009defending}; 6) Clusters in cloud computing~\cite{altunay2010optimal}; 7) Technology adoption, belief-formation over social media, and health care~\cite{feichtinger1994dynamic,sethi1977dynamic,sethi2000optimal,behncke2000optimal,wickwire1977mathematical}.

% \paragraph*{What they lacked: heterogeneous mixing, motivations}
Indeed, many works have proposed practical methods to identify{\hide{ and},} characterize and incorporate such inhomogeneities % and to incorporate them
to more accurately predict the spread of infection \cite{liljenstam2002mixed,wang2009understanding,cuzick1990spatial,hsu2006capturing,ramachandran2007modeling,chen2005spatial,antunovic2011percolation}, etc. \hide{Specifically,\cite{antunovic2011percolation} considers the case of regular random graphs and characterizes the relative success of two competing infections. }Relatively few, e.g., \cite{lioptimal,li2010cpmc,zhu2009social},
  consider the cost of patching and seek to minimize it in the presence of heterogeneous contact processes.
%\cite{nguyen2009macro} investigates the long term state of the network given fixed patching rates and falls short of providing an optimal patching policy.
%\cite{khouzani2010dispatch,khouzani2011optimal} assume a homogeneous mean-field epidemic model and propose optimal dynamic patching policies.
The proposed policies in~\cite{li2010cpmc,zhu2009social} are heuristic and apply to  specific settings.
%To the best of our knowledge, the only previous work which considers the overhead of patching and proposes formally provide optimal patching policies for  heterogeneous network is~\cite{lioptimal}.
 The only paper we could find that provides \emph{provably optimal} patching policies for  heterogeneous networks is~\cite{lioptimal}. They{, however,} focus on SIS models and optimize only in the space {of} static policies (those
 that do not vary patching rates over time) therein. Patching performance can {be significantly improved} if we allow the patching rates to
vary dynamically in accordance with the evolution of the {infection\hide{contagion}}. Characterization of the optimal controls in the space of dynamic and clustered policies has, however, so far remained elusive.

\hide{To the best of our knowledge, this is the first work that consider the most general form of a stratified epidemic and provide analytical structural guarantees for a dynamic patching.
Our model is general enough to capture any clustering of the nodes with arbitrary inter-contact rates of interaction.} \hide{In our model, the patching can immunize the susceptibles
and may or may not heal the infectives. We also consider both possibilities whether or not recipients of the patch further propagate it to others.
We also assume any arbitrary convex or concave functions for the cost rate of the infective nodes. The power of our analytical results is in the extensive generality of our model.}

We propose a formal framework for  deriving \emph{dynamic optimal} patching policies that leverage
heterogeneity in the network structure to attain the minimum possible aggregate cost due to the spread of malware and the overhead of patching. We assume arbitrary (potentially non-linear) functions for the cost rates of the infective nodes. We consider both \emph{non-replicative} and \emph{replicative} patching: in the former, some of the hosts are pre-loaded with the patch, which they transmit to the rest. In the latter,  each recipient of the patch can also forward the patch to nodes that it contacts by a mechanism similar to the spread of the malware itself. In our model, patching can immunize susceptible nodes
and may or may not heal infective nodes.\hide{We also assume any arbitrary convex or concave functions for the cost rate of the infective nodes.} The framework in each case relies on  optimal control formulations
that cogently capture the effect of the
patching rate controls
 on the state dynamics and their resulting trade-offs. We accomplish this by using a combination
 of damage functions associated with the controls and  a \emph{stratified}\footnote{Known by other terms such as structured, clustered, multi-class, multi-type, multi-population, compartmental epidemic models, and sometimes loosely as heterogeneous, inhomogeneous or spatial epidemic models.} mean-field deterministic epidemic model in which nodes are divided into different types. Nodes of the same type homogeneously mix with a rate specific to that type, and nodes {of\hide{across}} different types contact each other at rates particular to {that\hide{each}} pair of types. \hide{If two types  do not interact,  the corresponding inter-contact rates are set to zero. }The model
 can therefore  capture any communication topology between different groups of nodes. Above and beyond, it
  can exploit  the inhomogeneity in the network
to enable  a better utilization of the resources. Such higher patching efficacy is achieved by allowing the patching controls to depend on node types, which in turn leads to  multidimensional (dynamic) optimal control
formulations. We first develop our system dynamics and objectives (\S\ref{sec:Sys_Model}) and characterize optimal non-replicative (\S\ref{sec:Optimal_NONrep}) and replicative (\S\ref{sec:Optimal_rep}) patching. We then analyze an alternate objective (\S\ref{sec:Generalizations}) and present numerical simulation of our results (\S\ref{sec:Numericals}).

 Multidimensional optimal control formulations, particularly those in the solution space of functions rather than variables, are usually associated with the pitfall of amplifying  the complexity of the optimization.  An important
 contribution of the paper, therefore, is to prove that for both non-replicative and replicative settings the optimal control associated with each type
has a simple structure provided the corresponding patching cost   is either concave or convex.
Furthermore, our analysis, using Pontryagin's Maximum Principle, reveals that
 the structure of the optimal control for a specific type depends only on the nature of
the corresponding patching cost  and not on those of
other types. This holds even though the control for each type affects immunization
and healing in other types and the spread of the infection in general. Specifically, if the patching cost associated with the control for
a given type is concave, irrespective of the nature of
the patching costs for other types,  the corresponding optimal control turns out to be a bang-bang function with at most
one jump: up to a certain threshold time (possibly different for different types) it selects the maximum possible patching rate and subsequently
it stops patching altogether. \hide{The thresholds will be different for different types
and may now be computed  using efficient off-the-shelf  numerical techniques. }If the patching cost is strictly convex, the decrease from the maximum
to the minimum patching rate is continuous rather than abrupt, and monotonous. To the best of our
knowledge, such simple structure results have not been established in the context of (static or dynamic) control of
heterogeneous epidemics. Our numerical calculations reveal  a series of interesting behaviors of optimal patching policies for different sample topologies.

\red{\hide{As a final comment, although we focus  on malware propagation, stratified or clustered epidemics can capture
state evolutions  in a diverse set of applications, such as technology adoption,
belief propagation over social media, and health care.
The framework that we propose may be used to attain desired tradeoffs between
the resources consumed in applying the control and the damage induced
by the proliferation of undesirable states in these contexts.  This is again achieved by exploiting  heterogeneities
inherent to such systems
through multi-dimensional dynamic optimal control.
Our work advances the state of the art of dynamic optimal control in these settings by considering a model that captures any degree of inhomogeneity.
Existing research in these areas either considers a homogeneous epidemic (e.g. \cite{sethi2000optimal,behncke2000optimal}),
or investigates the optimal control only numerically for a limited number of distinct clusters and guarantees no structural results (e.g.\cite{ndeffo2010optimization,rowthorn2009optimal}).}}

\section{System Model and Objective Formulation}\label{sec:Sys_Model}
In this section we describe and develop  the model of the state dynamics of the system as a general \emph{stratified} epidemic for both non-replicative (\Sec\ref{subsec:non_rep_model}) and replicative (\Sec\ref{subsec:rep_model}) patching, motivate the model (\Sec\ref{subsec:motivation}), formulate the aggregate cost of patching, and cast this resource-aware patching as a multi-dimensional optimal control problem (\Sec\ref{subsec:Cost}). This formulation relies on a key property of the state dynamics which we isolate in \Sec\ref{subsec:Observations}. We develop solutions in this model framework and present our main results in~\Sec\ref{sec:Optimal_NONrep} and~\Sec\ref{sec:Optimal_rep}.

Our models are based on mean-field limits of Poisson contact processes for which pathwise convergence results have been shown (c.f.~\cite[p.1]{kurtz1970solutions},~\cite{gast2010mean}).

\subsection{Dynamics of non-replicative patching}\label{subsec:non_rep_model}
A node is \textbf{infective} if it has been contaminated by  the malware, \textbf{susceptible}
if it is vulnerable to the infection but not yet infected, and  \textbf{recovered}
if it is immune to the malware.
An infective node spreads the malware to a susceptible one while transmitting  data or control messages.
The network consists of nodes that can be stratified into $M$ different \emph{types} (equivalently, \emph{clusters, segments, populations, categories, classes, strata}). The population of these types need not be equal.
Nodes of type $i$ contacts those of type $j$ at rate $\beta_{ij}$. 

For type $i$, $S_i(t)$, $I_i(t)$, and $R_i(t)$ are respectively the fraction of susceptible, infective and recovered states at time $t$. Therefore, for all $t$ and all $i$, we have $S_i(t)+ I_i(t) + R_i(t) = 1$. We assume that during the course of the epidemic, the population of each type is stable and does not change with time.

Amongst each type, a pre-determined set of nodes, called \emph{dispatchers}, are preloaded with the appropriate patch. Dispatchers can transmit patches to both susceptible and infective nodes, \emph{immunizing} the susceptible
and possibly \emph{healing} the infective; in either case successful transmission converts the target node to the recovered state.
In \emph{non-replicative} patching (as opposed to \emph{replicative} patching- see~\Sec\ref{subsec:rep_model}) the recipient nodes of the patch do not propagate it any further.\footnote{This may be preferred if the
patches themselves {can} be contaminated and cannot be reliably authenticated.}
Dispatchers of type $i$ contact nodes of type $j$ at rate $\bar\beta_{ij}$, which may be different from the malware contact rate $\beta_{ij}$ between these two types.
Examples where contact rates may be different include settings where the network manager may utilize a higher priority option for the distribution of patches, ones where the malware utilizes legally restricted means of propagation not available to dispatchers, or ones where the patch is not applicable to all types, with the relevant $\bar\beta_{ij}$ now being zero. The fraction of dispatchers in type $i$, which is fixed over time in the non-replicative setting, is $R_i^0$, where $0 \leq R_i^0 < 1$.

Place the time origin $t=0$ at the earliest moment the infection is detected and the appropriate patches generated. Suppose that at $t=0$, for each $i$, an initial fraction $0 \leq I_i(0) = I_i^0 \leq 1$ of nodes of type $i$ are infected; we set $I_i^0 = 0$ if the infection does not initially exist amongst a type $i$. At the onset of the infection, the dispatchers are the only agents immune to the malware, hence constituting the initial population of recovered nodes. We therefore identify $R_i(0)=R_i^0$. In view of node conservation, it follows that $S_i^0 = 1 - I_i^0 - R_i^0$ represents the initial fraction $S_i(0)$ of susceptible nodes of type $i$.

At any given $t$, susceptibles of type $i$ may be contacted by infectives of type $j$ at rate  $\beta_{ji}$. We may fold resistance to infection into the contact rates and so from a modeling perspective, we may assume that susceptibles contacted by infectious agents are instantaneously infected. Accordingly, susceptibles of type $i$ are transformed to infectives (of the same type) at an aggregate rate of $S_i(t)\sum_{j=1}^M\beta_{ji}I_j(t)$ by contact with infectives of any type.

The system manager regulates the resources consumed in the patch distribution by dynamically controlling the rate at which dispatchers contact susceptible and infective nodes. For each $j$, let the control function $u_j(t)$ represent the rate of transmission attempts of dispatchers of type $j$ at time $t$. We suppose that the controls are non-negative and bounded,
\begin{equation} \label{Controller_Constraint}
 0\leq  u_j(\cdot) \leq u_{j,\max}.
\end{equation}
We will restrict consideration to control functions $u_j(\cdot)$ that have a finite number of points of discontinuity. We say that a control (vector) $\mathbf{u}(t) = \bigl(u_1(t),\dots,u_M(t)\bigr)$ is \emph{admissible} if each $u_j(t)$ has a finite number of points of discontinuity.

Given the control $\mathbf{u}(t)$, susceptibles of type $i$ are transformed to recovered nodes of the same type at an aggregate rate of $S_i(t)\sum_{j=1}^M\bar\beta_{ji} R_j^0 u_j(t)$ by contact with dispatchers of any type. A subtlety in the setting is that the dispatcher may find that the efficacy of the patch is lower when treating infective nodes. This may model situations, for instance, where the malware attempts to prevent the reception or installation of the patch in an infective host, or the patch is designed only to remove the vulnerability that leaves nodes exposed to the malware but does not remove the malware itself if the node is already infected. We capture such possibilities by introducing a (type-dependent) coefficient $0\leq\pi_{ji}\leq1$ which represents the efficacy of patching an infective node: $\pi_{ji}=0$ represents one extreme where a dispatcher of type $j$ can only immunize susceptibles but can not heal infectives of type $i$, while $\pi_{ji}=1$ represents the other extreme where contact with a dispatcher of type $j$ both immunizes and heals nodes of type $i$ equally well; we also allow $\pi_{ij}$ to assume intermediate values between the above extremes.
An infective node transforms to the recovered state if a patch heals it; otherwise, it remains an infective. Infective nodes of type $i$ accordingly recover at an aggregate rate of
$I_i(t)\sum_{j=1}^M\pi_{ji}\bar\beta_{ji}R_j^0u_j(t)$ by contact with dispatchers.

We say that a type $j$ is a \emph{neighbour} of a type $i$ if $\beta_{ij}>0$, and $S_j>0$ (i.e., infected nodes of type $i$ can contact nodes of type $j$). There is now a natural notion of a topology that is inherited from these rates with types as vertices and edges between neighboring types. Figure~\ref{fig:topologies} illustrates some simple topologies.
\begin{figure*}[htbp]
\centering
\subfigure{
\includegraphics[scale=0.4]{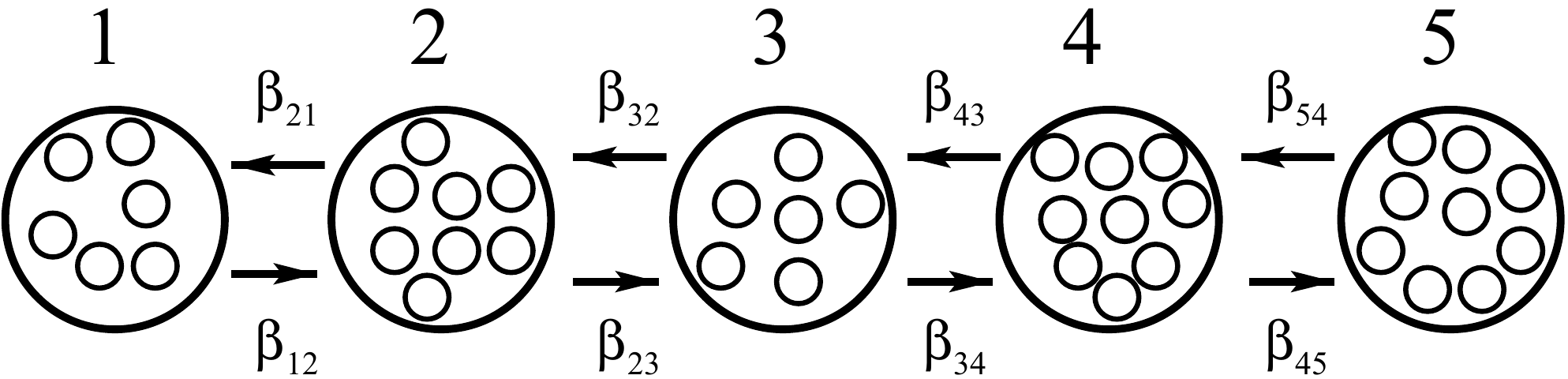}
\label{subfig:linear}
}
\subfigure{
\includegraphics[scale=0.4]{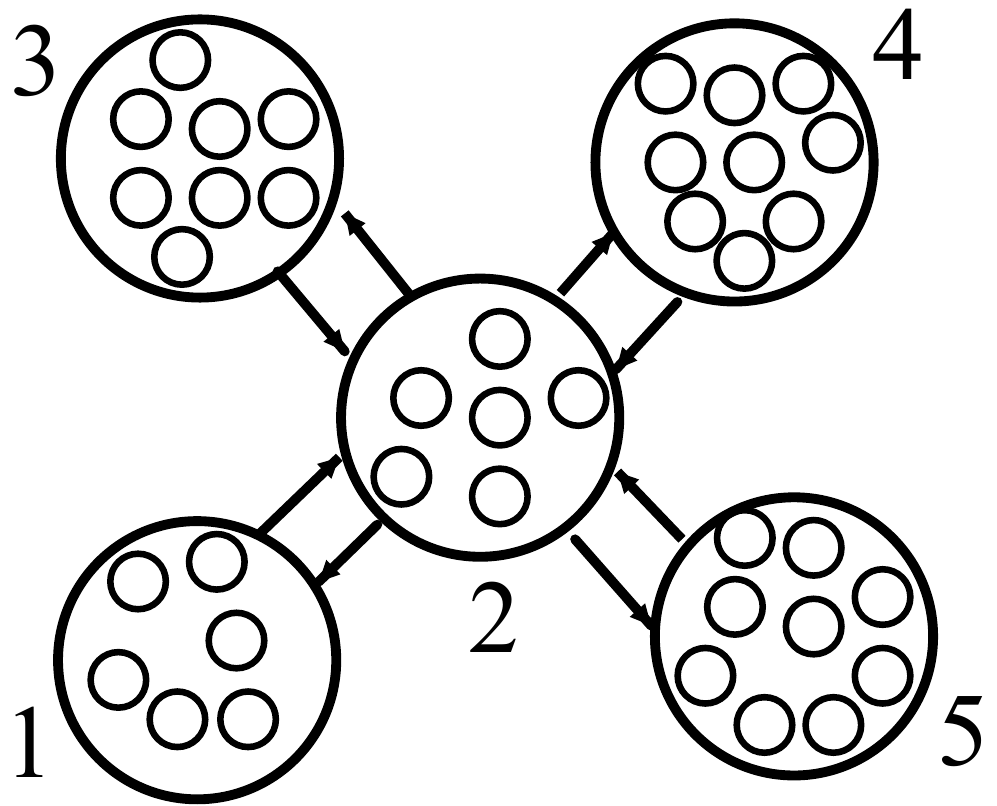}
\label{subfig:star}
}
\subfigure%[Caption of subfigure 3]
{
\includegraphics[scale=0.4]{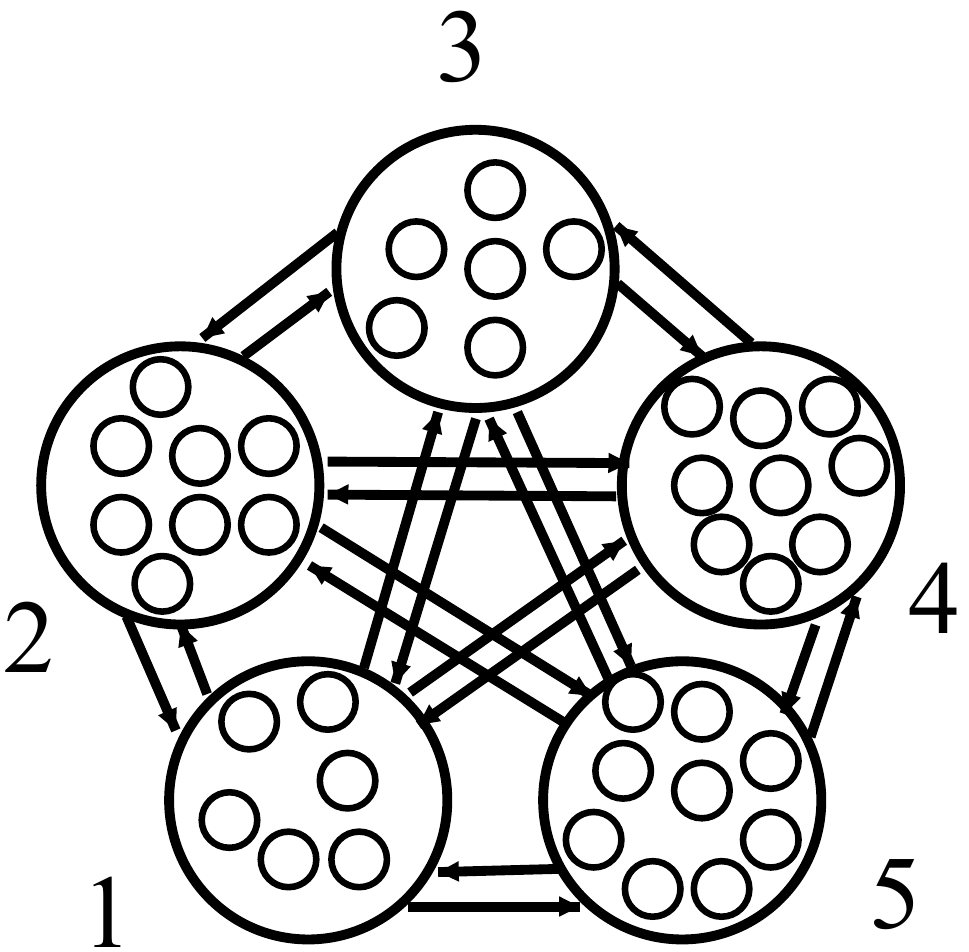}
\label{subfig:complete}
}
\label{fig:subfigureExample}
\caption{Three sample topologies of 5 hotspot regions: linear, star and complete. For instance, nodes of hotspot 1 in the linear topology can only communicate with nodes of hotspots 1 and 2: they contact nodes of
hotspot 1 at rate $\beta_{11}$ and nodes of hotspot 2 at rate $\beta_{12}$. }\label{fig:topologies}
\end{figure*}
For a given topology inherited from the rates $\{\beta_{ij},1\leq i,j\leq M\}$ there is now another natural notion, that of \emph{connectivity}: we say that type $j$ is connected to type $i$ if, for some $k$, there exists a sequence of types $i = s_1\mapsto s_2 \mapsto \dots\mapsto s_{k-1}\mapsto s_k = j$ where type $s_{l+1}$ is a neighbour of type $s_l$ for $1\leq l<k$.  We assume that each type is either initially infected ($I_i(0)>0$), or is connected to an initially infected type. We also assume that for every type $i$ such that $R_i^0>0$, there exists a type $j$ for which $\bar{\beta}_{ij}>0$, i.e., type $i$ can immunize nodes of at least one type, and there exist types $k$ and $l$ for which $\beta_{ki}>0$ and $\beta_{il}>0$, $S_l(0)>0$, i.e., the infection can spread to and from that type. (In most settings we may expect, naturally, that $\beta_{ii}>0$ and $\bar\beta_{ii}>0$.)

Thus, we have\hide{ the following system of differential equations}:\footnote{We use dots to denote \emph{time} derivatives throughout, e.g., $\dot{S}(t) = dS(t)/dt$.}
\begin{subequations}\label{Imm_Sys_NonRep}
\begin{align}
\dot{S}_i
&= -\sum_{j=1}^{M}\beta_{ji}I_jS_i - S_i\sum_{j=1}^M\bar\beta_{ji} R_j^0u_j, \allowdisplaybreaks\\
\dot{I}_i &= \sum_{j=1}^{M}\beta_{ji}I_jS_i -I_i\sum_{j=1}^M\pi_{ji} \bar\beta_{ji}R_j^{0}u_j, %\\
% \dot{R}_i&=S_i\sum_{j=1}^M\bar\beta_{ji} R_ju_j+I_i\sum_{j=1}^M\pi\bar\beta_{ji}R_{j}u_j
\end{align}
\end{subequations}
where, by writing $\mathbf{S}(t) = \bigl(S_1(t),\dots,S_M(t)\bigr)$, $\mathbf{I}(t) = \bigl(I_1(t),\dots,I_M(t)\bigr)$, and $\mathbf{R}^0(t) = \bigl(R^0_1(t),\dots, R^0_M(t)\bigr)$ in a compact vector notation, the initial conditions and state constraints are given by
\begin{gather}%\label{Imm_Sys_NonRep_init}
\mathbf{S}(0)=\mathbf{S}^0\succeq\mathbf{0},\quad
\mathbf{I}(0)=\mathbf{I}^0\succeq\mathbf{0}, \label{Imm_Sys_NonRep_initialstate}\\
\mathbf{S}(t)\succeq\mathbf{0}, \quad \mathbf{I}(t)\succeq\mathbf{0},
\quad \mathbf{S}(t) + \mathbf{I}(t) \preceq \mathbf{1}-\mathbf{R}^0.
\label{Imm_Sys_NonRep_stateconstraint}
\end{gather}
In these expressions $\mathbf{0}$ and $\mathbf{1}$ represent vectors all of whose components are $0$ and $1$ respectively, and the vector inequalities are to be interpreted as component-wise inequalities. Note that the evolution of $\mathbf{R}(t)$ need not be explicitly considered since at any given time, node conservation gives $R_i(t)=1-S_i(t)-I_i(t)$. We henceforth drop the dependence on $t$ and make it implicit whenever we can do so without ambiguity. \hide{The initial conditions $\mathbf{S}^0\succ\mathbf{0}$ and $\mathbf{I}^0\succeq\mathbf{0}$ given in~\eqref{Imm_Sys_NonRep_initialstate} require that  each type has a non-vanishing fraction of susceptible agents initially, though we permit some of the types to be initially infection-free. This allows us to examine the propagation of infection and the evolution of the optimal controls in initially uninfected regions.Furthermore, WLoG we can consider $u_{i,\max}=1$, as any scaling can be applied to the relevant $\bar\beta_{ji}^{(N)}$s.\footnote{In general, if the network manager cannot control a certain type $i$, the corresponding $\bar{beta_{ji}}$'s in the model will be zero} }

\subsection{Dynamics of replicative patching} \label{subsec:rep_model}

In the replicative setting, a recipient of the patch can forward it to other nodes upon subsequent contact. Thus, recovered nodes of type $i$ are added to the pool of dispatchers of type $i$, whence the fraction of dispatchers of type $i$ grows from the initial $R_i(0) = R_i^0$ to $R_i(t)$ at time $t$. This should be contrasted with the non-replicative model in which the fraction of dispatchers of type $i$ is fixed at $R_i^0$ for all $t$.

The system dynamics equations given in~\eqref{Imm_Sys_NonRep} for the non-replicative setting now need to be
modified to take into account the growing pool of dispatchers. 
While in the non-replicative case we chose the pair $\bigl(\mathbf{S}(t), \mathbf{I}(t)\bigr)$ to represent the system state, in the replicative case it is slightly more convenient to represent the system state by the explicit triple $\bigl(\mathbf{S}(t), \mathbf{I}(t), \mathbf{R}(t)\bigr)$. The system dynamics are now governed by:
\begin{subequations}\label{Imm_Sys_Rep}
\begin{align}
\dot{S}_i
&= -\sum_{j=1}^{M}\beta_{ji}I_jS_i  - S_i\sum_{j=1}^M\bar\beta_{ji} R_ju_j, \allowdisplaybreaks\\
\dot{I}_i &= \sum_{j=1}^{M}\beta_{ji}I_jS_i -I_i\sum_{j=1}^M\pi_{ji} \bar\beta_{ji}R_{j}u_j, \allowdisplaybreaks \\
\dot{R}_i&=S_i\sum_{j=1}^M\bar\beta_{ji} R_ju_j+ I_i\sum_{j=1}^M\pi_{ji}\bar\beta_{ji}R_{j}u_j,
\end{align}
\end{subequations}
with initial conditions and state constraints given by
\begin{gather}
\mathbf{S}(0)=\mathbf{S}^0\succeq\mathbf{0},\quad
\mathbf{I}(0)=\mathbf{I}^0\succeq\mathbf{0},\quad
\mathbf{R}(0)=\mathbf{R}^0\succeq\mathbf{0},\label{Imm_Sys_Rep_init}\\
 \mathbf{S}(t)\succeq\mathbf{0}, \quad \mathbf{I}(t)\succeq\mathbf{0},
 \quad \mathbf{R}(t)\succeq\mathbf{0}, \quad  \mathbf{S}(t)+\mathbf{I}(t)+\mathbf{R}(t)=\mathbf{1}.\label{Imm_Sys_Rep_stateconstraint}
\end{gather}
The assumptions on controls and connectivity are as in \Sec\ref{subsec:non_rep_model}.

\subsection{Motivation of the models and instantiations} \label{subsec:motivation}

We now motivate the stratified epidemic models~\eqref{Imm_Sys_NonRep} and~\eqref{Imm_Sys_Rep} through
different examples, instantiating the different types in each context.

\subsubsection{Proximity-based spread---heterogeneity through locality}
The overall roaming area of the nodes can be divided into regions (e.g.,  hotspots, office/residential areas, central/peripheral areas, etc.) of different densities (fig.~\ref{fig:topologies}). One can therefore stratify the nodes  based on their locality, i.e., each type corresponds to a region. IP eavesdropping techniques (using software such as \verb|AirJack|, \verb|Ethereal|, \verb|FakeAP|, \verb|Kismet|, etc.)
 allow malware  to  detect new victims
 in the vicinity of the host. Distant nodes have more attenuated signal strength (i.e., lower SINR) and are therefore less likely to be detected.  Accordingly, malware (and also patch) propagation rates $\beta_{ij}$ (respectively $\bar\beta_{ij}$) are related to the local densities of the nodes in each region and decay with an increase in the distance
between regions $i$ and $j$: typically $\beta_{ii}$ exceeds $\beta_{ij}$ for $i \neq j$, likewise
for  $\bar\beta_{ij}.$  The same phenomenon was observed for
 malware such as \verb|cabir| and \verb|lasco| that use Bluetooth and Infrared to propagate.
% { In such settings the propagation relies  on physical proximity, that is, $\beta_{ij} = 0$  if regions $i, j$ are far off as an infective  may
%    contact only the susceptibles that are in  its communication range.}
%A similar setting applies in Delay Tolerant Networks. \cite{??,??} in particular, show evidence of how the presence of
%popular content regions (with higher density of nodes) lead to higher relative inter-meeting intensities between groups of
%mobile nodes.

%\cite{??} shows that nodes roaming on a restricted region tend to have higher density in the center area than the peripherals.
%Motivated by these works,  These regions can refer to different

\subsubsection{Heterogeneity through software/protocol diversity}
A network that relies on a homogeneous software/protocol is vulnerable to an  attack that exploits a common weakness  (e.g., a buffer overflow vulnerability).
Accordingly, inspired by the natural observation that \emph{the chances of survival are improved by heterogeneity}, increasing the network's heterogeneity without sacrificing interoperability {\hide{is}has been} proposed as a defense mechanism~\cite{yang2008improving}. In practice, mobile nodes use different operating systems and communication protocols, e.g., \verb|Symbian|, \verb|Android|, \verb|IOS|, \verb|RIM|, \verb|webOS|, etc. Such heterogeneities lead to dissimilar rates of propagation of malware amongst different types, where each type represents a specific OS, platform, software, protocol, etc. In the extreme case, the malware may not  be able to contaminate nodes of certain types. The patching response should take such
inhomogeneities into account in order to optimally utilize {\hide{the}}network resources, since the rate of patching {\hide{might}can} also be dissimilar among different types.

\subsubsection{Heterogeneity through available IP space}
Smartphone trojans like \verb|skulls| and \verb|mosquito| spread using Internet or P2P networks. %~\cite{??}.
In such cases the network can be decomposed into \emph{autonomous systems} (ASs) {with each type representing an AS \cite{liljenstam2002mixed}}. %\Z{This makes no sense.}.
%A single gateway router (BGP speaker) can be considered for each AS.
%Note however that number of ASes in Internet, for example, exceed tens of thousands but a network manager is interested only to manage the resources in a portion of the network, i.e., a limited collection of ASes. Hence, WLoG, a special AS and gateway can be introduced to represent ``the rest of the
%Internet'', i.e. all the hosts in ASes that the network manager is not explicitly interested to (or cannot) control.
A worm  either scans IP addresses uniformly randomly or uses the IP masks of ASs to restrict its search domain and increase its rate of finding new susceptible nodes. In each of these cases the contact rates differ between different AS's depending on the actual number of assigned IPs in each IP sub-domain and the maximum size of that IP sub-domain.
% Specifically~\cite{??},
% let $A_j$ denote AS $j$, and let $ip(A_j)$ be the size of the IP
% space announced by AS $j$. Then we have:
% \[
%  \beta_{ij}=\beta \cdot topo_{ij} \cdot \frac{ip(A_j)}{ip_{ij,total}}
% \]
% where $\beta$ is a global infection parameter, $topo_{ij}$ is zero or one depending on the connectivity of $i$'th AS to $j$'th AS in the
% AS-level topology of the network, and $ip_{ij,total}$ is the size of the total IP space that a malware in AS $j$ (e.g. $2^{32}$ for IPv4).

\subsubsection{Heterogeneity through {\hide{size of the cliques}differing clique sizes}}
Malware that specifically spreads in social networks has been recorded in the past few years~\cite{faghani2009malware}. Examples include \verb|Samy| in MySpace in 2005 and \verb|Koobface| in MySpace and Facebook in 2008. \verb|Koobface|, for instance, spread by delivering (contaminated) messages to the ``friends'' of an infective user.
%SASWATI containing a deceptive personalized header and contents which were generated using social engineering techniques.
MMS based malware such as \verb|commwarrior|{\hide{may} can} also utilize the contact list of an infective host to access new handsets.
%The mixing rate of the malware is therefore inhomogeneous in the network.
In such cases the social network graph can be approximated by a collection of friendship \emph{cliques}.\footnote{A clique is a maximal complete subgraph \hide{(a collection of nodes and all \hide{(most of)} the possible links between them)}of a graph\cite[p.~112]{bollobás1998modern}.} Users of the same clique can be regarded as the same type with the rate of contact within cliques and across cliques differing depending on the relative sizes of the cliques.
%Specifically, if the size of clique $i$ is $N_i$ and %the average cross-degree between clique $i$ and clique $j$ is
% % $\bar{d}_ij$
% then we have:
% \[
%  \beta_{ij}=\beta\cdot\frac{N_i}{N}
% \]
% where  again $\beta$ is a global parameter of the malware.
\remove{\subsubsection{heterogeneity through behavioral patterns}
Users may be stratified based on their security-consciousness for example into safe or risky types  based on usage history \cite{zyba2009defending}. Security-savvy users may use more secure protocols, avoid executing untrusted codes or mass forwarding a received message, install firewalls, disable unused features of their handsets, etc. The rate of propagation of the malware is therefore the lowest amongst safe users, higher  between safe and risky users, and
the maximum  amongst the risky users.}

\subsubsection{Cloud-computing---heterogeneity through cluster sizes}
In cluster (or grid, or volunteer) computing~\cite{altunay2010optimal}, each cluster of CPUs in the cloud constitutes a type. Any two computers in the same cluster can communicate  at faster rates than those  in different clusters. These contact rates  depend on the
communication capacity of connecting lines as well as the relative number of computers in each cluster.
% In general, the following relation may hold:
% \[
%  \beta_{ij}=C_{ij}\cdot\frac{N_i}{N}
% \]
% where $C_{ij}$ represent the communication capacity of the patch connecting cluster $i$ to cluster $j$ (and in particular is zero if $j$ is not accessible from $i$).

\subsubsection{Clustered epidemics in technology adoption, belief-formation over social media and health care}
%\cite{feichtinger1994dynamic,sethi1977dynamic,sethi2000optimal,behncke2000optimal,wickwire1977mathematical})
We now elaborate on the application of our clustered epidemics model
 in these diverse set of  contexts. First consider a rivalry between two technologies
or companies for adoption in a given population, e.g., Android and iPhone, or cable and satellite television.
Individuals who are yet to choose either may be considered as susceptibles and
those who have chosen one or the other technology would be classified as either infective or recovered depending
upon their choice. Dispatchers constitute the promoters
of a given technology (the one whose subscribers are denoted as recovered). Awareness about the technology and subsequent subscription to either may spread through
social contact between infectives and susceptibles (infection propagation
in our terminology),  and dispatchers and the rest (patching in our terminology).
Immunization of a susceptible corresponds to her adoption of the corresponding technology, while healing
of an infective corresponds to an alteration in her original choice. The stratifications may be based
on location or social cliques, and the control $\mathbf{u}$ would represent promotion efforts,
which would be judiciously selected by the proponent of the corresponding technology. Patching
may either be replicative or non-replicative depending on whether the newly subscribed users are enticed to
attract more subscribers by referral rewards.  Similarly, clustered epidemics may be used to model belief management over social media, where infective and recovered nodes represent
individuals who have conflicting persuasions and susceptibles represent those who are yet to subscribe to either doctrine.
Last, but not least, the susceptible-infective-recovered classification and immunization/healing/infection have natural connotations in the context of a biological epidemic. Here, the dispatchers correspond to health-workers who administer vaccines and/or hospitalization and the stratification is based on location. Note that in this context, patching can only be non-replicative.

\subsection{Key observations}\label{subsec:Observations}

A natural but important observation is that if the initial conditions are non-negative, then the system dynamics~\eqref{Imm_Sys_NonRep} and~\eqref{Imm_Sys_Rep} yield unique states satisfying the positivity and normalisation constraints~\eqref{Imm_Sys_NonRep_stateconstraint} and~\eqref{Imm_Sys_Rep_stateconstraint}, respectively. The proof is technical and is not needed elsewhere in the paper; we relegate it accordingly to the appendix.

\begin{Theorem}\label{thm:constraints}
The dynamical system~\eqref{Imm_Sys_NonRep} (respectively~\eqref{Imm_Sys_Rep}) with initial conditions~\eqref{Imm_Sys_NonRep_initialstate} (respectively,~\eqref{Imm_Sys_Rep_init}) has a unique state solution \big($\bS(t), \bI(t)$\big) (respectively \big($\bS(t), \bI(t), \bR(t)$\big)) which satisfies the state constraints~\eqref{Imm_Sys_NonRep_stateconstraint} (respectively,~\eqref{Imm_Sys_Rep_stateconstraint}).  For all $t>0$, $I_i(t) > 0$; $R_i(t)>0$ if $R_i(0)>0$; and $S_j(t)>0$ if and only if $S_j(0)\neq0$. For each $j$ such that $S_j(0) = 0$,  $S_j(t') = 0$ for all $t' \in (0, T]$.
\end{Theorem}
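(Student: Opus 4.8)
The plan is to establish the three distinct claims of the theorem---existence/uniqueness, the state constraints, and the strict-positivity/zero-preservation assertions---in that logical order, treating the non-replicative system~\eqref{Imm_Sys_NonRep} and replicative system~\eqref{Imm_Sys_Rep} in parallel since they differ only in whether $R_j^0$ or $R_j(t)$ multiplies the control terms. First I would note that the right-hand sides of both systems are polynomial in $(\mathbf{S},\mathbf{I},\mathbf{R})$ and, as functions of $t$ through the controls $u_j(\cdot)$, are piecewise continuous with only finitely many discontinuities (by admissibility). On each maximal interval between successive discontinuities the vector field is $C^\infty$ in the state variables, hence locally Lipschitz, so the Picard--Lindel\"of theorem gives a unique local solution; I would then concatenate these solutions across the finitely many jump times, using the terminal value on one interval as the initial value on the next, to obtain a unique solution on all of $[0,T]$. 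Continuability to the whole horizon follows once the state is shown to remain in a bounded (indeed compact) region, which the constraint argument below supplies.

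Next I would verify the invariance of the constraint set. The key is to check that the vector field points inward (or is tangent) on each face of the region cut out by~\eqref{Imm_Sys_NonRep_stateconstraint} (respectively~\eqref{Imm_Sys_Rep_stateconstraint}). For the lower faces, observe that each equation has the structural property that the negative terms in $\dot S_i$ and $\dot I_i$ carry an explicit factor of $S_i$ and $I_i$ respectively: so on the face $S_i=0$ we get $\dot S_i=0$, and on $I_i=0$ we get $\dot I_i=\sum_j\beta_{ji}I_jS_i\ge0$; similarly $\dot R_i\ge0$ on $R_i=0$ in the replicative case. This shows no trajectory can cross a coordinate hyperplane into the negative orthant. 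For the upper constraint, in the non-replicative case one tracks $\dot S_i+\dot I_i=-I_i\sum_j\pi_{ji}\bar\beta_{ji}R_j^0u_j\le0$, so $S_i+I_i$ can only decrease and stays below $1-R_i^0$; in the replicative case the normalisation $S_i+I_i+R_i=1$ is preserved identically because $\dot S_i+\dot I_i+\dot R_i=0$ by direct cancellation. This confines the solution to a compact set and retroactively justifies global existence.

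The substantive part---and the step I expect to be the main obstacle---is the strict positivity assertions, because they require moving from ``the solution stays nonnegative'' to ``the solution is strictly positive for $t>0$,'' which the crude invariance argument does not deliver at points where a coordinate starts at zero. The natural tool here is a Gronwall-type integrating-factor representation. For $I_i$, I would write $\dot I_i\ge -I_i\,\phi_i(t)$ where $\phi_i(t)=\sum_j\pi_{ji}\bar\beta_{ji}(R_j^0\text{ or }R_j)u_j$ is bounded, yielding $I_i(t)\ge I_i(t_0)\exp\!\bigl(-\int_{t_0}^t\phi_i\bigr)$; combined with the inflow term $\sum_j\beta_{ji}I_jS_i$, a bootstrapping argument using connectivity (the standing assumption that every type is initially infected or connected to one, and that the relevant $\beta$'s and $S$'s along a connecting path are positive) propagates positivity of $I$ along edges of the topology, so that $I_i(t)>0$ for all $t>0$ and all $i$ even when $I_i^0=0$. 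For $S_j$ the mechanism is cleaner: since $\dot S_j=-S_j\psi_j(t)$ is linear and homogeneous in $S_j$ with bounded $\psi_j$, the integrating factor gives $S_j(t)=S_j(0)\exp\!\bigl(-\int_0^t\psi_j\bigr)$, which is simultaneously the ``if and only if'' for $S_j(t)>0\iff S_j(0)\neq0$ and the assertion that $S_j\equiv0$ on $(0,T]$ whenever $S_j(0)=0$. The analogous integrating-factor bound handles $R_i(t)>0$ when $R_i(0)>0$ in the replicative case. The delicate bookkeeping will lie in making the connectivity-based propagation of strict positivity for $I$ rigorous---establishing that once one neighbour along a connecting path is positive on a subinterval, the inflow term forces the next to become positive---and in confirming that all the exponential integrands are finite despite the finitely many control discontinuities.
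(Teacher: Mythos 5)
Your proposal is correct in substance and shares most of its machinery with the paper's proof: both rest on a Picard/Lipschitz existence--uniqueness argument for the polynomial vector field, obtain the normalisation $S_i+I_i+R_i=1$ by summing the equations, read off the behaviour of $S_j$ from the exact integrating-factor formula $S_j(t)=S_j(0)\exp\bigl(-\int_0^t\psi_j\bigr)$, and propagate strict positivity of $I_i$ along the connectivity graph by induction --- the paper's Lemma~\ref{lem:I=0_2} is precisely the distance-$r$ bootstrapping you sketch, showing $I_i>0$ on $(\frac{r}{M}\delta,T)$ whenever $d(i,U)\le r$. The one genuinely different step is how non-negativity on the boundary of the constraint set is obtained. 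You argue forward-invariance of the orthant directly from quasi-positivity of the field (every negative term in $\dot S_i$, $\dot I_i$, $\dot R_i$ carries a factor of the corresponding coordinate); the paper instead perturbs the initial data into the interior of $\BS$, shows by a first-exit-time contradiction with the integrating-factor bounds that interior trajectories stay strictly positive, and then passes to boundary initial data using the continuous dependence on initial conditions supplied by its Lemma~\ref{baselemma}. Your route is more elementary in that it never needs continuous dependence, but as written it is slightly circular: the claim that $\dot I_i=\sum_j\beta_{ji}I_jS_i\ge0$ on the face $I_i=0$ presupposes that the \emph{other} coordinates are still non-negative there, so you must either invoke a Nagumo/quasi-positivity invariance theorem for locally Lipschitz fields or run the same first-exit-time argument the paper uses. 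Two small slips to fix: $\dot S_i+\dot I_i$ also contains the term $-S_i\sum_j\bar\beta_{ji}R_j^0u_j$ (harmless, since it only reinforces $\dot S_i+\dot I_i\le0$); and the bound $I_i(t)\ge I_i(t_0)\exp\bigl(-\int\phi_i\bigr)$ by itself gives nothing when $I_i(t_0)=0$ --- positivity of a previously uninfected type must come from integrating the inflow term $S_i\sum_j\beta_{ji}I_j$ against the integrating factor, which is exactly the content of the paper's induction step and the ``delicate bookkeeping'' you correctly anticipate.
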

 
\hide{\begin{Lemma}\label{Lem:Positive_States:Imm}
Suppose $\mathbf{u}(\cdot)$ is any admissible control\hide{ satisfying~\eqref{Controller_Constraint}} and consider the evolution of the dynamical system~\eqref{Imm_Sys_NonRep} with initial conditions~\eqref{Imm_Sys_NonRep_initialstate} or the dynamical system~\eqref{Imm_Sys_Rep} with initial conditions~\eqref{Imm_Sys_Rep_init}. Then the state constraints~\eqref{Imm_Sys_NonRep_stateconstraint} and \eqref{Imm_Sys_Rep_stateconstraint} hold in the non-replicative and replicative cases, respectively. \hide{In both cases, moreover, the strict positivity conditions $\mathbf{S}(t)\succ\mathbf{0}$, $\mathbf{I}(t)\succ\mathbf{0}$, and $\mathbf{R}(t)\succ\mathbf{0}$ are satisfied for all $t>0$.}
\end{Lemma}}

\subsection{The optimality objective\hide{ Function}}\label{sec:objective}
\label{subsec:Cost} The network seeks to minimize the overall {\hide{combined}} cost of infection and the resource overhead of patching in a given operation time window $[0, T]$. At any given time $t$, the system incurs costs at a rate $f\bigl(\mathbf{I}(t)\bigr)$ due to the malicious activities of the malware. For instance, the malware may use infected hosts to eavesdrop, analyze,
% that is generated or relayed by the infected
% hosts, or the traffic  that traverses in the hosts' vicinity,
misroute, alter, or destroy the traffic that the hosts  generate or relay. We suppose that $f(\mathbf{0}) = 0$ and make the natural assumption that the scalar function $f(\mathbf{I})$ is increasing and differentiable with respect to each $I_i$. The simplest natural candidate for $f(\mathbf{I})$ is of the form $\sum_{i=1}^M f_i(I_i)$; in this setting each $f_i$ is a non-decreasing function of its argument representing the cost of infection for type $i$ which is in turn determined by the criticality of the data of that type and its function.\footnote{Such differences themselves may be a source of stratification. In general, different types need not  exclusively reflect disparate mixing rates.} The network also benefits at the rate of $L\bigl(\mathbf{R}(t)\bigr)$, i.e., incurs a cost at the rate of $-L\bigl(\mathbf{R}(t)\bigr)$, due to the removal of {uncertainty} about the state of {the} nodes being patched. We suppose that the scalar function $L(\mathbf{R})$ is non-decreasing and differentiable with respect to each $R_i$. \hide{This cost framework also allows us to recover the results of optimal epidemic forwarding policies in multi-type DTN networks~\cite{de2010optimal,altman2009optimal} as a minor variant of our model.\footnote{\label{footnotelabel}In a Delay Tolerant Network a server may seek to broadcast a message to as many nodes as possible before a deadline by employing minimal resources such as energy and bandwidth.  In this case, susceptibles are nodes that are yet to receive the message, recovered nodes are those that have received it, and infectives are absent. The dissemination of the message may either be performed in a non-replicative or replicative manner. The system reward increases with an increase in  the number of nodes that have received  the disseminated message. In addition, the sooner the message is disseminated, the better, hence the integration of $-L(\mathbf{R}(t))$ over time (the negative sign arises because the optimization is cast as a minimization problem). In~\cite[appendix-A]{small2003shared} the integral over time of the fraction of the recipient nodes is related to the probability that a message is delivered to sink nodes before the deadline $T$. The minimum delay problem is {\hide{transfer}hence replaced by} the maximization of an expression of the form $\int_{0}^{T}\sum_{i=1}^M N_iR_i(t)\,dt$. This formulation corresponds to the special case of linear $L(\mathbf{x})=\sum_{i=1}^M c_ix_i$ in our setting with an appropriate scaling. This observation was used in~\cite{de2010optimal,altman2009optimal} to propose resource-optimal epidemic forwarding  policies in multi-type DTNs. The DTN example is captured in our model if we remove all terms from our system dynamics equations (\ref{Imm_Sys_NonRep} and \ref{Imm_Sys_Rep}) that involve $I_i$'s. This connection is also recognized in~\cite{khouzani2011optimal}.}}

In addition to the cost of infection, each dispatcher burdens the network with a cost by consuming either available bandwidth, energy reserves of the nodes (e.g., in communication and computing networks),
 or money (e.g., in technology adoption, propaganda, health-care) to disseminate the patches.
%These costs may represent directly the  bandwidth or energy overhead of the patching, or the uction in the perceived quality of service as a result of such additional resource consumptions. Specifically, it may reflect the increase in the traffic of the network and hence the uce overall available end-to-end bandwidths.
%SASWATI Bandwidth/energy overhead is introduced as a result of (a)~scanning of the media by each dispatcher for the purpose of  finding new nodes (e.g. by sending request to communicate packets); and (b)~the  transmission of the patches once a new node is detected.
%SASWATI These costs are represented as weighted summation over the types of the nodes, where
%the weights represent the number of nodes of each type and  how much resource is consumed for transmission of the patch to nodes of each type and how significant such extra taxation of the resources is for each type.
% Even when the consumed resources are similar, the perceived degradation of the quality of service may differ among different types.
\hide{We will capture this portion of the cost \hide{with a fairly general combination of cost terms for two patching scenarios.
In the first simple scenario, the dispatchers} with a \emph{broadcast} cost for the patch.}
Suppose dispatchers of type $i$ incur cost at a rate of $R_i^0h_i(u_i)$.\hide{In the alternative scenario, dispatchers may { only transmit\hide{only}} to the nodes that have not yet received the patch.\footnote{This can be achieved by keeping a common database of nodes that have successfully received the patch, or {by implementing} a turn-taking algorithm preventing double targeting.
This choice of policy can remove some unnecessary transmissions of the patches and hence save on the patching overhead, but it should be immediately clear that its implementation involves some extra effort.
Note that we naturally assume that the network does not know a priori with certainty which nodes are infective, and hence it cannot differentiate between susceptible and infective nodes. Consequently, even when $\pi_{ij}=0$, i.e., the system manager knows that the patch cannot remove the infection and only immunizes the susceptible, still the best it may be able to do is to forward the message to any node that has not previously received it.}%SASWATI A dispatcher of type $i$ uses the patching rate proportional to $u_i$ to transmit the patch to infective and susceptible nodes of type $j$.
Hence, the cost of patching in this case can in general be represented by a combination of terms of the following structure: $\sum_{i=1}^M \sum_{j=1}^M R_i^0\bar\beta_{ij}(S_j+I_j)h_i^2(u_i)$.}
% and
% $\sum_{i,j=1}^M h_{ij}^3(R_i^0\bar\beta_{ij}u_i(S_j+I_j))$.
We suppose that the overhead of extra resource (bandwidth or energy or money)  consumption at time $t$ is then given by a sum of the form $\sum_{i=1}^MR_i^0h_i(u_i)$. The scalar functions $h_i(\cdot)$ represent how much resource is consumed for transmission of the patch by nodes of each type and how significant this extra taxation of resources is for each type. Naturally enough, we assume these functions are non-decreasing and satisfy $h_i(0)=0$ and $h_i(\gamma)>0$ for $\gamma>0$. We assume, additionally, that each $h_i$ is twice differentiable. Following the same lines of reasoning, the corresponding expression for the cost of replicative patching is of the form $\sum_{i=1}^MR_ih_i(u_i)$.

With the arguments stated above, the aggregate cost  for {\hide{the}}non-replicative patching is given by an expression of the form
%% J_N: N already used for population size; J_R: R used for recovered states.
%% It's preferable to avoid duplication. ssv
\begin{equation}\label{Imm_cost_NonRep}
 \begin{split}
 J_{non-rep}=\int_{0}^T\bigg(f(\mathbf{I})-L(\mathbf{R})+\sum_{i=1}^MR_i^0 h_i(u_i) \bigg)\,dt,
\hide{+\sum_{i=1}^M \sum_{j=1}^M \bar\beta_{ij}R_i^0(S_j+I_j)h_i^2(u_i)}
% \\&\quad+\sum_{i,j=1}^M h_{ij}^3(R_i^0\bar\beta_{ij}u_i(S_j+I_j))
\end{split}
\end{equation}
while for \hide{the}replicative patching, the aggregate cost is of the form
\begin{equation}\label{Imm_cost_Rep}
 \begin{split}
 J_{rep}=\int_{0}^T\bigg(f(\mathbf{I})-L(\mathbf{R})+\sum_{i=1}^M R_ih_i(u_i) \bigg)\,dt.
\hide{+\sum_{i=1}^M \sum_{j=1}^M \bar\beta_{ij}R_i(S_j+I_j)h_i^2(u_i)}
% \\&\quad+\sum_{i,j=1}^M h_{ij}^3(R_i\bar\beta_{ij}u_i(S_j+I_j))
\end{split}
\end{equation}
\hide{In the interests of simplicity, we will call the integrands of the above equations $\xi_0$ and $\xi_1$ respectively.In the generalization section, we will look at an alternative cost model which emphasizes the cost of reception of the patches as well as their dissemination.}

\underline{\textbf{Problem Statement}:}
%\\ (A)\emph{Non-Replicative}, and (B)\emph{Replicative Dispatch}}
% Let $\left((S(.),I(.),D(.)),\varepsilon u(.)\right)$ be an
%admissible pair of state and control functions. If
%\[
% J(\varepsilon u)\leq J(\underline{\varepsilon u}) \quad \text{for any admissible
% control  $\underline{\varepsilon u}(.)$}
%\]
%then $\left((S(.),I(.),D(.)),\varepsilon u(.)\right)$ is called an
%\emph{optimal solution} and $\varepsilon u(.)$ is called an
%\emph{optimal control} or the \emph{optimal immunization rate
%function} and $J(\varepsilon u)$ the \emph{minimum cost} under
%non-replicative dispatch.
%SASWATI
The system seeks to minimize the aggregate cost {\textbf{(\underline{A})}}~in~\eqref{Imm_cost_NonRep} for non-replicative patching~(\ref{Imm_Sys_NonRep}, \ref{Imm_Sys_NonRep_initialstate}) and {\textbf{(\underline{B})}}~in~\eqref{Imm_cost_Rep} for replicative patching~(\ref{Imm_Sys_Rep}, \ref{Imm_Sys_Rep_init}) by an appropriate selection of an optimal admissible control $\mathbf{u}(t)$. \hide{,  allowing  the states to evolve {\textbf{(\underline{I})}}~as per~\eqref{Imm_Sys_NonRep} for
non-replicative, and (\underline{\textbf{II}})~as per~\eqref{Imm_Sys_Rep} for replicative patching, {\hide{and satisfy}while satisfying}
the respective initial state
conditions in \eqref{Imm_Sys_NonRep_init} and~\eqref{Imm_Sys_Rep_init}.}
\ \\
In this setting, it is clear that by a scaling of $\bar\beta_{ji}$ and $h_j(\cdot)$ we can simplify our admissibility conditions on the optimal controls by supposing that $u_{j,\max}=1$. 

It is worth noting that any control in the non-replicative case can be emulated in the replicative setting: this is because the fraction of the dispatchers in the replicative setting is non-decreasing, hence at any time instance, a feasible $u^{rep}(t)$ can be selected such that $R_i(t)u_i^{rep}(t)$ is equal\hide{ed} to $R^0_iu_i^{non-rep}(t)$. \hide{This analytic argument proves that the optimal solution in the replicative setting is more efficient that the corresponding non-replicative case.} This means that the minimum cost of replicative patching is always less than the minimum cost of its non-replicative counterpart. \hide{This is indeed compatible with our numerical results.}Our numerical results will show that this improvement is substantial. However, \hide{this added efficiency is achieved at the cost of introducing the potential risk of compromised patches}replicative patches increase the risk of patch contamination:\hide{this "hole" in the protocol can be utilized by a malware to disguise itself as a patch and propagate, in which case} the security of a smaller set of known dispatchers is easier to manage than that of a growing set whose identities may be ambiguous.
Hence, in a nutshell, if there is a dependable mechanism for authenticating patches, replicative patching is the preferred method, otherwise one needs to evaluate the trade-off between {the risk of compromised patches}  and the efficiency of the patching.

\hide{\begin{Remark}\label{Remark:DTN}
\emph{
In a Delay Tolerant Network, a server may seek to broadcast a message to as
many nodes as possible before a deadline by employing minimal resources such as energy and bandwidth.  In this case, susceptibles are the nodes that are
yet to receive the message, and the recovered are the ones that have received it. Dissemination of the message may either be performed in non-replicative
%(in which only a chosen fraction of the initial pool of disseminators who
%are pre-loaded with the message distribute it to the susceptibles)
or replicative %(where the recipients of the message, i.e. recovered nodes,  act as disseminators as
%well).
manner.
Infective nodes are absent in this probLemma
%Distribution of the message consumes network resources such as
%energy and bandwidth, but facilitates the broadcast.
There is reward associated with increasing the total number of nodes that have received a copy of the
disseminated message. Also, the sooner the
message is disseminated, the better, hence the
integration of $-L(\mathbf{R}(t))$ over time (the negative sign in there because the optimization is cast as a minimization problem).
The probability that a
message is delivered to sink nodes before deadline $T$ is increasing in the integral over time of the
fraction of the recipient nodes \cite[appendix-A]{small2003shared}. Hence the minimum delay problem is transformed to
maximization of $\int_{0}^{T}\sum_{i=1}^M N_iR_i(t)\,dt$, which
corresponds to the special case of linear $L(\mathbf{x})=\sum_{i=1}^M c_ix_i$ in our setting for appropriate scalings.
This observation is used in papers such as~\cite{de2010optimal,altman2009optimal} to propose resource-optimal epidemic forwarding  policies in multi-type DTNs. Our model hence captures the systems discussed in~\cite{de2010optimal,altman2009optimal} as a special case by assuming $I_0=0$ and $f(\mathbf{I})\equiv \mathbf{0}$.
% , and is
% (ii)~increasing in the transmission rates of the activated disseminators.
%%% can mention the destination problem here as well (multihop vs two hop)
% Again, dynamic optimal control can be utilized to resolve a problem of practical importance in the context of networking.
%
This connection is also recognized in~\cite{khouzani2011optimal}.}
\end{Remark}}

\section{Optimal Non-Replicative Patching}
\label{sec:Optimal_NONrep}

\subsection{Numerical framework for computing the optimal controls}
\label{numerical-nonrep}

The main challenge in computing the  optimal state and control functions $\bigl((\mathbf{S}, \mathbf{I}), \mathbf{u}\bigr)$ is that while the differential equations \eqref{Imm_Sys_NonRep} can be solved once the optimal controls $\mathbf{u}(\cdot)$ are known, an exhaustive search for an optimal control is infeasible as there are an  uncountably infinite number of control functions. \emph{Pontryagin's Maximum Principle (PMP)} provides an elegant technique for solving this seemingly intractable problem (c.f.~\cite{stengel}). \red{\hide{PMP derives from the classical calculus of variations. It bears a close analogy to the primal-dual non-linear optimization framework and indeed generalizes it in the sense that that it allows optimization in the function (as opposed to the variable) space: \eqref{Imm_cost_NonRep} and \eqref{Imm_Sys_NonRep} are analogous to the objective function and constraints of a primal optimization formulation; \emph{adjoint} functions, to be defined shortly, will play the role of Lagrange multipliers and the \emph{Hamiltonian} $\mathcal{H}$ will play \hide{similar to}that of the objective function of the relaxed optimization in the primal-dual framework. }}Referring to the integrand in \eqref{Imm_cost_NonRep} as $\xi_{non-rep}$ and the RHS of (\ref{Imm_Sys_NonRep}a) and (\ref{Imm_Sys_NonRep}b) as $\nu_{i}$ and $\mu_{i}$, we define the Hamiltonian to be
%PMP defines the \emph{Hamiltonian} as:
\begin{equation}
 \begin{split}
 \label{hamiltonian}
 \mathcal{H} = \mathcal{H}(\mathbf{u}) :=   \xi_{non-rep} + \sum_{i=1}^M (\lambda^S_i \hide{\dot{S_i}}\nu_{i}+ \lambda^I_i \hide{\dot{I_i}}\mu_{i}),
 \end{split}
\end{equation}
where the \emph{adjoint} (or \emph{costate}) functions $\lambda^S_i$ and $\lambda^I_i$ are continuous
 functions that  for each $i=1\ldots M$, and at each point of continuity of $\mathbf{u}(\cdot)$, satisfy\hide{\footnote{The convention used throughout this paper is $f_{i}(\mathbf{I}):=\dfrac{\partial f(\mathbf{I})}{\partial I_i}$, and $L_i(\mathbf{R}):=\dfrac{\partial L(\mathbf{R})}{\partial R_i}$.}}
\begin{align}
%\label{co-state-non-rep}
% \begin{split}
\dot{\lambda}^S_i= -\frac{\partial \mathcal{H}}{\partial S_i}, \quad
\dot{\lambda}^I_i= -\frac{\partial \mathcal{H}}{\partial I_i}, \label{co-state-non-rep}
\hide{=&-L_i(\mathbf{R})-f_{i}(\mathbf{I})-\sum_{j=1}^M\bar\beta_{ji}R_j^0h_j^2(u_j)\notag
%-u_ih_i'_3((S_i+I_i)u_i)
\\&- \sum_{j=1}^M\left((\lambda^I_j-\lambda^S_j)\beta_{ij}S_j\right)
 + \lambda^I_i \sum_{j=1}^M\pi_{ji}\bar\beta_{ji} R_j^0u_j}
 %+ \lambda^I_ib_i
%  \end{split}
\end{align}
along with the final (i.e., transversality) conditions
\begin{align}
\label{final-non-rep}
 \lambda^{S}_i(T)=0,\quad \lambda^{I}_i(T)=0.
\end{align}
Then PMP implies that the optimal control at time $t$ \hide{is given by}satisfies
\hide{\begin{align*}
u_i\in \arg\min_{\hide{u_{i,\min}}0\leq \underline{u}_i\leq 1} \mathcal{H}.%, \quad v_i\in \arg\min_{v_{i,\min}\leq \underline{v}_i\leq v_{i,\max}} \mathcal{H}, \quad  \forall i=1,\ldots, M
\end{align*}}
\begin{equation}\label{PMP_non-rep}
  \mathbf{u} \in \argmin_{\mathbf{v}} \mathcal{H}(\mathbf{v})
\end{equation}
where the minimization is over the space of admissible controls (i.e., $H(u) = \min_v H(v)$)\hide{and we select any Hamiltonian-minimizing control as optimal}.

In economic terms, the adjoint functions represent a shadow price (or imputed value); they measure the marginal worth of an increment in the state at time $t$ when moving along an optimal trajectory. Intuitively, in these terms, $\lambda^I_i$ ought to be positive as it represents the additional cost {that} the system incurs per unit time with an increase in  the fraction of infective nodes. Furthermore, as an increase in the fraction of the infective nodes has worse long-term implications for the system than an increase in the fraction of the susceptibles, we anticipate that $\lambda^I_i-\lambda^S_i>0$. The following result confirms this intuition. It is of value in its own right but as its utility for our purposes is in the proof of our main theorem of the following section, we will defer its proof (to \Sec\ref{lemmaproof}) to avoid breaking up the flow of the narrative at this point.
\begin{Lemma}\label{lem:positive_NONrep}
The positivity constraints $\lambda^I_i(t)>0$ and $\lambda^I_i(t)-\lambda^S_i(t)>0$ hold for all $i=1,\ldots,M$ and all $t\in[0,T)$.
\end{Lemma}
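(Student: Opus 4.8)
The plan is to recast the two claims as a single backward-in-time invariant and prove it by a ``last time it touches zero'' contradiction anchored at the transversality conditions $\lambda^S_i(T)=\lambda^I_i(T)=0$. First I would differentiate the Hamiltonian \eqref{hamiltonian} to obtain the costate equations explicitly, using $R_i=1-S_i-I_i$ so that the term $-L(\mathbf{R})$ contributes $+L_i$ (with $L_i:=\partial L/\partial R_i$) to both $\partial\mathcal{H}/\partial S_i$ and $\partial\mathcal{H}/\partial I_i$. Writing $f_i:=\partial f/\partial I_i$ and introducing the shorthand $\phi_i:=\lambda^I_i-\lambda^S_i$, routine differentiation yields
\begin{align*}
\dot\lambda^I_i &= -f_i - L_i - \sum_{m}\phi_m\beta_{im}S_m + \lambda^I_i\sum_j \pi_{ji}\bar\beta_{ji}R_j^0 u_j,\\
\dot\phi_i &= -f_i - \sum_m \phi_m\beta_{im}S_m + \phi_i\sum_j\beta_{ji}I_j + \sum_j(\pi_{ji}\lambda^I_i-\lambda^S_i)\bar\beta_{ji}R_j^0 u_j.
\end{align*}
The reason to isolate $\phi_i$ is that the pair $\{\lambda^I_i,\phi_i\}$ forms a closed linear system whose coupling coefficients carry favourable signs once the states are known to be non-negative (Theorem~\ref{thm:constraints}).

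Next I would establish strict positivity on a left-neighbourhood of $T$. Evaluating both equations at $t=T$ and using $\lambda^I_i(T)=\phi_i(T)=0$ annihilates every coupling and control term, leaving $\dot\lambda^I_i(T)=-f_i-L_i$ and $\dot\phi_i(T)=-f_i$. Since $I_i(T)>0$ by Theorem~\ref{thm:constraints} and $f$ is increasing (so $f_i>0$) while $L$ is non-decreasing ($L_i\ge0$), both derivatives are strictly negative; hence $\lambda^I_i$ and $\phi_i$ decrease through zero at $T$ and are strictly positive on some $(T-\delta,T)$.

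The core is then a contradiction argument on the continuous function $g(t):=\min_i\min\{\lambda^I_i(t),\phi_i(t)\}$. Suppose the lemma fails, so $A:=\{t\in[0,T):g(t)\le0\}\neq\emptyset$, and set $\tau:=\sup A$. The left-neighbourhood result forces $\tau\le T-\delta<T$, and continuity with $g>0$ on $(\tau,T)$ gives $g(\tau)=0$ while $\lambda^I_k(\tau)\ge0$ and $\phi_k(\tau)\ge0$ for all $k$. Because $g$ is positive just to the right of $\tau$ and vanishes at $\tau$, whichever quantity attains the minimum has non-negative right derivative there; I would then show that derivative is actually strictly negative. If $\lambda^I_i(\tau)=0$, the control term vanishes and $\dot\lambda^I_i(\tau^+)=-f_i-L_i-\sum_m\phi_m(\tau)\beta_{im}S_m(\tau)\le-f_i<0$. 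If instead $\phi_i(\tau)=0$, then $\lambda^S_i(\tau)=\lambda^I_i(\tau)$, so $\pi_{ji}\lambda^I_i-\lambda^S_i=-(1-\pi_{ji})\lambda^I_i(\tau)\le0$, whence $\dot\phi_i(\tau^+)=-f_i-\sum_m\phi_m(\tau)\beta_{im}S_m(\tau)-\sum_j(1-\pi_{ji})\lambda^I_i(\tau)\bar\beta_{ji}R_j^0u_j(\tau)\le-f_i<0$. Either way the right derivative is negative, a contradiction.

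The main obstacle is controlling the sign of the control-dependent patching term $\sum_j(\pi_{ji}\lambda^I_i-\lambda^S_i)\bar\beta_{ji}R_j^0u_j$ appearing in $\dot\phi_i$, since a priori $\lambda^S_i$ has unknown sign and $u_j$ is merely non-negative. The resolution is that its sign is only needed exactly at the touching time $\tau$, where the condition $\phi_i(\tau)=0$ collapses it to $-(1-\pi_{ji})\lambda^I_i(\tau)$, which is $\le0$ precisely because $\pi_{ji}\le1$ and $\lambda^I_i(\tau)\ge0$; thus the non-negativity of all costates at $\tau$, the non-negativity of the states from Theorem~\ref{thm:constraints}, and $f_i>0$ together make every term but $-f_i$ harmless. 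A secondary technical point is that $\mathbf{u}$ may be discontinuous at $\tau$; this is handled by working throughout with one-sided (right) derivatives, which exist because $\mathbf{u}$ has finitely many discontinuities and the adjoint functions are continuous.
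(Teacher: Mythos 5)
Your proposal is correct and follows essentially the same route as the paper's own proof: establish strict positivity of $\lambda^I_i$ and $\lambda^I_i-\lambda^S_i$ on a left-neighbourhood of $T$ from the transversality conditions and the signs of $\dot\lambda^I_i(T^-)$ and $(\dot\lambda^I_i-\dot\lambda^S_i)(T^-)$, then rule out a first backward-in-time touching point by showing the relevant one-sided derivative is strictly negative there, with the control-dependent term collapsing to $-(1-\pi_{ji})\lambda^I_i\bar\beta_{ji}R_j^0u_j\le 0$ exactly as in the paper's Case (A). The only differences are cosmetic: your $\min$-over-$i$ formalization of the "first violation" and your reuse of the symbol $\phi_i$, which clashes with the paper's \eqref{phidef} but is internally consistent.
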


The abstract maximum principle takes on a very simple form in our context. Using the expression for $\xi_{non-rep}$ from \eqref{Imm_cost_NonRep} and the expressions for $\nu_i$ and $\mu_i$ from \eqref{Imm_Sys_NonRep}, trite manipulations show that the minimization~\eqref{PMP_non-rep} may be expressed in the much simpler (nested) scalar formulation
\begin{align}%\begin{align}
\label{optimal_u_NonRep:NONlinear}
% u_i = \begin{cases} u_{i,\min} & \psi_i>0\\ u_{i,\max} & \varphi_i<0 \end{cases} %,\quad  v_i = \begin{cases} v_{i,\min} & \psi_i > 0 \\v_{i,\max} & \psi_i<0 \end{cases}
\hspace{-0.1in}u_i (t)  &\in  \argmin_{0\leq x\leq 1} \psi_i (x,t)
%_{0\leq \underline{u}_i\leq u_{i,\max}}
\hide{\Bigg({R_i^0h_i^1(u_i)+R_i^0h_i^2(u_i)}\sum_{j=1}^M\bar\beta_{ij}(S_j+I_j)-\phi_iu_i
%+h_i^3((S_i+I_i)u_i)
\Bigg)} \qquad (1\leq i\leq M);\\
 %, \quad v_i\in \arg\min_{v_{i,\min}\leq \underline{v}_i\leq v_{i,\max}} \mathcal{H}, \quad  \forall i=1,\ldots, M
%\end{align}
\ % \begin{align*}
\label{def:psi_i}\psi_i(x,t):&=R_i^0(h_i(x)-\hide{R_i^0}\phi_i(t)x);\\
\phi_i:&=\hide{R_i^0}\sum_{j=1}^M\bar\beta_{ij}\lambda^S_jS_j +\hide{R_i^0}\sum_{j=1}^M\bar\beta_{ij}\pi_{ij}\lambda^I_jI_j.
\label{phidef} %& & & \psi_i:=\rho-\lambda^I_i
 %\end{align*}
 \end{align}
% \begin{eqnarray}
% %\begin{align}
% \label{optimal_u_NonRep:NONlinear}
% % u_i = \begin{cases} u_{i,\min} & \varphi_i>0\\ u_{i,\max} & \varphi_i<0 \end{cases} %,\quad  v_i = \begin{cases} v_{i,\min} & \psi_i > 0 \\v_{i,\max} & \psi_i<0 \end{cases}
% &&u_i  \in  \arg\min
% %_{0\leq \underline{u}_i\leq u_{i,\max}}
% \left(h_i^1(u_i)+R^0_ih_i^2(u_i)\sum_{j=1}^M\bar\beta_{ij}(S_j+I_j)-\phi_iu_i
% %+h_i^3((S_i+I_i)u_i)
% \right) \\ %, \quad v_i\in \arg\min_{v_{i,\min}\leq \underline{v}_i\leq v_{i,\max}} \mathcal{H}, \quad  \forall i=1,\ldots, M
% %\end{align}
% && \mbox{where } \
% % \begin{align*}
% \phi_i:=R_i^0\sum_{j=1}^M\bar\beta_{ij}\lambda^S_jS_j +R_i^0\pi_j\sum_{j=1}^M\bar\beta_{ij}\lambda^I_jI_j
% \label{phidef} %& & & \psi_i:=\rho-\lambda^I_i
%  %\end{align*}
%  \end{eqnarray}
\hide{As can be seen from \eqref{optimal_u_NonRep:NONlinear},  %and $h_i^3(\cdot)$,
the condition for an  optimal $u_i$ at each time instant is:
\begin{align}\label{optimal_nonrep_concave}
u_i=\begin{cases}
\hide{1}1 & \psi_i(1)<0\\%h_i^1(u_{i,\max})+(S_i+I_i)h_i^2(u_{i,\max})\\%+h_i^3((S_i+I_i)u_{i,\max})-\phi_iu_{i,\max}<0\\
0 & \psi_i(1)>0%h_i^1(u_{i,\max})+(S_i+I_i)h_i^2(u_{i,\max})%+h_i^3((S_i+I_i)u_{i,\max})-\phi_iu_{i,\max}>0
    \end{cases}
\end{align}
}
\hide{In~\eqref{optimal_u_NonRep:NONlinear} we again select any minimizing argument as optimal (type) control.}

Equation \eqref{optimal_u_NonRep:NONlinear} allows us to characterise $u_i$ as a function of the state and adjoint functions at each time instant. Plugging into \eqref{Imm_Sys_NonRep} and \eqref{co-state-non-rep},  we obtain a system of (non-linear)
differential equations that involve{s} only the state and adjoint functions (and not the control $\mathbf{u}(\cdot)$), and {where} the initial values of the states \eqref{Imm_Sys_NonRep_initialstate} and the final values of the adjoint functions {\hide{(eq.}\eqref{final-non-rep}} are known. Numerical methods for solving \emph{boundary value} nonlinear differential equation problems may now be used   to solve for the state and adjoint functions corresponding to the optimal control, thus providing the optimal controls using  \eqref{optimal_u_NonRep:NONlinear}.

We conclude this section by proving an important property of $\phi_i(\cdot)$, which we will use in subsequent sections.
\begin{Lemma}\label{Lem:Psi}
For each $i$, $\phi_i(t)$ is a decreasing function of $t$.
\end{Lemma}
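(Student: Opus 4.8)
The plan is to verify directly that $\dot\phi_i(t)\le 0$. The first step is to write the costate equations \eqref{co-state-non-rep} out explicitly. Since $R_j = 1-S_j-I_j$, the $-L(\mathbf R)$ term of $\xi_{non-rep}$ contributes $+L_j$ under $\partial/\partial S_j$ and $\partial/\partial I_j$, and differentiating the Hamiltonian \eqref{hamiltonian} gives, at points of continuity of $\mathbf u$,
\[
\dot\lambda^S_i = -L_i - (\lambda^I_i-\lambda^S_i)\sum_{j}\beta_{ji}I_j + \lambda^S_i\sum_j\bar\beta_{ji}R_j^0 u_j,
\]
\[
\dot\lambda^I_i = -f_i - L_i - \sum_{j}(\lambda^I_j-\lambda^S_j)\beta_{ij}S_j + \lambda^I_i\sum_j\pi_{ji}\bar\beta_{ji}R_j^0 u_j,
\]
where $f_i:=\partial f/\partial I_i\ge0$ and $L_i:=\partial L/\partial R_i\ge0$ by the monotonicity assumptions on $f$ and $L$.

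The decisive observation is that $\phi_i$ in \eqref{phidef} is a $\bar\beta_{ij}$-weighted combination of the products $\lambda^S_j S_j$ and $\pi_{ij}\lambda^I_j I_j$, so I would differentiate those products rather than the factors separately. Applying the product rule and inserting the state equations \eqref{Imm_Sys_NonRep}, the control-bearing terms $\lambda^S_j S_j\sum_k\bar\beta_{kj}R_k^0u_k$ and $\lambda^I_j I_j\sum_k\pi_{kj}\bar\beta_{kj}R_k^0 u_k$ cancel \emph{exactly} against their counterparts in $\lambda^S_j\dot S_j$ and $\lambda^I_j\dot I_j$. This yields the control-free identities
\[
\frac{d}{dt}\bigl(\lambda^S_j S_j\bigr) = -S_j\Bigl(L_j + \lambda^I_j\sum_k\beta_{kj}I_k\Bigr),\qquad \frac{d}{dt}\bigl(\lambda^I_j I_j\bigr) = -I_j\Bigl(f_j + L_j + \sum_k(\lambda^I_k-\lambda^S_k)\beta_{jk}S_k\Bigr) + \lambda^I_j S_j\sum_k\beta_{kj}I_k.
\]
That the controls drop out is what makes $\phi_i$ monotone regardless of the (possibly wildly varying) patching policy $\mathbf u(\cdot)$.

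Substituting these into $\dot\phi_i=\sum_j\bar\beta_{ij}\frac{d}{dt}(\lambda^S_j S_j)+\sum_j\bar\beta_{ij}\pi_{ij}\frac{d}{dt}(\lambda^I_j I_j)$, every resulting term except one is manifestly nonpositive: by Lemma~\ref{lem:positive_NONrep} we have $\lambda^I_j>0$ and $\lambda^I_k-\lambda^S_k>0$, by Theorem~\ref{thm:constraints} the states are nonnegative, and $f_j,L_j\ge0$. The single positive contribution is $+\bar\beta_{ij}\pi_{ij}\lambda^I_j S_j\sum_k\beta_{kj}I_k$, arising from the last term of $\frac{d}{dt}(\lambda^I_j I_j)$. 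The key step is to pair it with the negative term $-\bar\beta_{ij}\lambda^I_j S_j\sum_k\beta_{kj}I_k$ produced by $\frac{d}{dt}(\lambda^S_j S_j)$; since $0\le\pi_{ij}\le1$, their sum equals $-\bar\beta_{ij}(1-\pi_{ij})\lambda^I_j S_j\sum_k\beta_{kj}I_k\le0$. Collecting everything gives $\dot\phi_i\le0$.

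The hard part is precisely this lone positive term: the derivative of $\lambda^I_j I_j$ is not sign-definite on its own, and the argument only closes because the healing efficacy obeys $\pi_{ij}\le1$, which lets the infection-transmission term it generates be absorbed by the matching term from the susceptible product; the rest is bookkeeping resting on Lemma~\ref{lem:positive_NONrep} and Theorem~\ref{thm:constraints}. For the sharper claim of \emph{strict} decrease I would note that $\phi_i$ governs the control only for types with $R_i^0>0$, and for such $i$ the connectivity and non-degeneracy assumptions of \Sec\ref{subsec:non_rep_model}, together with $I_j(t)>0$ for $t>0$ (Theorem~\ref{thm:constraints}), guarantee that at least one of the nonpositive terms is strictly negative; in any event $\dot\phi_i\le0$ already shows $\phi_i$ is (weakly) decreasing, which is all the subsequent single-threshold structure of the optimal control needs.
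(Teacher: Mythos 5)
Your proof is correct and follows essentially the same route as the paper's: differentiate $\phi_i$ using the adjoint and state equations, observe that the control-bearing terms cancel exactly, and sign what remains via Lemma~\ref{lem:positive_NONrep}, Theorem~\ref{thm:constraints}, and the monotonicity of $f$ and $L$, with $\pi_{ij}\leq 1$ absorbing the single positive contribution. Your coefficient $(1-\pi_{ij})$ on the $\lambda^I_j\beta_{kj}S_jI_k$ term is in fact the correct one (the $(1+\pi_{ij})$ in the paper's display is a sign slip, as the replicative analogue in Lemma~\ref{phirep1} confirms); the one caveat is that Theorem~\ref{Thm:NonRep:Concave} uses \emph{strict} decrease of $\phi_i$ so that $\gamma_i$ vanishes at most once, so your closing remark that weak monotonicity suffices is too quick, and the strictness argument you sketch (via the non-degeneracy assumptions and $I_j(t)>0$ for $t>0$) is actually needed.
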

\begin{proof}
We examine the derivative of $\phi_i(t)$; we need expressions for  the derivatives of
the adjoint functions towards that end. From \eqref{hamiltonian}, \eqref{co-state-non-rep},
at any $t$ at which $\mathbf{u}$ is continuous, we have:
\small
\begin{align}\label{costate}
% \begin{split}
\hspace{-0.2in}{\dot{\lambda}^S_i}=&-\dfrac{\partial L(\mathbf{R})}{\partial R_i}\hide{(\mathbf{R})}\hide{- \sum_{j=1}^M \bar\beta_{ji}R_j^0h_j^2(u_j)}%-u_ih_i'_3((S_i+I_i)u_i)
- (\lambda^I_i-\lambda^S_i)\sum_{j=1}^M\beta_{ji}I_j+\lambda^S_i\sum_{j=1}^M\bar\beta_{ji}R_j^0 u_j,\notag\\ %\\% + \lambda^S_ia_i \\
\dot{\lambda}^I_i=&-\sum_{j=1}^M\left((\lambda^I_j-\lambda^S_j)\beta_{ij}S_j\right)
 + \lambda^I_i \sum_{j=1}^M\pi_{ji}\bar\beta_{ji} R_j^0u_j-\dfrac{\partial L(\mathbf{R})}{\partial R_i}-\dfrac{\partial f(\mathbf{I})}{\partial I_i}.
\end{align}
\normalsize
 Using \eqref{phidef}, \eqref{costate} and some  reassembly of terms,  at any $t$
 at which  $\mathbf{u}$ is continuous,
\begin{multline*}
\dot{\phi_i}(t) = -\sum_{j=1}^M \bar\beta_{ij}\bigg[ S_j\dfrac{\partial L(\mathbf{R})}{\partial R_j} + \pi_{ij}I_j\left(\dfrac{\partial L(\mathbf{R})}{\partial R_j} + \dfrac{\partial f(\mathbf{I})}{\partial I_j}\right)
+ \sum_{k=1}^M (1+\pi_{ij})\lambda^I_j\beta_{kj}S_j I_k
+\sum_{k=1}^M\pi_{ij} I_j S_k\beta_{jk}(\lambda^I_k-\lambda^S_k)\biggr].
\end{multline*}
The assumptions on $L(\cdot)$ and $f(\cdot)$ (together with Theorem \ref{thm:constraints}) show that the first two terms inside the square brackets on the right are always non-negative. Theorem \ref{thm:constraints} and Lemma \ref{lem:positive_NONrep} (together with our assumptions on $\pi_{ij}$, $\beta_{ij}$ and $\bar\beta_{ij}$) show that the penultimate term is  positive for $t>0$ and the final term is non-negative. It follows that $\dot{\phi}_i(t) < 0$ for every $t\in (0,T)$ at which $\mathbf{u}(t)$ is continuous. As $\phi_i(t)$ is a continuous function of time and its derivative is negative except at a finite number of points (where $\mathbf{u}$ may be discontinuous), it follows indeed that, as advertised, $\phi_i(t)$ is a decreasing function of time.
\end{proof}
\hide{Equation \eqref{optimal_u_NonRep:NONlinear} allows us to characterise $u_i$ as a function of the state and adjoint functions at each time instant. Plugging these controls into \eqref{Imm_Sys_NonRep} and \eqref{co-state-non-rep},  we obtain a system of (non-linear)
differential equations that involve{s} only the state and adjoint functions (and not the control $\mathbf{u}(\cdot)$), and {where} the initial values of the states \eqref{Imm_Sys_NonRep_initialstate} and the final values of the adjoint functions {\hide{(eq.}\eqref{final-non-rep}} are known. Numerical methods for solving \emph{boundary value} nonlinear differential equation problems may now be used   to solve for the state and adjoint functions corresponding to the optimal control, which will provide the optimal controls using  \eqref{optimal_u_NonRep:NONlinear}.}

\subsection{Structure of optimal non-replicative patching}
% \subsection{non-replicative patch}
We are now ready to identify the structures of the  optimal controls $(u_1(t),\ldots,u_M(t))$:\hide{ We separately consider the cases of concave and convex $h_i(\cdot)$.}

\begin{Theorem}\label{Thm:NonRep:Concave}
Predicated on the existence of an optimal control, for types $i$ such that $R_i^0>0$: if $h_i(\cdot)$ is concave, then the optimal
control for type $i$  has the following structure:  $u_{i}(t)=1$ for $0<t<t_{i}$, and $u_{i}(t)=0$  $t_{i}<t\leq T$,
where $t_i \in [0, T).$  If $h_i(\cdot)$ is strictly convex then
the optimal control for type $i$, $u_i(t)$ is continuous and has the following structure:  $u_{i}(t)=1$ for $0<t<t^1_{i}$,
$u_{i}(t)=0$ for $t^2_{i} < t\leq T$, and $u_i(t)$ strictly decreases in the interval  $[t^1_i, t^2_{i}]$,
where $0 \leq t_i^1 < t_i^2 \leq T.$
\hide{\end{Theorem}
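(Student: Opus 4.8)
The plan is to exploit the pointwise characterization of the optimal control in \eqref{optimal_u_NonRep:NONlinear} together with the monotonicity of $\phi_i$ established in Lemma \ref{Lem:Psi}. Since we restrict attention to types with $R_i^0>0$, minimizing $\psi_i(x,t)=R_i^0\bigl(h_i(x)-\phi_i(t)x\bigr)$ over $x\in[0,1]$ is equivalent to minimizing the scalar function $g_t(x):=h_i(x)-\phi_i(t)x$. The linear term $-\phi_i(t)x$ does not affect curvature, so $g_t$ inherits concavity (respectively strict convexity) from $h_i$. The entire argument then reduces to a one-dimensional parametric optimization in which the parameter $\phi_i(t)$ sweeps monotonically downward as $t$ increases.

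First I would treat the concave case. A concave function on a compact interval attains its minimum at an endpoint, so the minimizer of $g_t$ lies in $\{0,1\}$ and the optimal control is bang-bang. Comparing $g_t(0)=h_i(0)=0$ with $g_t(1)=h_i(1)-\phi_i(t)$, the control selects $u_i(t)=1$ when $\phi_i(t)>h_i(1)$ and $u_i(t)=0$ when $\phi_i(t)<h_i(1)$ (ties, occurring at most at a single instant, may be broken arbitrarily). Because $\phi_i$ is decreasing by Lemma \ref{Lem:Psi}, the set $\{t:\phi_i(t)>h_i(1)\}$ is an initial interval $[0,t_i)$, which yields the single-threshold structure. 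To see that $t_i<T$, I would invoke the transversality conditions \eqref{final-non-rep}: these force $\phi_i(T)=0$ by \eqref{phidef}, and since $h_i(1)>0$ by assumption, continuity of $\phi_i$ gives $\phi_i(t)<h_i(1)$ on a left-neighbourhood of $T$, so $u_i\equiv0$ there.

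Next I would treat the strictly convex case. Now $g_t$ is strictly convex, hence has a unique minimizer on $[0,1]$, governed by the first-order condition $g_t'(x)=h_i'(x)-\phi_i(t)=0$. Since $h_i'$ is continuous and strictly increasing it admits a continuous strictly increasing inverse on its range, and the constrained minimizer is $u_i(t)=1$ when $\phi_i(t)\geq h_i'(1)$, $u_i(t)=0$ when $\phi_i(t)\leq h_i'(0)$, and $u_i(t)=(h_i')^{-1}(\phi_i(t))$ in the intermediate regime $h_i'(0)<\phi_i(t)<h_i'(1)$. Monotonicity of $\phi_i$ again partitions $[0,T]$ into at most three consecutive intervals delimited by thresholds $t_i^1\leq t_i^2$; on the middle interval $u_i(t)=(h_i')^{-1}(\phi_i(t))$ is strictly decreasing, being the composition of the strictly increasing $(h_i')^{-1}$ with the strictly decreasing $\phi_i$. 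Continuity of $u_i$ follows since the three pieces agree at the thresholds (where $\phi_i$ equals $h_i'(1)$ or $h_i'(0)$), and $t_i^2\leq T$ follows exactly as before from $\phi_i(T)=0$.

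The routine parts are the curvature bookkeeping and the endpoint comparisons; the step requiring the most care is pinning down the thresholds and their ordering, and in particular using the transversality conditions \eqref{final-non-rep} to guarantee that the control vanishes on a nonempty terminal interval. The only genuine subtlety is that one must use the \emph{strict} monotonicity of $\phi_i$ on $(0,T)$ — which the proof of Lemma \ref{Lem:Psi} does deliver via $\dot\phi_i(t)<0$ — rather than mere monotonicity, since flat stretches of $\phi_i$ would otherwise destroy both the strict-decrease claim in the convex case and the uniqueness of the threshold in the concave case.
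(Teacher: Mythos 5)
Your proposal is correct and follows essentially the same route as the paper: endpoint comparison of $\psi_i(\cdot,t)$ for concave $h_i$ and the first-order condition for strictly convex $h_i$, combined with the strict monotone decrease of $\phi_i$ from Lemma~\ref{Lem:Psi} and the transversality-induced fact $\phi_i(T)=0$ to locate and order the thresholds. The only cosmetic difference is that you fold the linear case into the general concave endpoint comparison (which is valid) and phrase the convex regime via the thresholds $h_i'(0),h_i'(1)$ rather than the paper's implicit function $\eta(t)$; these are equivalent.
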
}
\end{Theorem}
Notice that if $R_i^0=0$ in \eqref{Imm_Sys_NonRep}, the control $u_i$ is irrelevant and can take any arbitrary admissible value.
 Intuitively, at the onset of the epidemic, a large fraction of{\hide{the}} nodes are susceptible to the malware {\hide{each of
 which is a potential victim}(``potential victims'')}. Bandwidth and power resources{\hide{hence}} should {hence} be used maximally in the beginning (in all types), rendering as many infective and susceptible nodes robust against the malware as possible. In particular, there is no gain in deferring patching since the efficacy of healing infective nodes is less than that of{\hide{the immunization of the} immunizing} susceptible nodes (recall that $\pi_{ij}\leq1$). \hide{The fact that the process of curbing the patching in this case is abrupt rather than gradual is however less apparent.} While the non-increasing nature of the optimal control is intuitive, what is less apparent is the characteristics of the decrease, which we establish in this theorem. For concave $h_i(\cdot)$, nodes are patched at the maximum possible rate until a time instant when patching stops abruptly, while for strictly convex $h_i(\cdot)$, this decrease is continuous. \hide{that the drop in the optimal patching rate is abrupt, as established in the proof of this theorem.}It is instructive to note that the structure of the optimal action taken by a type only depends on its own patching cost and not on that of its neighbours. This is somewhat counter-intuitive as the controls for one type affect the infection and recovery of other types. The timing of the decrease in each type differs and depends on the location of the initial infection as well as the topology of the network, communication rates, etc.
%
%\{\hide{\{We first characterize the general structure of the optimal control as a function of the part of the Hamiltonian that is directly affected by $u_i$, and then we prove that this is decreasing in time, leading to a complete characterization of the optimal control throughout the time period.}}}
\newline
\begin{proof}
\hide{When all $h_i(\cdot)$ and $h_i^2(\cdot)$} %and $h_i^3(\cdot)$
For non-linear concave $h_i(\cdot)$,\hide{are concave functions then}
\eqref{optimal_u_NonRep:NONlinear} requires the minimization of the (non-linear concave) difference between a non-linear concave function of a scalar variable $x$ and a linear function of $x$ at all time instants; hence the minimum can only occur at the end-points of the interval over which $x$ can vary. Thus all that needs to be done is to compare the values of $\psi_i(x,t)$ for the  following two candidates: $x = 0$ and $x = 1$. Note that $\psi_i(0,t)=0$ at all time instants and $\psi_i(1,t)$ is a function of time $t$. Let
\begin{align}\label{gamma}
\gamma_i(t):=\psi_i(1,t)=R_i^0h_i(1)-R_i^0\phi_i(t).
\end{align}
%\hide{\{ Let
%\begin{align}\label{def:varphi_i}\varphi_i(u_i):=h_i^1(u_i)+R^0_ih_i^2(u_i)\sum_{j=1}^M\bar\beta_{ij}(S_j+I_j)-\phi_iu_i.\end{align}}}
%\hide{Hence,}%and $h_i^3(\cdot)$,
Then the optimal $u_i$ satisfies the following condition:
\begin{align}\label{optimal_nonrep_concave}
u_i(t)=\begin{cases}
1 & \hide{\psi_i(1)}\gamma_i(t)<0\\%h_i^1(u_{i,\max})+(S_i+I_i)h_i^2(u_{i,\max})\\%+h_i^3((S_i+I_i)u_{i,\max})-\phi_iu_{i,\max}<0\\
0 & \hide{\psi_i(1)}\gamma_i(t)>0%h_i^1(u_{i,\max})+(S_i+I_i)h_i^2(u_{i,\max})%+h_i^3((S_i+I_i)u_{i,\max})-\phi_iu_{i,\max}>0
    \end{cases}
\end{align}

%Thus, the sign of
%$\varphi_i(u_{i,\max})(t)$ determines the value of the optimal control decision $u_i$.
\hide{Note that following }From the transversality conditions in~\eqref{final-non-rep} and the definition of $\phi_i(t)$ in \eqref{phidef}, for all $i$, it follows that $\phi_i(T)=0$.\hide{From Theorem \ref{thm:constraints}{,} $S_j(T) > 0$ for all $j$.} From the definition of the cost term, $h_i(1)>0$\hide{ for all types $i$ that have dispatchers}, hence, \hide{if we define $\gamma_i(\cdot):=\psi_i(1)(\cdot)$ referring to~\eqref{def:psi_i}, and}since $R_i^0>0$, therefore $\gamma_i(T) > 0$. \hide{$\gamma_i(\cdot)$ is a continuous  function of time. }\hide{We will next show} Thus the structure of the optimal control predicted in the theorem for the strictly concave case will follow from \eqref{optimal_nonrep_concave} if we can show that $\gamma_i(t)$ is an increasing
 function of time $t$, as that implies that it can be zero at most at one point $t_i$, with $\gamma_i(t)<0$ for $t<t_i$ and $\gamma_i(t)>0$ for $t>t_i$. From \eqref{gamma},  $\gamma_i$ will be an increasing function of time if $\phi_i$ is a decreasing function of time, a property which we showed in Lemma \ref{Lem:Psi}.

 If $h_i(\cdot)$ is linear (i.e., $h_i(x)= K_ix$, $K_i>0$, since $h_i(x)>0$ for $x>0$), $\psi_i(x,t)=R_i^0x(K_i-\phi_i(t))$ and from \eqref{optimal_u_NonRep:NONlinear}, the condition for an optimal $u_i$ is:
\begin{align}\label{optimal_linear}
u_i(t)=\begin{cases}
1 & \phi_i(t)>K_i\\%h_i^1(u_{i,\max})+(S_i+I_i)h_i^2(u_{i,\max})\\%+h_i^3((S_i+I_i)u_{i,\max})-\phi_iu_{i,\max}<0\\
0 & \phi_i(t)<K_i%h_i^1(u_{i,\max})+(S_i+I_i)h_i^2(u_{i,\max})%+h_i^3((S_i+I_i)u_{i,\max})-\phi_iu_{i,\max}>0
    \end{cases}
\end{align}
 But from \eqref{final-non-rep},  $\phi_i(T)=0<K_i$ and as by Lemma \ref{Lem:Psi}, $\phi_i(t)$ is decreasing, it follows that $\phi_i(t)$ will be equal to $K_i$ at most at one time instant $t=t_i$, with $\phi_i(t)>0$ for $t<t_i$ and $\phi_i(t)<0$ for $t>t_i$\hide{ leading to the conclusion that the transition will be abrupt, as in the concave case}. This, along with \eqref{optimal_linear}, concludes the proof of the theorem for the concave case.

We now consider the case {where\hide{that}} $h_i(\cdot)$ is strictly convex.\hide{ for all $i$ and strictly convex for
some $i.$} \hide{For simplicity,
we assume that $h_i^2(\cdot)$ is a  zero function for all $i$.}
In this case, the minimization in \eqref{optimal_u_NonRep:NONlinear} may also be attained
 at an  interior point of $[0, 1]$  (besides $0$ and $1$) at which the partial
derivative of the right hand side with respect to
$x$ is zero. Hence,
\begin{align}\label{optimal_u_NonRep:NONlinear:convex}
u_i(t)=\begin{cases}
    1 &1< \eta(t)\\
\eta(t) & 0< \eta(t)\leq 1\\
0 & \eta(t)\leq 0.
    \end{cases}
\end{align}
where $\eta(t)$ is such that $
\dfrac{dh_i(x)}{dx}\bigg|_{(x=\eta(t))}=\phi_i(t)$.

Note that $\phi_i(t)$ is a continuous function\hide{ of time} due to the continuity of the states and adjoint functions. We showed that it is also a decreasing function of time (Lemma \ref{Lem:Psi}).\hide{and is equal to the composite of $h_i'(.)$ (an increasing function, due to the convexity of $h_i(\cdot)$) and $\eta(t)$).} Since $h_i(\cdot)$ is double differentiable, its first derivative is continuous, and since it is strictly convex, its derivative is a strictly increasing function of its argument. Therefore, $\eta(t)$ must be a continuous and decreasing function of time, as per the predicted structure. \hide{To conclude the proof of the theorem it only remains to prove Lemma \ref{lem:positive_NONrep}.}
\end{proof}
\subsection{Proof of Lemma \ref{lem:positive_NONrep}}\label{lemmaproof}
\begin{proof}
\hide{\Z{not readable AT ALL.}}\hide{For all $i$, at points of continuity of the control $u_i$, we have:
\hide{\footnote{The convention used throughout this paper is $f_{i}(\mathbf{I}):=\dfrac{\partial f(\mathbf{I})}{\partial I_i}$, and $L_i(\mathbf{R}):=\dfrac{\partial L(\mathbf{R})}{\partial R_i}$.}}
\begin{align}\label{costate}
% \begin{split}
{\dot{\lambda}^S_i}=&-\dfrac{\partial L(\mathbf{R})}{\partial R_i}\hide{(\mathbf{R})}\hide{- \sum_{j=1}^M \bar\beta_{ji}R_j^0h_j^2(u_j)}%-u_ih_i'_3((S_i+I_i)u_i)
- (\lambda^I_i-\lambda^S_i)\sum_{j=1}^M\beta_{ji}I_j+\lambda^S_i\sum_{j=1}^M\bar\beta_{ji}R_j^0 u_j\notag %\\% + \lambda^S_ia_i \\
\end{align}
\begin{align}
\dot{\lambda}^I_i=&-\dfrac{\partial L(\mathbf{R})}{\partial R_i}\hide{(\mathbf{R})}-\dfrac{\partial f(\mathbf{I})}{\partial I_i}\hide{(\mathbf{I})}\hide{-\sum_{j=1}^M\bar\beta_{ji}R_j^0h_j^2(u_j)}\notag
%-u_ih_i'_3((S_i+I_i)u_i)
- \sum_{j=1}^M\left((\lambda^I_j-\lambda^S_j)\beta_{ij}S_j\right)
\\& + \lambda^I_i \sum_{j=1}^M\pi_{ji}\bar\beta_{ji} R_j^0u_j
 %+ \lambda^I_ib_i
%  \end{split}
\end{align}}\hide{We evaluate the adjoint functions and their derivatives (}From \eqref{costate} and \eqref{final-non-rep}, at time $T$ we have:
\begin{align*}
\lambda^I_i|_{t=T}&=(\lambda^I_i-\lambda^S_i)|_{t=T}=0, \\
\lim_{t\uparrow T}\dot{\lambda}^I_i&=-\dfrac{\partial L(\mathbf{R})}{\partial R_i}(T)\hide{(\mathbf{R}(T))}-\dfrac{\partial f(\mathbf{I})}{\partial I_i}(T)\hide{(\mathbf{I}(T))\\&-\sum_{j=1}^M\bar\beta_{ji}R_j^0h_j^2(u_j(T))}<0\hide{, \ \ \text{and}}\\ \lim_{t\uparrow T}(\dot{\lambda}^I_i-\dot{\lambda}^S_i)&=-\dfrac{\partial f(\mathbf{I})}{\partial I_i}\hide{(\mathbf{I}}(T)<0
\end{align*}
Hence,
$\exists~\epsilon>0$ s.t. $\lambda^I_i>0$ and $(\lambda^I_i-\lambda^S_i)>0$ over $(T-\epsilon,T)$\hide{ (due to the continuity of the adjoint functions and a contradiction argument based on Property \ref{property1})}.
\hide{
We now state a property that holds for any real-valued function, and which we will use in the rest of the proof.
\begin{Property}\label{property1}
Let $\psi(t)$ be a continuous and piecewise differentiable function of t. If $\psi(t_0) = L$ and $\psi(t) < L$ ($\psi(t) > L$) for all $t\in (t_0, t_1]$. Then
$\dot{\psi}_(t_0^+)\leq0$ (respectively $\dot{\psi}_(t_0^+)\geq0$).\hide{ For $\dot{\psi}(t_0^-)$ and the interval $[t_1,t_0)$, the converse holds.}
\end{Property}
\begin{proof}
In the appendix.
\end{proof}
}

\hide{
\begin{proof}
By contradiction. We assume $\psi(t_0) = L$ and $\dot{\psi}_(t_0^+)>0$. Then exists a $\delta \in (0,t_1- t_0)$ such that  $\dot{\psi}_(t)>0$ for $t \in(t_0, t_0+\delta)$.But $\psi(t_0+\delta)= \psi(t_0)+\int_{t_0}^{t_0+\delta} \! \dot{\psi}_(x) \, \mathrm{d} x\hide{=L+\int_{t_0}^{t_0+\delta} \! \dot{\psi}_(x) \, \mathrm{d} x}>L$, which is a contradiction, and thus the property holds. The proof for the $\psi(t) > L$ is exactly as above, with the signs interchanged.
\end{proof}
}

Now suppose that, going backward in time from $t=T$, (at least) one of the inequalities is
first violated at $t=t^*$ for $i^*$, i.e., for all $i$, $\lambda^I_i(t)>0$ and  $(\lambda^I_i(t)-\lambda^S_i(t))>0$ for all $t>t^*$ and either (A)~$(\lambda^I_{i^*}(t^*)-\lambda^S_{i^*}(t^*))=0$ \hide{for an $i=i^*$}\hide{ and $(\lambda^I_{i}-\lambda^S_{i})\geq0$ for all $i$ and $\lambda^I_i\geq 0$ for all $i$} or
(B)~$\lambda^I_{i^*}(t^*)=0$ for some $i=i^*$\hide{and $\lambda^I_i\geq 0$ for all $i$ and $(\lambda^I_{i}-\lambda^S_{i})\geq0$ for all $i$}. Note that from continuity of the adjoint functions $\lambda^I_{i}(t^*)\geq0$  and $(\lambda^I_{i}(t^*)-\lambda^S_{i}(t^*))\geq0$ for all $i$.
\hide{At such a point,
 }

We investigate case~(A) first. We have:\footnote{ $g(t_0^+):=\lim_{t\downarrow t_0}g(t)$ and $g(t_0^-):=\lim_{t\uparrow t_0}g(t)$.}\textsuperscript{,}\footnote{The RHS of the equation is evaluated at $t=t^*$ due to continuity.} %let us investigate
%$(\dot{\lambda}^I_{i^*}-\dot{\lambda}^S_{i^*})|_{t^{*+}}$:
\begin{align*}
 (\dot{\lambda}^I_{i^*}-\dot{\lambda}^S_{i^*})(t^{*+})=&
 - \hide{f_i(\mathbf{I})}\dfrac{\partial f(\mathbf{I})}{\partial I_{i^*}} - \sum_{j=1}^M[(\lambda^I_j-\lambda^S_j)\beta_{i^*j}S_j] -\hide{(1- \pi_{i^*})}\lambda^I_{i^*}\sum_{j=1}^M\bar\beta_{ji^*}(1- \pi_{ji^*})R_j^0u_j.
\end{align*}
First of all, $-{\partial f(\mathbf{I})}/{\partial I_{i^*}}\hide{(\mathbf{I})}<0$.
The other two terms are non-positive, due
to the definition of $t^*$ and $\pi_{ij}\leq1$.
% We show that the other terms are also non-positive. Note that in that case, we have shown
Hence, $(\dot{\lambda}^I_{i^*}-\dot{\lambda}^S_{i^*})(t^{*+})<0$, which is in contradiction with
Property \ref{property1}\hide{ and the supposition of case~(A)} of real-valued functions, proved in \cite{khouzani2012maximum}: \hide{stated below and proved in Appendix~\ref{General Properties}} %and hence $(\lambda^I_{i}-\lambda^S_{i})>0$ for all $i$, as
\begin{Property}\label{property1}
Let $g(t)$ be a continuous and piecewise differentiable function of t. If $g(t_0) = L$ and $g(t) >L$ ($g(t) < L$) for all $t\in (t_0, t_1]$. Then
$\dot{g}(t_0^+)\geq0$ (respectively $\dot{g}(t_0^+)\leq0$).\hide{ For $\dot{g}(t_0^-)$ and the interval $[t_1,t_0)$, the converse holds.}
\end{Property}

On the other hand, for case~(B) % $\lambda^I_{i^*}=0$ for an $i^*$ and hence $\lambda^I_i\geq 0$
%and $(\lambda^I_i-\lambda^S_i)\geq 0$ for all $i$.
%But then:
we have:\footnotemark[\value{footnote}]
\begin{align*}
 \dot{\lambda}^I_{i^*}(t^{*+})&=
-\hide{L_i(\mathbf{R})}\dfrac{\partial L(\mathbf{R})}{\partial R_{i^*}} - \hide{f_i(\mathbf{I})}\dfrac{\partial f(\mathbf{I})}{\partial I_{i^*}} \hide{-\sum_{j=1}^M\bar\beta_{ji}R_j^0h_2^j(u_j)\\&\quad}-\sum_{j=1}^M[(\lambda^I_j-\lambda^S_j)\beta_{i^*j}S_j], %-(1- \pi)\lambda^I_{i^*} u_{i^*}
% +u_{i^*} + (\lambda^I_{i^*}-\lambda^S_{i^*})\sum_{j=1}^M\beta_{ji^*}I_j-\lambda^S_{i^*}u_{i^*}\\
% &= -\rho v_{i^*} - \kappa - \sum_{j=1}^M[(\lambda^I_j-\lambda^S_j)\beta_{i^*j}S_j] + \lambda^I_{i^*} v_{i^*}
% +u_{i^*} -\lambda^S_{i^*}u_{i^*}
\end{align*}
 which is negative since $-{\partial L(\mathbf{R})}/{\partial R_{i^*}}\leq0$, $-{\partial f(\mathbf{I})}/{\partial I_{i^*}}<0$,  and from the definition of $t^*$ (for the third term). This contradicts Property \ref{property1} and the claim follows.
\hide{\end{proof}}
\end{proof}
\hide{If $h_i(\cdot)$ is linear (i.e., $h_i(u_i)= K_iu_i$), $\psi_i(u_i)=R_i^0u_i(K_i-\phi_i)$, and as $\phi_i$ is a decreasing function of time (due to the proof of \ref{Lem:Psi}), $\psi_i(u_i)$ will cross zero at most at one point, leading to the conclusion that the transition will be abrupt, as in the concave case.}

\hide{It is instructive to note that the structure of the optimal action taken by a region is independent of the patching cost of its neighbours, and thus the form of their optimal controls, a somewhat counter-intuitive result.
}
\hide{We now consider the case {where\hide{that}} $h_i(\cdot)$ is strictly convex.\hide{ for all $i$ and strictly convex for
some $i.$} \hide{For simplicity,
we assume that $h_i^2(\cdot)$ is a  zero function for all $i$.}
In this case, the minimization in \eqref{optimal_u_NonRep:NONlinear} may also be attained
 at an  interior point of $(0, 1)$  (besides $0$ and $1$) at which the partial
derivative of the right hand side with respect to
$u_i$ is zero. Hence,
\begin{align}\label{optimal_u_NonRep:NONlinear:convex}
u_i(t)=\begin{cases}
    1 &1< \eta(t)\\
\eta(t) & 0<\eta(t)\leq 1\\
0 & \eta(t)\leq 0.
    \end{cases}
\end{align}
where $\eta(t)$ is such that\footnote{Throughout the paper, functions with acute mark (e.g., $h_i'(x)$) represent their derivative with respect to their \underline{argument} (e.g., $\dfrac{d}{dx}h_i(x)$). \hide{As another example, $h_i''(x)$ represents $\dfrac{d^2}{dx^2}h_i(x)$.}}
\begin{equation}\label{eq:implicit}
h_i'(\eta(t))=\hide{\phi_i(t)}
\hide{R_i^0}
\sum_{j=1}^M\bar\beta_{ij}\lambda^S_jS_j +
\hide{R_i^0}\sum_{j=1}^M\bar\beta_{ij}\pi_{ij}\lambda^I_jI_j
\end{equation}\hide{and $\phi_i$ has been defined in
\eqref{phidef}}.
In this case, the structure of optimal $u_i(t)$ for each $i$ is similar
to the concave case, except that the transition between extreme values is
continuous rather than abrupt.

 We derived the expression for $-\dot{\phi}_i$ and proved its positivity at all points of continuity of the control in\hide{~\eqref{minus_phi_dot}, which is positive according to} lemma~\ref{Lem:Psi}. (Note that in the proof of the lemma,
we did not use convexity or concavity property of the $h_i(\cdot)$ functions.) Thus, $\phi_i$ is a decreasing function of time and is equal to the composite of $h_i'(.)$ (an increasing function, due to the convexity of $h_i(\cdot)$) and $\eta(t)$). Therefore, $\eta(t)$ must be a decreasing function of time, proving this part of the theorem.}

\section{Optimal Replicative Patching}\label{sec:Optimal_rep}
\subsection{Numerical framework for computing the optimal controls}
As in the non-replicative setting, we develop a numerical framework for calculation of the optimal solutions using PMP, and then we establish the structure of the optimal controls.

For every control $\tilde{\bu}$, we define $\tau_i(\bI(0),\bS(0),\bR(0),\tilde{\bu})\in[0,T]$ as follows: If $R_i(0)>0$, and therefore $R_i(t)>0$ for all $t>0$ due to Theorem~\ref{thm:constraints}, we define $\tau_i(\bI(0),\bS(0),\bR(0),\tilde{\bu})$ to be 0. Else, $\tau_i(\bI(0),\bS(0),\bR(0),\tilde{\bu})$ is the maximum $t$ for which $R_i(t)=0$.  It follows from Theorem~\ref{thm:constraints} that $R_i(t) = 0$ for all $t \leq \tau_i(\bI(0),\bS(0),\bR(0),\tilde{\bu})$ {and all $i$ such that $R_i(0) = 0$}, and $R_i(t) > 0$ for all $\tau_i(\bI(0),\bS(0),\bR(0),\tilde{\bu}) < t \leq T$. We begin with the hypothesis that there exists at least one optimal control, say $\tilde{\bu}\in \mathcal{U}^*$, and  construct  a control $\bu$ that chooses $u_i(t):= 0$
for $t \leq \tau_i(\bI(0),\bS(0),\bR(0),\tilde{\bu})$ and $u_i(t):= \tilde{u}_i(t)$ for $t > \tau_i(\bI(0),\bS(0),\bR(0),\tilde{\bu}).$ Clearly, the states $\bS(t), \bI(t), \bR(t)$ corresponding to
$\tilde{\bu}$ also constitute the state functions for $\bu${, as the state equations only differ at $t=0$, a set of measure zero}. Thus, $\bu$ is also an optimal control, and
$\tau_i(\bI(0),\bS(0),\bR(0),\tilde{\bu}) = \tau_i(\bI(0),\bS(0),\bR(0),\bu)$ for each $i$. Henceforth, for notational convenience, we will refer to $\tau_i(\bI(0),\bS(0),\bR(0),\tilde{\bu}), \tau_i(\bI(0),\bS(0),\bR(0),\bu)$ as $\tau_i.$ Note that the definition
of this control completely specifies the values of each $u_i$ in  $[0, T]$.

Referring to the integrand of \eqref{Imm_cost_Rep} as $\xi_{rep}$\hide{as before} and the RHS of equations (\ref{Imm_Sys_Rep}a,b,c) as $\nu_i$, $\mu_i$ and $\rho_i$ the Hamiltonian becomes:
\begin{equation} \label{hamiltonian-rep}
 \ham = \ham(\mathbf{u}) :=  \xi_{rep}
%\hide{ f(\mathbf{I})-L(\mathbf{R}) + \sum_{i}^M K_{1i}u_i R_i\\
%+\sum_{i=1}^MK_{2i}R_i u_i\sum_{j=1}^M\bar\beta_{ij}(S_j+I_j)\\}
+ \sum_{i=1}^M[(\lambda^S_i \hide{\dot{S_i}}\nu_i + \lambda^I_i \hide{\dot{I_i}}\mu_i + \lambda^R_i \hide{\dot{R_i}}\rho_i),
%\hide{+\sum_{i=1}^M[(\lambda^I_i-\lambda^S_i)S_i\sum_{j=1}^{M}\beta_{ji}I_j] %-\sum_{i=1}^M \lambda^S_i a_iS_i
%+\sum_{i=1}^M(\lambda^R_i-\lambda^S_i)S_i\sum_{j=1}^M\bar\beta_{ji}R_j u_j \\%- \sum_{i=1}^{M}\lambda^I_i b_iI_i
%+\sum_{i=1}^{M}(\lambda^R_i-\lambda^I_i) I_i\sum_{j=1}^M\pi_i\bar\beta_{ji}R_j u_j}
\end{equation}
where  the \emph{adjoint} functions $\lambda^S_i, \lambda^I_i, \lambda^R_i$ are continuous functions
 that at each point of continuity of $\mathbf{u}(\cdot)$ and for all $i=1\ldots M$, satisfy
%the co-state functions satisfy the following system:
\begin{align} \label{co-state-rep1}
 \dot{\lambda}^S_i=-\frac{\partial \Ham}{\partial S_i}, \
\dot{\lambda}^I_i=-\frac{\partial \Ham}{\partial I_i}, \
\dot{\lambda}^R_i=-\frac{\partial \Ham}{\partial R_i},
\end{align}
with the final constraints:
\begin{align} \label{final-rep}
 \lambda^S_i(T)=\lambda^I_i(T)=\lambda^R_i(T)=0.
\end{align}
According to PMP, any optimal controller must satisfy:
\begin{align}\label{psi_general}
 \mathbf{u}\in \argmin_{\mathbf{v}}\ham(\mathbf{v}),
\end{align}
where the minimization is over the set of admissible controls\hide{ and we select any Hamiltonian-minimizing control as optimal}.

%Using our previous intuitive analogy, this lemma implies that infective nodes are always worse for the evolution of the system than either susceptible or healed nodes.
Using the expressions for $\xi_{rep}$ from \eqref{Imm_cost_Rep} and the expressions for  $\nu_i$, $\mu_i$ and $\rho_i$ from \eqref{Imm_Sys_Rep}, it can be shown that the vector minimization \eqref{psi_general} can be expressed as a scalar minimization
\begin{align}\label{optimal_u:replicative}
u_i (t) &\in  \argmin_{0\leq x\leq 1} \psi_i(x,t) \qquad(1\leq i\leq M);\\
%\end {equation}%_{0\leq \underline{u}_i\leq u_{i,\max}}
%where
%\begin{equation}
\label{psidef}
\psi_i(x,t):&=R_i(t)(h_i(x)-\phi_i(t)x);\\
%\end{equation}
%and
%\begin{equation}
\label{phirep}
\phi_i:=\sum_{j=1}^M&\bar\beta_{ij}(\lambda^S_j-\lambda^R_j)S_j+ \sum_{j=1}^M\pi_{ij}\bar\beta_{ij}(\lambda^I_j-\lambda^R_j)I_j\hide{\Bigg)}.
\end{align}

 Equation \eqref{optimal_u:replicative} characterizes the optimal control $u_i$ as a function of the state and adjoint functions at each instant. Plugging the optimal $u_i$ into the state and adjoint function equations (respectively \eqref{Imm_Sys_Rep} and \eqref{co-state-rep1}) will again leave us with a system of (non-linear)
differential equations  that involves only
the state
and adjoint functions (and not the control $\mathbf{u}(\cdot)$), the
initial values of the states \eqref{Imm_Sys_Rep_init}  and the final values of the adjoint functions \eqref{final-rep}.
Similar to the non-replicative case, the optimal controls may now be
obtained (via \eqref{optimal_u:replicative}) by solving the above system of differential equations.

We conclude this subsection by stating and proving some important properties of
 the adjoint functions
 (Lemma~\ref{lem:positive_replicative} below) and $\phi_i(\cdot)$ (Lemma~\ref{phirep1} subsequently), which we use later.

First, from \eqref{final-rep}, $\psi_i(0,t)=0$, hence \eqref{optimal_u:replicative} results in $\psi_i(u_{i},t)\leq0$.
\hide{\begin{align}\label{psi_condition}
\psi_i(u_{i},t)\leq0.
\end{align}}
Furthermore, from the definition of $\tau_i$, if $t\leq\tau_i$, $(h_i(u_i(t))-\phi_i(t)u_i(t))=0$, and if $t>\tau_i$, $(h_i(u_i(t))-\phi_i(t)u_i(t))=\dfrac{\psi_i(u_{i},t)}{R_i(t)}\leq0$, so for all $t$,
\begin{align}\label{psi_condition}
\alpha_i(u_i, t):=(h_i(u_i(t))-\phi_i(t)u_i(t))\leq0.
\end{align}
\begin{Lemma}\label{lem:positive_replicative}
 For all $t\in[0,T)$ and for all $i$, we have $(\lambda^I_i-\lambda^S_i)>0$ and
$(\lambda^I_i-\lambda^R_i)>0$.
\end{Lemma}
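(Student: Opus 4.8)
The plan is to mimic the backward-in-time argument used for the non-replicative case in \Sec\ref{lemmaproof}, adapting it to the replicative costate equations \eqref{co-state-rep1} and the three-component adjoint $(\lambda^S_i, \lambda^I_i, \lambda^R_i)$. First I would write out the explicit costate derivatives $\dot\lambda^S_i$, $\dot\lambda^I_i$, $\dot\lambda^R_i$ by differentiating the Hamiltonian \eqref{hamiltonian-rep}, and then form the two combinations of interest, $\dot\lambda^I_i - \dot\lambda^S_i$ and $\dot\lambda^I_i - \dot\lambda^R_i$. Using the transversality conditions \eqref{final-rep}, all three adjoints vanish at $t=T$, so both differences are zero at $T$; evaluating the derivatives as $t\uparrow T$ should show that each difference has a strictly negative derivative there (the dominant surviving term being $-\partial f(\mathbf{I})/\partial I_i < 0$, since $\partial f/\partial I_i>0$ by the monotonicity assumption on $f$, and the coupling terms vanish because the adjoints are zero at $T$). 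This establishes that both inequalities hold on some interval $(T-\epsilon, T)$.

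The core of the argument is the contradiction step. Suppose, going backward from $T$, that one of the two strict inequalities is first violated at some $t^*$ for some index $i^*$; by continuity of the adjoints, at $t^*$ the violated quantity equals $0$ while all the inequalities $(\lambda^I_j - \lambda^S_j)\geq 0$ and $(\lambda^I_j - \lambda^R_j)\geq 0$ hold for every $j$ and strictly for $t>t^*$. I would split into two cases: (A) $(\lambda^I_{i^*}-\lambda^S_{i^*})(t^*)=0$, and (B) $(\lambda^I_{i^*}-\lambda^R_{i^*})(t^*)=0$. In each case I would compute the one-sided derivative of the relevant difference at $t^{*+}$ and show it is strictly negative, which contradicts Property \ref{property1} (a function that equals $L$ at $t_0$ and stays strictly above $L$ on $(t_0,t_1]$ must have nonnegative right derivative at $t_0$). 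The strict negativity in each case should come from the term $-\partial f(\mathbf{I})/\partial I_{i^*}<0$ (guaranteed by Theorem \ref{thm:constraints}, which gives $I_{i^*}(t)>0$), with every remaining term rendered non-positive by the definition of $t^*$, the bounds $0\leq\pi_{ij}\leq1$, and the non-negativity of $\partial L/\partial R_{i^*}$, the rates $\beta$, $\bar\beta$, and the states.

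The delicate bookkeeping, and what I expect to be the main obstacle, is verifying that in each case the cross-coupling terms genuinely collapse to non-positive quantities. Unlike the non-replicative setting, the replicative dynamics \eqref{Imm_Sys_Rep} feed the control-driven recovery flux into $\dot R_i$ as well, so the $\lambda^R_i$ terms appear in all three adjoint equations and must be grouped carefully; in particular, when forming $\dot\lambda^I_{i^*}-\dot\lambda^S_{i^*}$ and $\dot\lambda^I_{i^*}-\dot\lambda^R_{i^*}$ I would need to check that the patching terms carrying $R_j u_j$ combine so that the surviving coefficients multiply either $(\lambda^I-\lambda^S)$, $(\lambda^I-\lambda^R)$, or a factor $(1-\pi)\geq 0$ — exactly the quantities assumed non-negative at $t^*$. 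I anticipate that the $\pi_{ij}$ factors are what make the infective-recovery term sign-correct: because $\pi\leq 1$, any leftover $(1-\pi)$ coefficient is non-negative and multiplies a non-negative adjoint combination, keeping the overall contribution non-positive. Once those groupings are confirmed, both cases close by contradiction and the Lemma follows for all $t\in[0,T)$.
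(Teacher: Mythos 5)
Your overall strategy is the same as the paper's: establish both inequalities on a left neighbourhood of $T$ via the transversality conditions, then run a backward-in-time first-violation argument with cases (A) and (B), closing each by showing the one-sided derivative at $t^{*+}$ is strictly negative and invoking Property~\ref{property1}. Your treatment of the coupling terms is also right: in case (A) the patching terms indeed regroup into $-(\lambda^I_{i^*}-\lambda^R_{i^*})\sum_j\bar\beta_{ji^*}(1-\pi_{ji^*})R_ju_j$, which is non-positive by the definition of $t^*$ and $\pi\leq 1$.

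There is, however, a genuine gap in your handling of case (B) and of the boundary step at $t=T$ for the difference $\lambda^I_i-\lambda^R_i$: the running patching cost $\sum_i R_i h_i(u_i)$ in the replicative Hamiltonian \eqref{hamiltonian-rep} depends on $R_i$, so $\dot\lambda^R_i$ picks up a term $-h_i(u_i)$, and hence $\dot\lambda^I_i-\dot\lambda^R_i$ contains $+h_i(u_i)\geq 0$ --- a term with the \emph{wrong} sign that is neither a cross-coupling term killed by the definition of $t^*$ nor controlled by any $(1-\pi)$ factor. The paper neutralizes it in two places. First, at $t=T$ it shows $\phi_i(T)=0$, so $\psi_i(x,T)=R_i(T)h_i(x)$ is minimized only at $x=0$, whence $u_i(T)=0$ and $h_i(u_i(T))=0$; without this, $\lim_{t\uparrow T}(\dot\lambda^I_i-\dot\lambda^R_i)=-\partial f/\partial I_i-\partial L/\partial R_i+h_i(u_i(T))$ need not be negative (your claim that "the coupling terms vanish because the adjoints are zero at $T$" does not cover this term, which is a cost term, not an adjoint term). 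Second, at a general $t^*$ in case (B), the paper combines $h_{i^*}(u_{i^*})$ with the $u_{i^*}\phi_{i^*}$ contribution into $\alpha_{i^*}(u_{i^*},t)=h_{i^*}(u_{i^*})-\phi_{i^*}u_{i^*}$ and uses the PMP optimality condition $\psi_i(u_i,t)\leq\psi_i(0,t)=0$ (together with the definition of $\tau_i$ when $R_i(t)=0$) to conclude $\alpha_{i^*}\leq 0$, as recorded in \eqref{psi_condition}. This appeal to the Hamiltonian-minimizing property of the optimal control is an essential extra ingredient specific to the replicative setting, and your plan as written does not supply it; without it case (B) does not close.
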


Using our previous intuitive analogy, Lemma \ref{lem:positive_replicative} implies that infective nodes are always worse for the evolution of the system than either susceptible or healed nodes, and thus the marginal price of infectives is greater than that of susceptible and healed nodes at all times before $T$. As before, we defer the proof of this lemma (to \Sec\ref{lambdas}) to avoid breaking up the flow of the narrative. We now state and prove Lemma \ref{phirep1}.

\begin{Lemma}\label{phirep1}
For each $i$, $\phi_i(t)$ is a decreasing function of $t$, and $\dot{\phi}_i(t^+)<0$ and $\dot{\phi}_i(t^-)<0$ for all $t$.
\end{Lemma}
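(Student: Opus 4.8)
The plan is to follow the template of the non-replicative argument in Lemma~\ref{Lem:Psi}, but with one essential new ingredient. First I would make the adjoint dynamics explicit: evaluating $-\partial\ham/\partial S_i$, $-\partial\ham/\partial I_i$ and $-\partial\ham/\partial R_i$ from \eqref{hamiltonian-rep} and the replicative state equations \eqref{Imm_Sys_Rep}, I obtain, at every point of continuity of $\bu$, closed forms for $\dot\lambda^S_i$, $\dot\lambda^I_i$ and $\dot\lambda^R_i$ in which the control enters only through the factors $\sum_k\bar\beta_{ki}R_ku_k$ and $\sum_k\pi_{ki}\bar\beta_{ki}R_ku_k$ (and, in $\dot\lambda^R_i$, the term $h_i(u_i)$). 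I would deliberately write everything in terms of the differences $\lambda^I_j-\lambda^S_j$ and $\lambda^I_j-\lambda^R_j$, since these are the only quantities whose sign Lemma~\ref{lem:positive_replicative} controls.

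Next I would differentiate $\phi_i$ as given in \eqref{phirep} by the product rule, substituting the adjoint derivatives just computed together with $\dot S_j$, $\dot I_j$ from \eqref{Imm_Sys_Rep}, and split $\dot\phi_i$ into a control-free part and a control-dependent part. The decisive observation---and the point where the replicative argument departs from the non-replicative one---is that the control-dependent terms do \emph{not} cancel (as they did in Lemma~\ref{Lem:Psi}); instead they reassemble, using the identity $\sum_k\bar\beta_{jk}(\lambda^S_k-\lambda^R_k)S_k+\sum_k\pi_{jk}\bar\beta_{jk}(\lambda^I_k-\lambda^R_k)I_k=\phi_j$, into the single expression $\sum_{j}\bar\beta_{ij}\bigl(S_j+\pi_{ij}I_j\bigr)\bigl(h_j(u_j)-\phi_j u_j\bigr)$. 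By \eqref{psi_condition} each bracket $h_j(u_j)-\phi_j u_j=\alpha_j(u_j,t)$ is non-positive, and the remaining factors $\bar\beta_{ij},S_j,\pi_{ij},I_j$ are non-negative, so this entire control-dependent contribution is $\le 0$. This is precisely why the inequality $\alpha_i\le 0$ was isolated in \eqref{psi_condition} immediately before the lemma.

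For the control-free part I would collect terms and use $\pi_{ij}\le 1$ to fuse the two infection-transmission contributions into $-\bar\beta_{ij}(\lambda^I_j-\lambda^R_j)(1-\pi_{ij})S_j\sum_k\beta_{kj}I_k$, which is non-positive; the surviving terms are $-\bar\beta_{ij}S_j\,\partial L/\partial R_j$, $-\bar\beta_{ij}\pi_{ij}I_j(\partial L/\partial R_j+\partial f/\partial I_j)$ and $-\bar\beta_{ij}\pi_{ij}I_j\sum_k(\lambda^I_k-\lambda^S_k)\beta_{jk}S_k$. Each is non-positive by the monotonicity of $L$ and $f$, Theorem~\ref{thm:constraints} (positivity of the states, and $I_k(t)>0$ for $t>0$), Lemma~\ref{lem:positive_replicative} (positivity of the two adjoint differences) and the standing assumptions $0\le\pi_{ij}\le 1$, $\beta_{ij},\bar\beta_{ij}\ge 0$. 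Strict negativity for $t>0$ follows because $f$ is strictly increasing and $I_j>0$, so for a neighbouring type $j$ (which exists by the connectivity and spread assumptions for any relevant $i$) at least one term is strictly negative; the case split $\pi_{ij}=1$ versus $\pi_{ij}<1$ guarantees that either the $\partial f/\partial I_j$ term or the fused infection term survives.

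Finally, since the control-free part is continuous in $t$ and strictly negative on $(0,T)$, while the control-dependent part is $\le 0$ when evaluated with either one-sided limit $u_j(t^+)$ or $u_j(t^-)$ of the piecewise-continuous optimal control, both one-sided derivatives satisfy $\dot\phi_i(t^+)<0$ and $\dot\phi_i(t^-)<0$ for every $t$, and the monotone decrease of $\phi_i$ is immediate. The main obstacle is the bookkeeping in the second and third steps: correctly recognising the control-dependent remainder as $(S_j+\pi_{ij}I_j)\alpha_j$ via the $\phi_j$ identity, and carrying out the sign analysis using \emph{only} the two available adjoint differences (we have no sign for $\lambda^S_j-\lambda^R_j$ in isolation), so the terms must be regrouped so that every residual adjoint factor appears as $\lambda^I-\lambda^S$ or $\lambda^I-\lambda^R$.
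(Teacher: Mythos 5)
Your proposal is correct and follows essentially the same route as the paper: differentiate $\phi_i$ via the product rule, substitute the adjoint and state dynamics, observe that the surviving control-dependent residue is $\sum_j\bar\beta_{ij}(S_j+\pi_{ij}I_j)\,\alpha_j(u_j,t)\le 0$ by \eqref{psi_condition}, and sign the control-free remainder using Theorem~\ref{thm:constraints}, Lemma~\ref{lem:positive_replicative}, $0\le\pi_{ij}\le1$ and the connectivity assumptions, finishing with the same one-sided-derivative argument. (Your indexing $\alpha_j(u_j,t)$ of the residue is in fact the correct one; the paper's displayed expression writes $\alpha_i(u_i,t)$, which appears to be a typo.)
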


\begin{proof}
$\phi_i(t)$  is continuous everywhere (due to the continuity of the states and adjoint functions) and differentiable whenever $\mathbf{u}(\cdot)$ is continuous.
At any $t$ at which $\mathbf{u}(\cdot)$ is continuous, we have:\hide{\footnote{The RHS of the equation is evaluated at $t=t^+$.}}
\begin{align*}
\dot{\phi}_i(t)=
\sum_{j=1}^M
\bar\beta_{ij}[&(\dot{\lambda}^S_j-\dot{\lambda}^R_j)S_j+
(\lambda^S_j-\lambda^R_j)\dot{S}_j
+\pi_{ij}(\dot{\lambda}^I_j-\dot{\lambda}^R_j)I_j+\pi_{ij}(\lambda^I_j-\lambda^R_j)\dot{I}_j].
\end{align*}

From \eqref{hamiltonian-rep} and the adjoint equations \eqref{co-state-rep1}, at points of continuity of the control, we have:
\small
\begin{align}\label{costaterep}
 \dot{\lambda}^S_i%=-\frac{\partial \Ham}{\partial S}
=&\hide{-\sum_{j=1}^M K_{2j}\bar\beta_{ji}R_ju_j}-(\lambda^I_i-\lambda^S_i)\sum_{j=1}^M\beta_{ji}I_j-(\lambda^R_i-\lambda^S_i)\sum_{j=1}^M\bar\beta_{ji}R_ju_j,\notag\\
%-u_iR_i\sum_{j=1}^M\bar\beta_{ij}\lambda^S_jS_j+u_i\\
\dot{\lambda}^I_i=&-\hide{f_i(\mathbf{I})}\dfrac{\partial f(\mathbf{I})}{\partial I_i}%\hide{-\sum_{j=1}^M K_{2j}\bar\beta_{ji}R_ju_j}
-\sum_{j=1}^M(\lambda^I_j-\lambda^S_j)\beta_{ij}S_j-(\lambda^R_i-\lambda^I_i)\sum_{j=1}^M\pi_{ji}\bar\beta_{ji}R_ju_j,\notag\\
%-u_i\sum_{j=1}^M\pi\bar\beta_{ij}\lambda^I_jI_j+u_i\\
\dot{\lambda}^R_i%=-\frac{\partial \Ham}{\partial R}
=&\hide{L_i(\mathbf{R})}\dfrac{\partial L(\mathbf{R})}{\partial R_i}\hide{-K_{2i}u_i\sum_{j=1}^M\bar\beta_{ij}(S_j+I_j)}
+u_i\sum_{j=1}^M\bar\beta_{ij}(\lambda^S_j-\lambda^R_j)S_j
+u_i\sum_{j=1}^M\pi_{ij}\bar\beta_{ij}(\lambda^I_j-\lambda^R_j)I_j-h_i(u_i)
=\dfrac{\partial L(\mathbf{R})}{\partial R_i}-\alpha_i(u_i,t).
% \dot{\lambda}^L_i&=-\frac{\partial \Ham}{\partial L}=-\rho v_i-v_i\sum_{j=1}^M\tilde\beta_{ij}(\lambda^L_j-\lambda^I_j)I_j
\end{align}
\normalsize
Therefore, after some regrouping and cancellation of terms, at any $t$, we have
\small
\begin{align*}
%\hide{\dot{\psi_i}(t_0^+)|_{\psi_i(t_0)\geq0}=&\dot{R}_i\frac{\psi_i(t_0)}{R_i}\\}
-\dot{\phi}_i({t}^+)=\sum_{j=1}^M\bar\beta_{ij}[(1-\pi_{ij})\sum_{k=1}^M(\lambda^I_j-\lambda^R_j)\beta_{kj}I_kS_j
+\pi_{ij} \frac{\partial f(\mathbf{I})}{\partial I_j} I_j+(S_j+\pi_{ij} I_j)(\frac{\partial L(\mathbf{R})}{\partial R_j}-\alpha_i(u_i, t))+\pi_{ij} I_j\sum_{k=1}^M(\lambda^I_k-\lambda^S_k)\beta_{jk}S_k
].
\end{align*}
\normalsize
 Now, since $0\leq \pi_{ij}\leq 1$, the assumptions on $\bar{\beta}_{ij}$, $\beta_{ki}$ and $\beta_{il}$, Theorem \ref{thm:constraints}, and Lemma \ref{lem:positive_replicative} all together imply that the sum of the first and last terms of the RHS will be positive. The second and third terms will be non-negative due to the definitions of $f(\cdot)$ and $L(\cdot)$ and \eqref{psi_condition}. So $\dot{\phi}_i({t}^+)<0$ for all ${t}$. The proof for $\dot{\phi}_i({t}^-)<0$ is exactly as above. In a very similar fashion, it can be proved that $\dot{\phi}_i(t)<0$ at all points of continuity of $\mathbf{u}(\cdot)$, which coupled with the continuity of $\phi_i(t)$ shows that it is a decreasing function of time\hide{, concluding the proof of this lemma}.
\hide{, which along with the continuity of $\phi_i$ (notice the control is constant here) can be used to show that $\gamma_i$ will at most be equal to zero at one point, and henceforth will be an increasing function of time (by a simple contradiction argument on Property \ref{property1}).}
\hide{(bearing in mind that $L_j(R)>0$ for all regions, including at least one neighbour whose $S_j>0$.)}
\hide{The proof then completes following from the same arguments for the proof of theorem~\ref{Thm:NonRep:Concave}.}
\end{proof}
\hide{Let $\varphi_i:=K_{1i}R_i+K_{2i}R_i\sum_{j=1}^M\bar\beta_{ij}(S_j+I_j)+\phi_i$.}

\hide{\begin{align*}
u_i(t)=\begin{cases}
1 & \psi_i(t)<0,\\
0 & \psi_i(t)>0\tag{Rep. Cases}.
    \end{cases}
\end{align*}
}
\hide{Thus, since $\psi_i(0)=0$ (from \ref{final-rep}), $\psi(u_{optimal})\leq0$ (due to \eqref{optimal_u:replicative}). }\hide{ Equation \eqref{optimal_u:replicative} characterizes the optimal control $u_i$ as a function of the state and adjoint functions at each instant. Plugging the optimal $u_i$ into the state and adjoint function equations (respectively \eqref{Imm_Sys_Rep} and \eqref{co-state-rep1}) will again leave us with a system of (non-linear)
differential equations  that involve only
the state
and adjoint functions (and not the control $\mathbf{u}(\cdot)$), the
initial values of the states \eqref{Imm_Sys_Rep_init}  and the final values of the adjoint functions \eqref{final-rep}.
Similar to the non-replicative case, the optimal controls may now be
obtained (via \eqref{optimal_u:replicative}) by solving the above system of differential equations.}

%
% Combining \eqref{Imm_Sys_NonRep}, \eqref{co-state-non-rep}, \eqref{optimal_u_NonRep:NONlinear}, we obtain a system of (non-linear)
% differential equations that involve only
% the state and adjoint functions (and not the control $\mathbf{u}(\cdot)$), and the
% initial values of the states (eq.\eqref{Imm_Sys_NonRep_init}) and the final values of the adjoint functions (eq.\eqref{final-non-rep}) are known.
% Numerical methods for solving \emph{boundary value} nonlinear differential equation problems
%  may now be used   to solve for the state and adjoint functions corresponding to
%  the optimal control, which will provide the optimal controls using  \eqref{optimal_u_NonRep:NONlinear}.

\hide{Note that for all $i$, $\phi_i(T)=0$.
Following from the continuity of adjoint functions,
% and
% the fact that all $h_i^1$ and $h_i^2$ and $h_i^3$ are increasing functions with respect
% to each $u_i$,
we  have $u_i(t)=0$ over an interval of nonzero length ending at $t=T$ for all $i$.}

% The time derivative of the expression $h_i^1(R_iu_{i,\max})+R_i(S_i+I_i)h_i^2(u_{i,\max})
% +h_i^3(R_i(S_i+I_i)u_{i,\max})-\phi_iu_{i,\max}$
% is as follows:

\subsection{Structure of optimal replicative dispatch}
\hide{The optimal controls for replicative patching exhibit similar structure to that in the non-replicative
setting - in fact }
\hide{\emph{We now prove that the structures indicated in Theorem \ref{Thm:NonRep:Concave} hold for replicative dispatch.}
\newline\mbox{}}
\begin{Theorem}\label{Thm:Rep:Concave}
If an optimal control exists, for types $i$ such that $R_i(t)>0$ for some $t$: if $h_i(\cdot)$ is concave for type $i$, the optimal
control for type $i$ has the following structure:  $u_{i}(t)=1$ for $0<t<t_{i}$, and $u_{i}(t)=0$  for $t_{i}<t\leq T$,
where $t_i \in [0, T).$  If $h_i(\cdot)$ is strictly convex,
the optimal control for type $i$, $u_i(t)$ is continuous and has the following structure:  $u_{i}(t)=1$ for $0<t<t^1_{i}$,
$u_{i}(t)=0$ for $t^2_{i} < t\leq T$, and $u_i(t)$ strictly decreases in the interval  $[t^1_i, t^2_{i}]$,
where $0 \leq t_i^1 < t_i^2 \leq T.$
\end{Theorem}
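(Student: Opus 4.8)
The plan is to follow the template of the proof of Theorem~\ref{Thm:NonRep:Concave}, exploiting the fact that Pontryagin's Maximum Principle collapses the vector minimization~\eqref{psi_general} into the scalar problem~\eqref{optimal_u:replicative}, namely $u_i(t)\in\argmin_{0\le x\le 1}\psi_i(x,t)$ with $\psi_i(x,t)=R_i(t)\bigl(h_i(x)-\phi_i(t)x\bigr)$. The two ingredients that drive the argument are already in hand: Lemma~\ref{phirep1}, which gives that each $\phi_i(t)$ is strictly decreasing, and the transversality conditions~\eqref{final-rep}, which upon substitution into~\eqref{phirep} yield $\phi_i(T)=0$. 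The one genuinely new feature relative to the non-replicative setting is that the multiplier $R_i(t)$ in $\psi_i$ is now time-varying and may vanish on an initial segment; I would therefore split the analysis at the time $\tau_i$ defined at the start of \Sec\ref{sec:Optimal_rep}, treating $(\tau_i,T]$, where $R_i(t)>0$ by Theorem~\ref{thm:constraints}, as the substantive region.

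For concave $h_i(\cdot)$: on $(\tau_i,T]$ we have $R_i(t)>0$, so $\argmin_{0\le x\le1}\psi_i(x,t)=\argmin_{0\le x\le1}\bigl(h_i(x)-\phi_i(t)x\bigr)$, the minimand being concave in $x$ and hence minimized at an endpoint. Since $\psi_i(0,t)=0$, the optimal choice is governed by the sign of $\psi_i(1,t)=R_i(t)\bigl(h_i(1)-\phi_i(t)\bigr)$, i.e., by the sign of $h_i(1)-\phi_i(t)$. As $\phi_i$ is strictly decreasing with $\phi_i(T)=0<h_i(1)$, the quantity $h_i(1)-\phi_i(t)$ is strictly increasing and positive at $T$, so it changes sign at most once, at a single threshold $t_i$; thus $u_i=1$ before $t_i$ and $u_i=0$ after, exactly as claimed. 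The linear sub-case $h_i(x)=K_ix$ is handled identically by tracking the sign of $K_i-\phi_i(t)$, reproducing the corresponding step in the proof of Theorem~\ref{Thm:NonRep:Concave}.

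For strictly convex $h_i(\cdot)$: on $(\tau_i,T]$ the minimizer of $h_i(x)-\phi_i(t)x$ may be interior, determined by $h_i'(\eta(t))=\phi_i(t)$, giving $u_i(t)=1$ if $\eta(t)>1$, $u_i(t)=\eta(t)$ if $0<\eta(t)\le1$, and $u_i(t)=0$ if $\eta(t)\le0$. Because $h_i$ is twice differentiable and strictly convex, $h_i'$ is continuous and strictly increasing, hence invertible with continuous inverse; therefore $\eta(t)=(h_i')^{-1}\bigl(\phi_i(t)\bigr)$ inherits continuity and strict monotonicity from $\phi_i$, so $\eta(\cdot)$ is continuous and strictly decreasing. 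Clipping to $[0,1]$ then yields a continuous, non-increasing control that equals $1$ up to some $t_i^1$, strictly decreases on $[t_i^1,t_i^2]$, and equals $0$ thereafter, which is the asserted structure.

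The main obstacle, and the only real departure from Theorem~\ref{Thm:NonRep:Concave}, is the treatment of the initial segment $[0,\tau_i]$ for types with $R_i(0)=0$. There, Theorem~\ref{thm:constraints} gives $R_i(t)=0$, so $\psi_i(x,t)\equiv0$ and every admissible $x$ is a minimizer: the control is dynamically irrelevant (this is precisely why the construction at the start of \Sec\ref{sec:Optimal_rep} is free to fix $u_i\equiv0$ there, and why the theorem is phrased for types with $R_i(t)>0$ for some $t$, equivalently $\tau_i<T$). I would close the argument by observing that, since any value is optimal on this null segment, one may select the representative optimal control that extends the threshold pattern established on $(\tau_i,T]$ across $[0,\tau_i]$, so that the global control has exactly the single-threshold (concave) or continuous monotone (convex) form stated, with the relevant transition time $t_i$ (respectively $t_i^1,t_i^2$) lying in $[0,T)$. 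The delicate point is verifying that this extension is consistent with monotonicity, but it is immediate: $\phi_i$ is decreasing on all of $[0,T]$ by Lemma~\ref{phirep1}, so the sign $h_i(1)-\phi_i(t)$ (concave case) and the clipped level $\eta(t)$ (convex case) are already monotone before $\tau_i$, and no spurious transition can occur strictly inside $[0,\tau_i]$.
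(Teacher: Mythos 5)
Your argument is correct, but it reaches the conclusion by a genuinely different route than the paper at the one point where the replicative case departs from the non-replicative one. On the substantive interval $(\tau_i,T]$ you streamline the paper's concave-case step: since $R_i(t)>0$ there by Theorem~\ref{thm:constraints}, you divide it out of $\psi_i(1,t)$ and read the single switch directly off the monotone sign of $h_i(1)-\phi_i(t)$, whereas the paper keeps the factor inside $\gamma_i(t)=R_i(t)\bigl(h_i(1)-\phi_i(t)\bigr)$, computes $\dot{\gamma_i}(t^{\pm})=-R_i(t)\dot{\phi}_i(t^{\pm})$ at zeros of $\gamma_i$, and excludes a second zero via Property~\ref{property2}; both hinge on Lemma~\ref{phirep1} and $\phi_i(T)=0$ and are equivalent. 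The real divergence is on $[0,\tau_i]$. The paper proves Lemma~\ref{lem:tau}, a separate and nontrivial fact that $\tau_i\in\{0,T\}$ --- for every type covered by the theorem the dead zone where $R_i=0$ is degenerate --- so the threshold structure is forced on all of $(0,T]$ and no repair of the control is needed. You instead allow $\tau_i$ to be interior and argue that on $[0,\tau_i]$ every admissible value of $u_i$ is optimal (each occurrence of $u_i$ in the dynamics~\eqref{Imm_Sys_Rep} is multiplied by $R_i=0$ and the cost term $R_ih_i(u_i)$ vanishes, so by the uniqueness in Theorem~\ref{thm:constraints} the trajectory and cost are unchanged), then choose the representative that continues the pattern; since Lemma~\ref{phirep1} makes $\phi_i$ decreasing on all of $[0,T]$, no spurious switch is introduced --- and indeed you could even skip the monotonicity check by simply setting $u_i$ constant on $[0,\tau_i]$ equal to its value just after $\tau_i$. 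This is sound and shorter, but it buys a slightly weaker conclusion: you exhibit \emph{an} optimal control with the stated structure, whereas Lemma~\ref{lem:tau} shows $R_i$ becomes positive immediately, so the structure is essentially forced on every optimal control on $(0,T]$. Given the paper's explicit convention that the control is irrelevant wherever $R_i=0$, the existence reading is the intended one, so your proof is complete as a proof of the theorem, though it does not recover the standalone content of Lemma~\ref{lem:tau}.
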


Notice that for $i$ such that $R_i(t)=0$ for all $t$, the control $u_i(t)$ is irrelevant and can take any arbitrary value. We first prove the theorem for $t \in [\tau_i,T]$, and then we show that $\tau_i \in \{0,T\}$, completing our proof.
\hide{
\begin{Theorem}\label{thm:rep}
 If for a region $i$, $h_i(\cdot)$ is concave\hide{ and that region has some recovered nodes at some point during $[0,T]$}, then there exists a $t_{i}\in[0,T]$, such that an optimal $u_{i}(t)$ can be expressed as $u_{i}(t)=1$ for $0<t<t_{i}$, and
$u_{i}(t)=0$ for $t_{i}\leq t\leq T$. %Likewise for optimal $v_{i}(t)$.
If $h_i(\cdot)$ is strictly convex then the optimal $u_i$ is continuous and there exist $t^1_{i}$ and $t^2_{i}$, $0\leq t^1_{i}, t^2_{i}\leq T$,
such that  $u_{i}(t)=1$ for $0<t<t^1_{i}$, and $u_i(t)$ is\hide{ continuous and
}decreasing for $t^1_{i}\leq t\leq t^2_{i}$, and
$u_{i}(t)=0$ for $t^2_{i}\leq t\leq T$. %Likewise for optimal $v_{i}(t)$.
\end{Theorem}
}

%\begin{proof}
{\em Proof:}
First consider an $i$ such that $h_i(\cdot)$ is concave and non-linear.\hide{From a similar argument to the one made in the replicative case, $\psi_i(x,t)$ is the difference between a strictly concave and a linear function of $x$, so it} Note that hence $\psi_i(x,t)$ is a non-linear concave function of $x$. Thus, the minimum can only occur at extremal values of $x$, i.e., $x=0$ and $x=1$. Now $\psi_i(0,t)=0$ at all times $t$, so to obtain the structure of the control, we need to examine $\psi_i(1,t)$ at each $t>\tau_i$.
Let\hide{For simplicity, we define} $\gamma_i(t):=\psi_i(1,t)=R_i(t)(h_i(1)-\phi_i(t))$ be a function of time $t$.\hide{ Note that $\gamma_i(t)$ is a continuous function of its variable, due to the continuity of the adjoint functions.} From \eqref{optimal_u:replicative}, the optimal $u_i$ satisfies:
\begin{align}\label{Rep. Cases}
u_i(t)=\begin{cases}
1 & \gamma_i(t)<0,\\
0 & \gamma_i(t)>0.
    \end{cases}
\end{align}
We now show that $\gamma_i(t)>0$ for an interval $(t_i,T]$ for some $t_i$, and $\gamma_i(t)<0$ for $[\tau_i,t_i)$ if $t_i>\tau_i$. From \eqref{final-rep} and \eqref{phirep}, $\gamma_i(T)=h_i(1)R_i(T)>0$. Since $\gamma_i(t)$ is a continuous function of its variable (due to the continuity of the states and adjoint functions), it will be positive for a non-zero interval leading up to $t=T$.
If $\gamma_i(t)>0$ for all $t \in [\tau_i,T]$, the theorem follows. Otherwise, from continuity, there must exist a $t=t_i>\tau_i$ such that $\gamma_i(t_i)=0$. We show that for $t>t_i$, $\gamma_i(t)>0$, from which it follows that  $\gamma_i(t)<0$ for $t<t_i$ (by a contradiction argument).
%\hide{We now prove that once $\gamma_i(t)$ becomes positive, it will remain positive till the terminal time, and that it will at most be equal to zero at a point. This specifies the optimal control completely}
 The theorem will then follow from \eqref{Rep. Cases}.

Towards establishing the above, we show that $\dot{\gamma_i}(t^+)>0$ and  $\dot{\gamma_i}(t^-)>0$ for any $t$ such that $\gamma_i(t)=0$. Hence, there will exist an interval $(t_i,t_i+\epsilon)$ over which $\gamma_i(t)>0$. If $t_i+\epsilon\geq T$, then the claim holds, otherwise there exists a $t={t_i}^{'}>t_i$ such that $\gamma_i({t_i}^{'})=0$ and $\gamma_i(t)\neq0$ for ${t_i}<t<{t_i^{'}}$ (from the continuity of $\gamma_i(t)$). So $\dot{\gamma_i}({t_i}^{'-})>0$, which contradicts a property of real-valued functions (proved in \cite{khouzani2012maximum})\hide{stated below and proved in Appendix~\ref{General Properties}}, establishing the claim:
\begin{Property}\label{property2}
If $g(x)$ is a continuous and piecewise differentiable function over $[a,b]$\hide{that has a derivative as $x\downarrow a$ and $x \uparrow b$} such that $g(a)=g(b)$ while $g(x)\neq g(a)$ for all $x$ in $(a,b)$\hide{ (i.e., $a$ and $b$ are two consecutive zeros of the function)}, $\frac{dg}{dx}(a^+)$ and $\frac{dg}{dx}(b^-)$ cannot be positive simultaneously.
\end{Property}

We now show that $\dot{\gamma_i}(t^+)>0$ and  $\dot{\gamma_i}(t^-)>0$ for any $t>\tau_i$ such that $\gamma_i(t)=0$.
Due to the continuity of $\gamma_i(t)$ and the states, and the finite number of points of discontinuity of the controls, for any $t>\tau_i$ we have:
\begin{align}\label{gamma3}
\dot{\gamma_i}(t^+)=(\dot{R}_i(t^+)\frac{\gamma_i(t)}{R_i(t)}-R_i(t)\dot{\phi}_i(t^+))\\
\label{gamma4}
\dot{\gamma_i}(t^-)=(\dot{R}_i(t^-)\frac{\gamma_i(t)}{R_i(t)}-R_i(t)\dot{\phi}_i(t^-)).
\end{align}
If $\gamma_i(t)=0$, then $\dot{\gamma_i}({t}^+)= - R_i(t)\dot{\phi}_i({t}^+)$ and $\dot{\gamma_i}({t}^-)= -R_i(t)\dot{\phi}_i({t}^-)$, which are both positive from Lemma \ref{phirep1} and Theorem \ref{thm:constraints}, and thus the theorem follows.
%\begin{Lem.}\label{lem:crossing}
%If $g(x)$ is a continuous function over $[a,b]$ that has a derivative as $x\downarrow a$ and $x \uparrow b$, and $g(a)=g(b)=0$ while $g(x)\neq 0$ for all $x$ in $(a,b)$ (i.e. $a$ and $b$ are two consecutive zeros of the function), $\frac{dg}{dx}(a^+)$ and $\frac{dg}{dx}(b^-)$ cannot be positive simultaneously.
%\end{Lem.}
%\begin{proof}
%If  $\frac{dg}{dx}(a^+)$, there exists an $\epsilon>0$ such that $g(x)<0$ for all $x \in (a,a+\epsilon)$ and $\frac{dg}{dx}(b^-0)>0$ is equivalent to there being a $\alpha>0$ such that  $g(x)>0$ for all $x \in (b-\alpha,b)$. (Two applications of Property \ref{property1}) Now $g(a+\frac{\epsilon}{2})<0$ and $g(b-\frac{\alpha}{2})>0$, thus the intermediate value theorem states that there must be a $y \in (a+\frac{\epsilon}{2},b-\frac{\alpha}{2})$ such that $g(y)=0$, which is in contradiction with the assumption that $g(x)\neq0$ for $x \in (a,b)$. The lemma follows.
%\end{proof}
\hide{If we show that $\dot{\phi}_i(t^+)<0$ and $\dot{\phi}_i(t^-)<0$ for all $t$ (as we do in lemma \ref{phirep1}),\hide{ then $\dot{\gamma}_ i(t^+)>0$ and $\dot{\gamma}_ i(t^-)>0$ for all $t$ which satisfy $\gamma_ i(t)\geq0$, especially those which satisfy $\gamma_ i(t)=0$} then the theorem will follow from Theorem \ref{thm:constraints}.}\hide{ by a similar argument to the non-replicative case, $\gamma_i(t)$ will be an increasing function of time from that point onward and will, therefore, remain positive, which along with \eqref{Rep. Cases} leads to the desired structure of the optimal control. \hide{Specifically, a contradiction argument based on Property \ref{property1} shows that  $\gamma_ i(t)$ can be zero at most at one point.\hide{, and for all: (we make the time dependency of $\gamma_i(t)$ implicit)}}}

The proofs for linear and strictly convex $h_i(\cdot)$'s are virtually identical to the corresponding parts of the proof of Theorem~\ref{Thm:NonRep:Concave} and are omitted for brevity; the only difference is that in the linear case
we need to replace  $R_i^0$ with $R_i(t)$. The following lemma, proved in \S\ref{taus}, completes the proof of the theorem.

\begin{Lemma}\label{lem:tau}
For all $0\leq i\leq B,$  $\tau_i \in \{0,T\}$. \qquad\qquad\qquad\QED
\end{Lemma}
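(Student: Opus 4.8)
The plan is to argue by contradiction, ruling out the only non-trivial possibility $\tau_i\in(0,T)$; for types with $R_i(0)>0$ we have $\tau_i=0$ by definition, so the content concerns types with $R_i(0)=0$. Three facts will be the workhorses: (i) every summand of $\dot R_i$ in \eqref{Imm_Sys_Rep} is a product of non-negative factors, so each $R_i$ is non-decreasing and, by construction, $u_i\equiv0$ on $[0,\tau_i]$; (ii) by the portion of Theorem~\ref{Thm:Rep:Concave} already proved on $[\tau_j,T]$, each $u_j$ is non-increasing there; and (iii) by Theorem~\ref{thm:constraints}, $I_i>0$ on $(0,T]$ while $S_i$ is either identically zero or strictly positive on $(0,T]$. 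Suppose $\mathcal{T}:=\{i:\tau_i\in(0,T)\}\neq\varnothing$, set $\tau^*:=\min_{i\in\mathcal{T}}\tau_i>0$, and let $\mathcal{C}:=\{i:\tau_i=\tau^*\}$ be the class of types that first acquire recovered nodes exactly at $\tau^*$. The strategy is to show that, just after $\tau^*$, the recovery of each type in $\mathcal{C}$ is fed only by other types in $\mathcal{C}$, all of which start from $R_j(\tau^*)=0$; a Gr\"onwall estimate then forces these fractions to stay at zero, contradicting $\tau_i=\tau^*$.

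The first step is to isolate the dynamics on $\mathcal{C}$. Fix $i\in\mathcal{C}$; since $R_i\equiv0$ on $[0,\tau^*]$ we have $\dot R_i=0$ there, so each term $\bar\beta_{ji}R_ju_j(S_i+\pi_{ji}I_i)$ vanishes on $(0,\tau^*)$. I then show every $j\notin\mathcal{C}$ (with $\bar\beta_{ji}>0$) contributes nothing to $\dot R_i$ near $\tau^{*+}$. If $\tau_j>\tau^*$ (including $\tau_j=T$), then $R_j=0$ on a right-neighbourhood of $\tau^*$ and the term vanishes there. Otherwise $\tau_j<\tau^*$, and minimality of $\tau^*$ over $\mathcal{T}$ forces $\tau_j=0$, so $R_j>0$ on $(0,T]$. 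By fact (iii), $S_i+\pi_{ji}I_i$ is either identically zero on $(0,T]$ (only when $\pi_{ji}=0$ and $S_i(0)=0$) or strictly positive throughout; in the former case the term vanishes identically, while in the latter the vanishing on $(0,\tau^*)$ with $R_j>0$ forces $u_j=0$ on $(0,\tau^*)$, hence $u_j\equiv0$ on $(0,T]$ by the monotonicity in fact (ii) on $[\tau_j,T]=[0,T]$. Either way $j$ contributes nothing near $\tau^{*+}$, so there is $\delta>0$ with $\dot R_i(t)=\sum_{j\in\mathcal{C}}\bar\beta_{ji}R_j(t)u_j(t)(S_i(t)+\pi_{ji}I_i(t))$ for all $t\in(\tau^*,\tau^*+\delta)$ and all $i\in\mathcal{C}$.

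The second step closes the argument by Gr\"onwall. Using $u_j\le1$ and $S_i+\pi_{ji}I_i\le2$, and writing $M(t):=\max_{j\in\mathcal{C}}R_j(t)$, the isolated dynamics give $\dot R_i(t)\le C\,M(t)$ on $(\tau^*,\tau^*+\delta)$, where $C:=2\max_{i\in\mathcal{C}}\sum_{j\in\mathcal{C}}\bar\beta_{ji}<\infty$. Integrating from $\tau^*$, where $R_i(\tau^*)=0$, and maximising over $i\in\mathcal{C}$ gives $M(t)\le C\int_{\tau^*}^{t}M(s)\,ds$. As $M$ is continuous, non-negative, and $M(\tau^*)=0$, Gr\"onwall's inequality yields $M\equiv0$ on $(\tau^*,\tau^*+\delta)$, so $R_i(t)=0$ there for every $i\in\mathcal{C}$. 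This contradicts the definition of $\tau^*$ as the threshold of some $i\in\mathcal{C}$, which requires $R_i(t)>0$ for all $t>\tau^*$. Hence $\mathcal{T}=\varnothing$ and $\tau_i\in\{0,T\}$ for every type.

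The main obstacle is the case of \emph{simultaneous activation}, in which several types in $\mathcal{C}$ acquire recovered nodes at the same instant $\tau^*$ and mutually seed one another: a single pairwise comparison (a neighbour was already active, so it should have activated $i$ earlier) does not suffice, since each candidate driver is itself just emerging from zero. Treating the entire class $\mathcal{C}$ as a coupled system and applying Gr\"onwall is precisely what resolves this. Subsidiary care is required in the isolation step to dispose of neighbours with $\pi_{ji}=0$ and $S_i\equiv0$ (which can never transmit recovery to $i$) and to apply the monotonicity of the controls correctly, recalling that on $[0,T]$ a control is held at zero up to $\tau_j$ and is only non-increasing thereafter.
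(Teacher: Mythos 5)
Your proof is correct and follows essentially the same route as the paper's: both isolate the class of types whose recovered fraction first becomes positive at the minimal threshold $\tau^*\in(0,T)$, show that types outside this class contribute nothing to that class's recovery dynamics just after $\tau^*$ (using the sign dichotomies of Theorem~\ref{thm:constraints} together with the portion of Theorem~\ref{Thm:Rep:Concave} already established on $[\tau_j,T]$), and conclude that the isolated subsystem, started from zero, must remain at zero --- contradicting the definition of $\tau^*$. The only difference is cosmetic: where the paper checks that $R_k\equiv 0$ is a compatible solution of the isolated subsystem and appeals to the uniqueness guaranteed by Theorem~\ref{thm:constraints}, you run the equivalent Gr\"onwall estimate on $\max_{j}R_j$ directly, which also lets you dispense with the paper's auxiliary control $\bar{\mathbf{u}}$.
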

%\end{proof}

\subsection{Proof of Lemma \ref{lem:positive_replicative}}\label{lambdas}
\begin{proof}
First, from \eqref{phirep} and \eqref{final-rep}, we have $\phi_i(T)=0$, which, combined with \eqref{psidef} results in $\psi_i(x,T)=R_i(T) h_i(x)$. Since either $R_i(T)>0$ or $\tau_i>T$, \eqref{optimal_u:replicative} and the definition of $u_i$ result in $u_i(T)=0$, as all other values of $x$ would produce a positive $\psi_i(x,T)$. Therefore, $h_i(u_i(T))=0$\hide{, a result which we will use shortly}.

The rest of the proof has a similar structure to that of Lemma~\ref{lem:positive_NONrep}.
 $(\lambda^I_i-\lambda^S_i)|_{t=T}=0$ and
$\lim_{t\uparrow T}(\dot{\lambda}^I_i-\dot{\lambda}^S_i)=-\hide{f_i(\mathbf{I})}{\partial f(\mathbf{I})}/{\partial I_i}<0$, for all $i$.
% Hence,
% $\epsilon>0$ s.t. $(\lambda^I_i-\lambda^S_i)>0$ over $(T-\epsilon,T)$.
Also, for all $i$,  $(\lambda^I_i-\lambda^R_i)|_{t=T}=0$ and $\lim_{t\uparrow T}(\dot{\lambda}^I_i-\dot{\lambda}^R_i)=-\hide{f_i(\mathbf{I})}{\partial f(\mathbf{I})}/{\partial I_i}-\hide{L_i(\mathbf{R})}{\partial L(\mathbf{R})}/{\partial R_i} + h_i(u_i(T))<0$, since $h_i(u_i(T))=0$.

% and
% $(\ddot{\lambda}^S_i-\ddot{\lambda}^Z_i)|_{t=T}=\kappa\sum_{j=1}^M\beta_{ji}I_j(T)>0$, for all $i$.
 Hence,
 $\exists~\epsilon>0$ such that $(\lambda^I_i-\lambda^S_i)>0$ and $(\lambda^I_i-\lambda^R_i)>0$ over $(T-\epsilon',T)$\hide{ [again, due to a contradiction argument on Property \ref{property1}]}.

Now suppose that (at least) one of the inequalities is
first\footnote{Going backward in time from $t=T$.} violated at $t=t^*$ for $i^*$, i.e., for all $i$, $(\lambda^I_i(t)-\lambda^S_i(t))>0$ and  $(\lambda^I_i(t)-\lambda^R_i(t))>0$ for all $t>t^*$, and\hide{$(\lambda^I_i(t^*)-\lambda^S_i(t^*))\geq0$, and  $(\lambda^I_i(t^*)-\lambda^R_i(t^*))\geq0$). Now,} either (A)~$(\lambda^I_{i^*}(t^*)-\lambda^S_{i^*}(t^*))=0$, or (B)~$(\lambda^I_{i^*}(t^*)-\lambda^R_{i^*}(t^*))=0$ for some $i^*$. Note that from continuity of the adjoint functions, $(\lambda^I_i(t^*)-\lambda^S_i(t^*))\geq0$, and $(\lambda^I_i(t^*)-\lambda^R_i(t^*))\geq0$ for all $i$.

Case~(A):
Here, we have\hide{Let us investigate
$(\dot{\lambda}^I_{i^*}-\dot{\lambda}^S_{i^*})(t^{*+})$}:\footnote{The RHS of the equation is evaluated at $t=t^*$ due to continuity.}
\begin{align*}
 (\dot{\lambda}^I_{i^*}-\dot{\lambda}^S_{i^*})(t^{*+})=&
-\hide{f_i(\mathbf{I})}\dfrac{\partial f(\mathbf{I})}{\partial I_{i^*}}
-\sum_{j=1}^M(\lambda^I_j-\lambda^S_j)\beta_{i^*j}S_j
-\hide{(1-\pi_{i^*})}(\lambda^I_{i^*}-\lambda^R_{i^*})
\sum_{j=1}^M\bar\beta_{ji^*}(1-\pi_{ji^*})R_ju_j.
\end{align*}
First of all, $-\hide{f_i(\mathbf{I})}{\partial f(\mathbf{I})}/{\partial I_{i^*}}<0$.
Also, the second and third terms \hide{$ - \sum_{j=1}^M[(\lambda^I_j-\lambda^S_j)\beta_{i^*j}S_j]$
and $-(\lambda^I_{i^*}-\lambda^R_{i^*})
\sum_{j=1}^M\bar\beta_{ji^*} (1-\pi_{ji^*})R_ju_j$} are non-positive, according to the definition of $t^*$.
Hence,
$(\dot{\lambda}^I_{i^*}-\dot{\lambda}^S_{i^*})(t^{*+})<0$, which contradicts Property \ref{property1}, therefore case~(A) does not arise.

Case~(B): In this case, we have:\footnotemark[\value{footnote}]

\begin{align*}
 (\dot{\lambda}^I_{i^*}-\dot{\lambda}^R_{i^*})(t^{*+})=
 &-\hide{f_i(\mathbf{I})}\dfrac{\partial f(\mathbf{I})}{\partial I_{i^*}}-\hide{L_i(\mathbf{R})}\dfrac{\partial L(\mathbf{R})}{\partial R_{i^*}}
-\sum_{j=1}^M(\lambda^I_j-\lambda^S_j)\beta_{i^*j}S_j
 +\alpha_{i^*}(u_{i^*},t).
\end{align*}

We have $-{\partial f(\mathbf{I})}/{\partial I_{i^*}}<0$ and $-{\partial L(\mathbf{R})}/{\partial R_{i^*}}\leq 0$. The term $-(\lambda^I_{i^*}-\lambda^S_{i^*})\sum_{j=1}^M\beta_{ji^*}S_j$ is
non-positive, according to the definition of $t^*$, and
$\alpha_{i^*}$ will be non-negative due to \eqref{psi_condition}.
This shows 
$(\dot{\lambda}^I_{i^*}-\dot{\lambda}^R_{i^*})(t^{*+})<0$, contradicting\hide{ion with the assumption $({\lambda}^I_{i^*}-{\lambda}^R_{i^*})<0$ for  $t<t^*$ (via a contradiction argument on } Property \ref{property1}, and
so case~(B) does not arise either, completing the proof.
 \end{proof}

\subsection{Proof of Lemma \ref{lem:tau}}\label{taus}
\begin{proof}
We start by creating another  control  $\bar{\bu}$ from $\bu$ such that for every $i$, for every $t\leq \tau_i$, $\bar{u}_i(t):=1$, and for every $t>\tau_i$, $\bar{u}_i(t):={u}_i(t)$. We prove by contradiction that  $\tau_i(\bI(0), \bS(0), \bR(0), \bar{\bu})\in \{0, T\}$ for each $i$. Since $\bar{u}_i\not\equiv u_i$ only in $[0, \tau_i]$ and $R_i(t) = 0$ for {$t \in (0, \tau_i]$} when $\bu$ is used, {the state equations can only differ at a solitary point $t=0$, and therefore} both controls result in the same state evolutions. Thus, for each $i,$ $\tau_i(\bI(0), \bS(0), \bR(0), \bar{\bu})
= \tau_i(\bI(0), \bS(0), \bR(0),\bu)$, and $\tau_i(\bI(0), \bS(0), \bar{\bu})$ may be denoted as $\tau_i$ as well. The lemma therefore follows.

{For the contradiction argument, assume} that the control is $\bar{\bu}$ and that $\tau_i\in(0,T)$ for some $i$. Our proof relies on the fact that  if $\bar{u}_i(t') = 0$ at some $t' \in (0, T)$, then
 $\bar{u}_i(t) = 0$ for $t > t'$, which follows from the definition of $\bar{\bu}$ and prior results in the proof of Theorem~\ref{Thm:Rep:Concave}. 

For $t\in[0,\tau_i]$, since $R_i(t) = 0$ in this interval,  (\ref{Imm_Sys_Rep}c) becomes $\dot{R}_i=\sum_{j=1,\\j\neq i}^M \bar\beta_{ji}(S_i+\pi_{ji}I_i) R_j\bar{u}_j=0$ in this interval. Since all terms in $\sum_{j=1,\\j\neq i}^M \bar\beta_{ji}(S_i+\pi_{ji}I_i) R_j\bar{u}_j$ are non-negative, for each $j\neq i$ we must either have (i) $\bar\beta_{ji}(S_i(t)+\pi_{ji}I_i(t)) = 0$ for some $t \in [0, \tau_i]$, or (ii) $R_j(t)\bar{u}_j(t)= 0$ for all $t \in[0, \tau_i]$.

(i) Here, either $\bar\beta_{ji}=0$; or $(S_i(t)+\pi_{ji}I_i(t))=0$, and hence due to Theorem~\ref{thm:constraints}, $S_i(t)=0$ and $\pi_{ji}I_i(t)=0$. In the latter case, from Theorem~\ref{thm:constraints},  $S_{i}(0)=0$ and $\pi_{ji}=0$. and therefore for all $t>0$, we will have $\bar\beta_{ji}(S_i+\pi_{ji}I_i) R_j\bar{u}_j=0$.

(ii) We can assume $\bar\beta_{ji}(S_i(t)+\pi_{ji}I_i(t)) > 0$ for all $t \in (0, \tau_i]$. For such $j$, if $\tau_j<\tau_i$, $R_j(t)>0$ for $t \in (\tau_j,\tau_i]$, therefore $\bar{u}_j(t)=0$ for such $t$. Due to the structure results obtained for the interval $[\tau_j,T]$ in Theorem~\ref{Thm:Rep:Concave}, $\bar{u}_j(t)=0$ for all $t>\tau_j$, and therefore $\bar\beta_{ji}(S_i+\pi_{ji}I_i) R_j\bar{u}_j=0$ for all $t>0$. %Now if for all such $j$, $\tau_j>\tau_i$, then there will exist an interval $[\tau_i, \tau_i+\gamma]$ for some $\gamma>0$ for which all such $j$, $R_j(t)\bar{u}_j(t)= 0$. Therefore, in this interval, $\dot{R}_i= $\bar\beta_{ii}(S_i+\pi_{ii}I_i) R_i\bar{u}_i$, which with an initial condition of $R_i(\tau_i)=0$ will result in $R_i(t)=0$ for $t \in [\tau_i, \tau_i+\gamma]$, which contradicts the definition of $\tau_i$. So a $j\neqi$ must exist such that $\tau_j=\tau_i$. 

Now, since $M<\infty$, the set $W=\{\tau_k:\tau_k>0, k=1,\ldots,M\}$ must have a minimum $\omega_0<T$. Let $L(\omega_0)=\{k\in\{1,\ldots,M\}: \tau_k = \omega_0\}$. Let the second smallest element in $W$ be $\omega_1$. Using the above argument, the values of $R_k(t)$ for $t \in [\omega_0, \omega_1]$ for all $k \in L(\omega_0)$ would affect each other, but not $R_i$'s for $i$ such that $\tau_i>0$, $i \not\in L(\omega_0)$. Furthermore, in this interval for $k \in L(\omega_0)$ we have $\dot{R}_k=\sum_{g \in L(\omega_0)} \bar\beta_{gk}(S_g+\pi_{gk}I_i) R_g\bar{u}_g$, with $R_k(\omega_0)=0$. We see that for all $k\in L(\omega_0)$, replacing $R_k(t)=0$ in the RHS of equation~(\ref{Imm_Sys_Rep}) gives us $\dot{R}_k(t)=0$, a compatible LHS, while not compromising the existence of solutions for all other states. 
An application of Theorem~\ref{thm:constraints} for $t \in [\omega_0, \omega_1]$ and $\bar{u}$ shows that this is the unique solution of the system of differential equations~\eqref{Imm_Sys_Rep}. This contradicts the definition of $\tau_k$, completing the proof of the lemma.
\end{proof}

\section{An Alternative Cost Functional}\label{sec:Generalizations}
Recall that in our objective function, the cost of non-replicative patching was defined as $\sum_{i=1}^MR_i^0h_i(u_i)$ (respectively $\sum_{i=1}^MR_ih_i(u_i)$ for the replicative case), which corresponds to a scenario in which the dispatchers are charged for every instant they are immunizing/healing (distributing the patch), \hide{even if they are not actively engaged in immunizing/healing a particular node at that moment}irrespective of the number of nodes they are delivering patches to. This represents a \emph{broadcast} cost model where each transmission can reach all nodes of the neighbouring types. In an alternative \emph{unicast} scenario, different transmissions may be required to deliver the patches to different nodes. This model is particularly useful if the  dispatchers may {only transmit\hide{only}} to the nodes that have not yet received the patch.\footnote{This can be achieved by keeping a common database of nodes that have successfully received the patch, or {by implementing} a turn-taking algorithm preventing double targeting.
\hide{This choice of policy can remove some unnecessary transmissions and save on the patching overhead, but it should be immediately clear that its implementation involves some extra effort.}
Note that we naturally assume that the network does not know with a priori certainty which nodes are infective, and hence it cannot differentiate between susceptibles and infectives. Consequently, even when $\pi_{ij}=0$, i.e., the system manager knows the patch cannot remove the infection and can only immunize the susceptible, still the best it may be able to do is to forward the message to any node that has not yet received it.}
%SASWATI A dispatcher of type $i$ uses the patching rate proportional to $u_i$ to transmit the patch to infective and susceptible nodes of type $j$.
Hence, the cost of patching in this case can be represented by: $\sum_{i=1}^M \sum_{j=1}^M R_i^0\bar\beta_{ij}(S_j+I_j)p(u_i)$ (for the replicative case: $\sum_{i=1}^M \sum_{j=1}^M R_i\bar\beta_{ij}(S_j+I_j)p(u_i)$), where $p(.)$ is an increasing function. \hide{This cost emphasizes the cost of reception of the patches}More generally, the patching cost can be  represented as a sum of the previously seen cost (\Sec\ref{sec:objective}) and this term\hide{(representing a \emph{broadcast} and \emph{unicast} cost respectively)}.

For non-replicative patching, if all $h_i(\cdot)$ and $p(\cdot)$ are concave, then Theorem \ref{Thm:NonRep:Concave} will hold if for all pairs $(i,j)$, $\pi_{ij}=\pi_j$ (i.e., healing efficacy only depends on the type of an infected node, not that of the immunizer). The analysis will change in the following ways: A term of
%\{\hide{and
$R_i^0 p(u_i)\sum_{j=1}^M \bar\beta_{ij}(S_j+I_j)$
%% and
%% $\sum_{i,j=1}^M h_{ij}^3(R_i\bar\beta_{ij}u_i(S_j+I_j))$
%respectively}}
%&\hide{\{+\sum_{i=1}^M \sum_{j=1}^M \bar\beta_{ij}R_i^0(S_j+I_j)h_i^2(u_i)}}
is added to (\ref{def:psi_i}), and subsequently to \eqref{gamma} (with $u_i=1$ in the latter case). Also, (\ref{costate}) is modified by the subtraction of $\sum_{j=1}^M\bar\beta_{ji}R_j^0p(u_j)$ from the RHS of both equations. This leaves $\dot{\lambda}_i^I-\dot{\lambda}_i^S$ untouched, while subtracting a positive amount from $\dot{\lambda}_i^I$,  meaning that Lemma \ref{lem:positive_NONrep} still holds. As $\phi_i(t)$ was untouched, this means that Lemma \ref{Lem:Psi} will also hold. Thus the  RHS of $\dot{\gamma}_i $ is only modified by the subtraction of
%\hide{\{+\sum_{i=1}^M \sum_{j=1}^M \bar\beta_{ij}R_i(S_j+I_j)h_i^2(u_i)}}
%
%\hide{=\frac{h_i^2(1)}{1} \sum_{j=1}^M\bar\beta_{ij}(\dot{S}_j+\dot{I}_j)}
%\begin{align}\label{additive term}
$\sum_{j,k=1}^M\bar\beta_{ij}(S_j+\pi_{j} I_j)\bar\beta_{kj}R_k^0\left(p(u_k)-u_kp(1)\right)$
%\end{align}
which is a positive term, as for any continuous, increasing, concave function $p(\cdot)$ such that $p(0)=0$,
we have $ap(b)\geq bp(a)$ if $a\geq b\geq 0$, since $\frac{p(x)}{x}$ is increasing. This yields: $\left(p(u_k)-u_kp(1)\geq0\right)$.
Therefore the conclusion of Theorem \ref{Thm:NonRep:Concave} holds. Similarly, it may be shown that  Theorem \ref{Thm:NonRep:Concave} also holds for strictly convex $h_i(\cdot)$ provided $p(\cdot)$ is linear. 
 
For the replicative case, if $p(\cdot)$ is linear ($p(x)=Cx$) and again $\pi_{ij}=\pi_j$ for all $(i,j)$, Theorem \ref{Thm:Rep:Concave} will hold. The modifications of the integrand and $\psi_i$ are as above. The adjoint equations (\ref{costaterep}) are modified by the subtraction of $\sum_{j=1}^M C\bar\beta_{ji}R_ju_j$ from $\dot{\lambda}_i^I$ and $\dot{\lambda}_i^I$, and the subtraction of $Cu_i\sum_{j=1}^M\bar\beta_{ij}(S_j+I_j)$ from $\dot{\lambda}_i^R$. Due to the simultaneous change in $\psi_i$, however, we still have $\dot{\lambda}_i^R=\partial L(\mathbf{R})/{\partial R_i}-\alpha_i(u_i,t)$. Therefore, Lemma \ref{lem:positive_replicative} still holds, as $\dot{\lambda}_i^I-\dot{\lambda}_i^S$ is unchanged, and a positive amount is subtracted from $\dot{\lambda}_i^I-\dot{\lambda}_i^R$. We absorb $\sum_{j=1}^M C\bar\beta_{ij}(S_j+I_j)$ into $\phi_i(t)$, where all the $p(\cdot)$ terms in $\dot{\phi}_i$ will cancel out, leaving the rest of the analysis, including for Lemmas \ref{phirep1} and \ref{lem:tau}, to be the same. The theorem follows.

\hide{+\\\sum_{i=1}^M K_{2i}R_iu_i\sum_{j=1}^M\bar\beta_{ij}(S_j+I_j)]}

\hide{+K_{2i}R_iu_i\sum_{j=1}^M\bar\beta_{ij}(S_j+I_j)}

\section{Numerical Investigations}\label{sec:Numericals}
In this section, we numerically investigate the optimal control policies for a range of malware and network parameters.%
\footnote{For our calculations, we use a combination of
\emph{C} programming and \emph{PROPT}\textsuperscript\textregistered,
 by \emph{Tomlab Optimization Inc} for \emph{MATLAB}\textsuperscript\textregistered.} Recalling the notion
 of topologies presented in \Sec\ref{subsec:non_rep_model} (in the paragraph before \eqref{Imm_Sys_NonRep}),
we consider three topologies: \emph{linear}, \emph{star} and \emph{complete}, as was illustrated in Fig.~\ref{fig:topologies}. In our simulations, we assume that at $t=0$, only one of the regions (types) is infected, i.e., $I_i^0>0$ only for $i=1$. Also, $R^0_i = 0.2$, $\beta_{ii} = \beta=0.223$ for all $i$.\footnote{This specific value of $\beta$ is chosen to match the average inter-meeting times from the numerical experiment reported in~\cite{hui2005pocket}.} The value of $\beta_{ij}$, $i\neq j$ is equal to $X_{Coef}\cdot\beta$ if link $ij$ is part of the topology graph, and zero otherwise. (Unless otherwise stated, we use $X_{Coef}=0.1$.) It should be noted that $\beta_{ij}* T$ denotes the average number of contacts between nodes of regions $i$ and $j$ within the time period, and thus $\beta$ and $T$ are dependent variables. For simplicity, we use
 equal values for ${\beta}_{ji}, \beta_{ij}, \bar{\beta}_{ij}$, $\bar{\beta}_{ji}$ for all $i,j$ (i.e., ${\beta}_{ji} = \beta_{ij} = \bar{\beta}_{ij} = \bar{\beta}_{ji}$), and set $\pi_{ij}=\pi$ for all $i,j$.
We examine two different aggregate cost structures for non-replicative patching:\footnote{$f_i(\cdot)$, $h_i(\cdot)$, and $p(\cdot)$ are linear and identical for all $i$, and $l_i(\cdot)=0$.} (type-A)-$\int_{0}^{T}\left(K_I\sum_{i=1}^M I_i(t)+K_u \sum_{i=1}^M R_i^0 u_i(t)  \right)\,dt$ and (type-B)- $\int_{0}^T\left(K_I\sum_{i=1}^M I_i(t)+K_u\sum_{i=1}^M R_i^0 u_i(t)(S_i(t)+I_i(t)) \right)\,dt$ (described in \Sec\ref{subsec:Cost} and \Sec\ref{sec:Generalizations} respectively). We select $T=35$, $K_I=1$, $K_u=0.5$ unless stated otherwise. 
For replicative patching, $R_i^0$ in both cost types is replaced with $R_i(t)$.

We first present an example of our optimal policy (\S\ref{subsec:example}), and then we examine its behaviour in the linear and star topologies (\S\ref{subsec:topology}). Subsequently, we show the cost improvements it achieves over heuristics (\S\ref{subsec:comparison}). Finally, we demonstrate the relative benefits of replicative patching (\S\ref{subsec:nonrep}). 

\begin{figure}[htb]
\centering
\includegraphics[width= 0.5 \columnwidth]{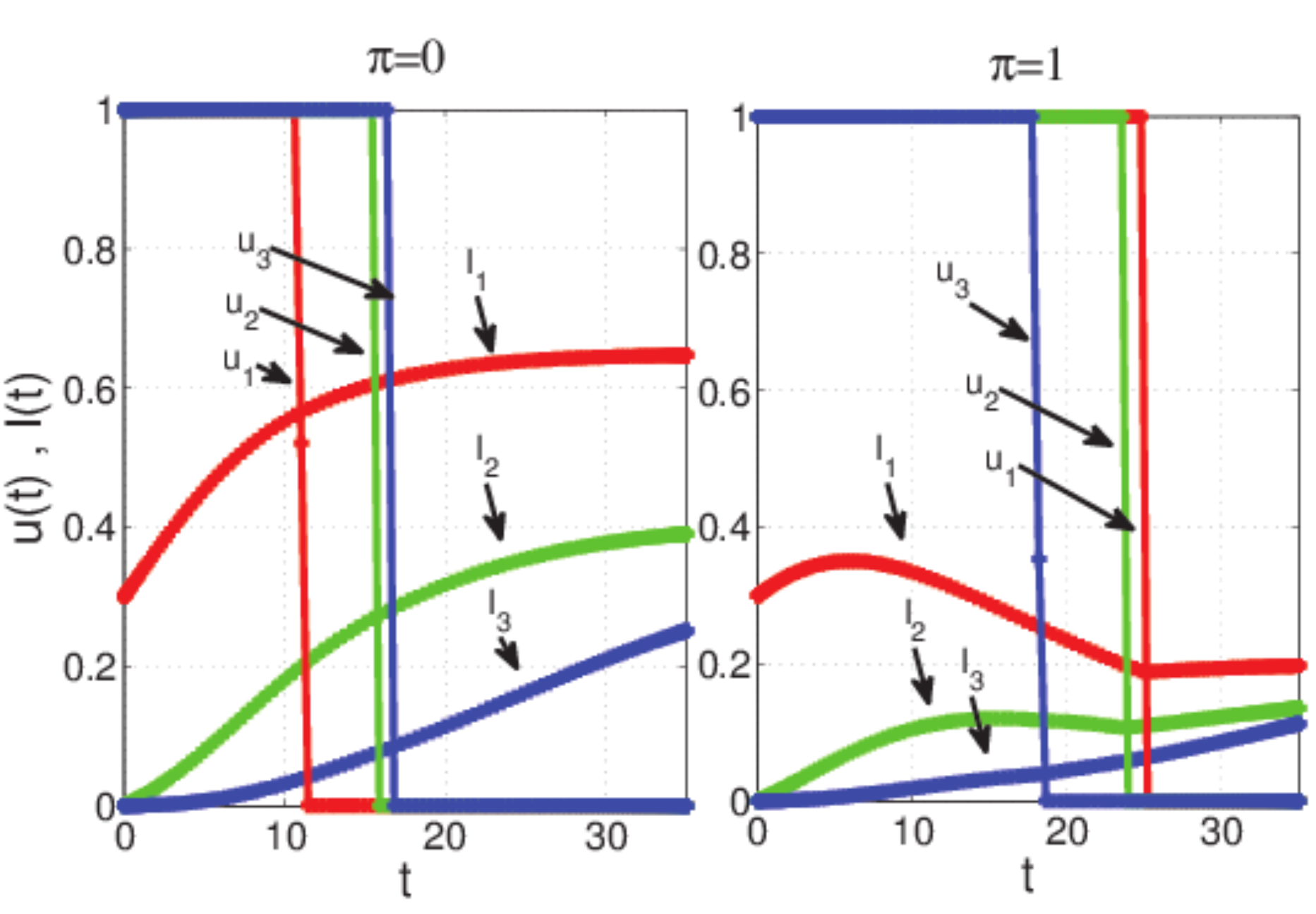}
\caption{Optimal patching policies and corresponding levels of infection in a three region linear topology.
Note how the infection that initially only exists in region 1 spreads in region 1 and then to region 2, and finally to region 3.}\label{fig:example}
\end{figure}

\subsubsection{Numeric Example}\label{subsec:example}
First, with the intention of illustrating our analytical results, in Fig.~\ref{fig:example} we have depicted an example of the optimal dynamic patching policy along with the corresponding evolution of the infection as a function of time for a simple 3-region linear topology where the infection starts in region 1 ($\bI^0=(0.3,0,0)$). The cost model is type-A and patching is non-replicative.
For $\pi=0$ the levels of infection are non-decreasing, whereas for $\pi=1$ they may go down as well as up {(due to healing)}.

\subsubsection{Effects of Topology}\label{subsec:topology}
\hide{We now investigate the effect of topology on the optimal patching policy.}
We study the \emph{drop-off}  times (the time thresholds at which the bang-bang {optimal} patching halts) in different regions for linear and star topologies.
%In Fig.~\ref{fig:linear_pattern}, the drop-off times are drawn across the regions for 3 values of

Fig.~\ref{fig:linear_pattern} reveals  two different patterns for $\pi=0$ and $\pi=1$ in a linear topology with $10$ regions with non-replicative patching and type-A cost.
For $\pi=0$, {a\hide{the}} middle region is patched for the longest time, whereas for $\pi=1$, as we move away from the origin of the infection (region 1), the drop-off point decreases. This is because for $\pi=0$, patching can only benefit the network by recovering {\hide{the }}susceptibles. In regions closer to the origin, the fraction of {\hide{the }}susceptibles decreases quickly, making continuation of the patching comparatively less beneficial.
In the middle regions, where there are more salvageable susceptibles, patching should be continued {for} longer. For regions far from the origin, patching can be stopped earlier, as the infection barely reaches them within the time horizon of consideration.
For $\pi=1$, patching is able to recover both susceptible and infective nodes. Hence,  the drop-off times
depend only on the exposure to the infection, which decreases with distance from the origin. As $X_{Coef}$ is increased, the drop-off points when $\pi=1$ get closer together. Intuitively, this is because higher cross-mixing rates have a homogenizing effect, as the levels of susceptible and infective nodes in different region rapidly become comparable.
Also, Fig.~\ref{fig:linear_pattern} reveals that as $X_{Coef}$ increases and more infection reaches farther regions, they are patched for longer, which agrees with our intuition.

\begin{figure}[htb]
\centering
\includegraphics[width= 0.5 \columnwidth]{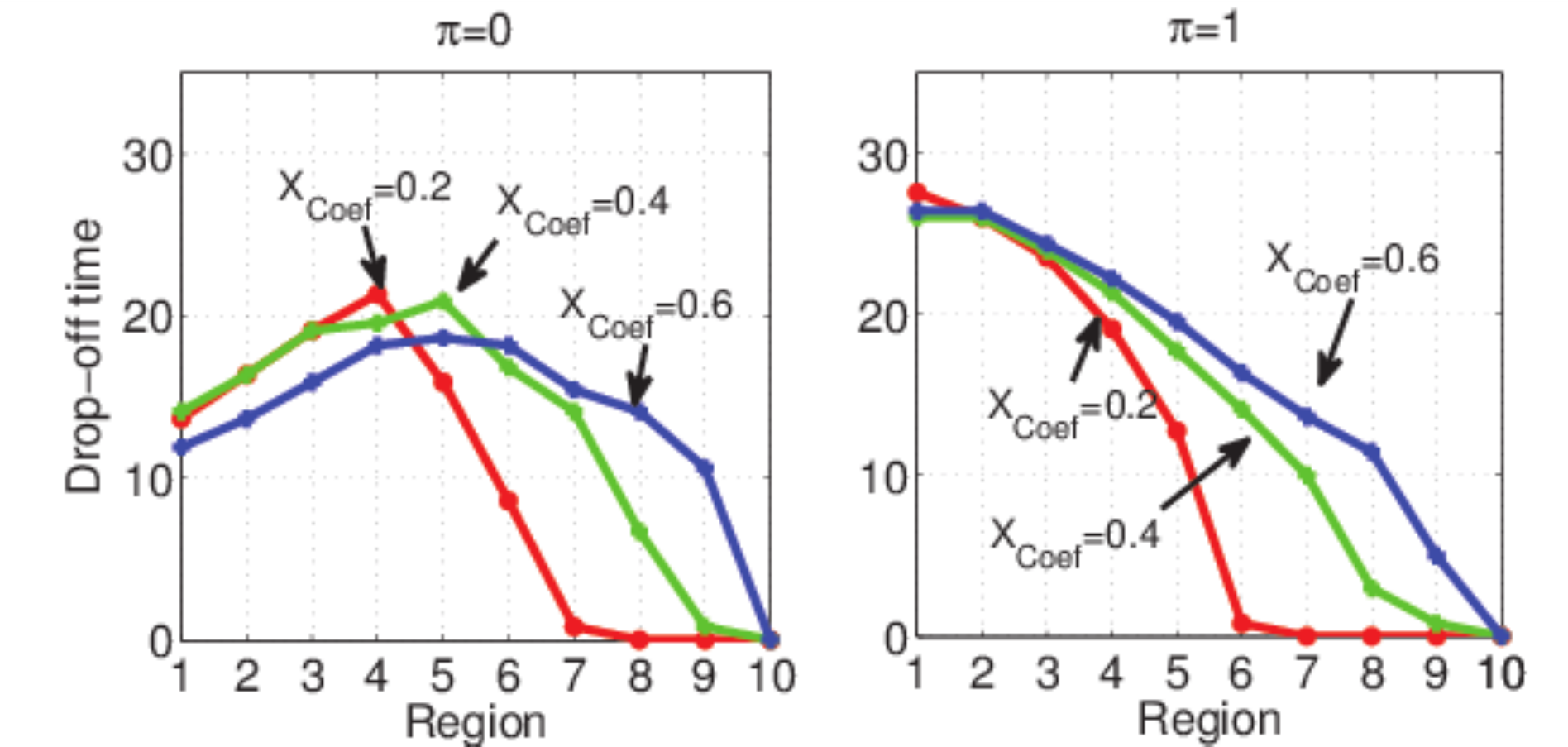}
\caption{Drop-off times in a linear topology for $X_{Coef}=0.2,0.4,0.6$.}\label{fig:linear_pattern}
\end{figure}

We next investigate a star configuration where  the infection  starts from a peripheral region (region 1), cost is type-B, patching is non-replicative, and  $I_1^0=0.6$.
Fig.~\ref{fig:Star_pattern} reveals  the following interesting phenomenon: although the central region
is the only one that is connected to all the regions, for $\pi=0$, it  is patched for shorter lengths of time compared to the peripherals. In retrospect, this is because only susceptible nodes
can be patched and their number at the central region  drops quickly due to its interactions with all the peripheral regions, rendering patching inefficient relatively swiftly.
As expected, this effect is amplified with higher number{s} of peripheral regions.
For $\pi=1$, on the other hand, the central region is patched for the longest time. This is because the infective nodes there  can infect  susceptible nodes in all regions, and hence the patching,  which can now heal the infectives as well, does not stop until it heals  almost all of infective nodes in this region.

\begin{figure}[htb]
\centering
\includegraphics[width= 0.5 \columnwidth]{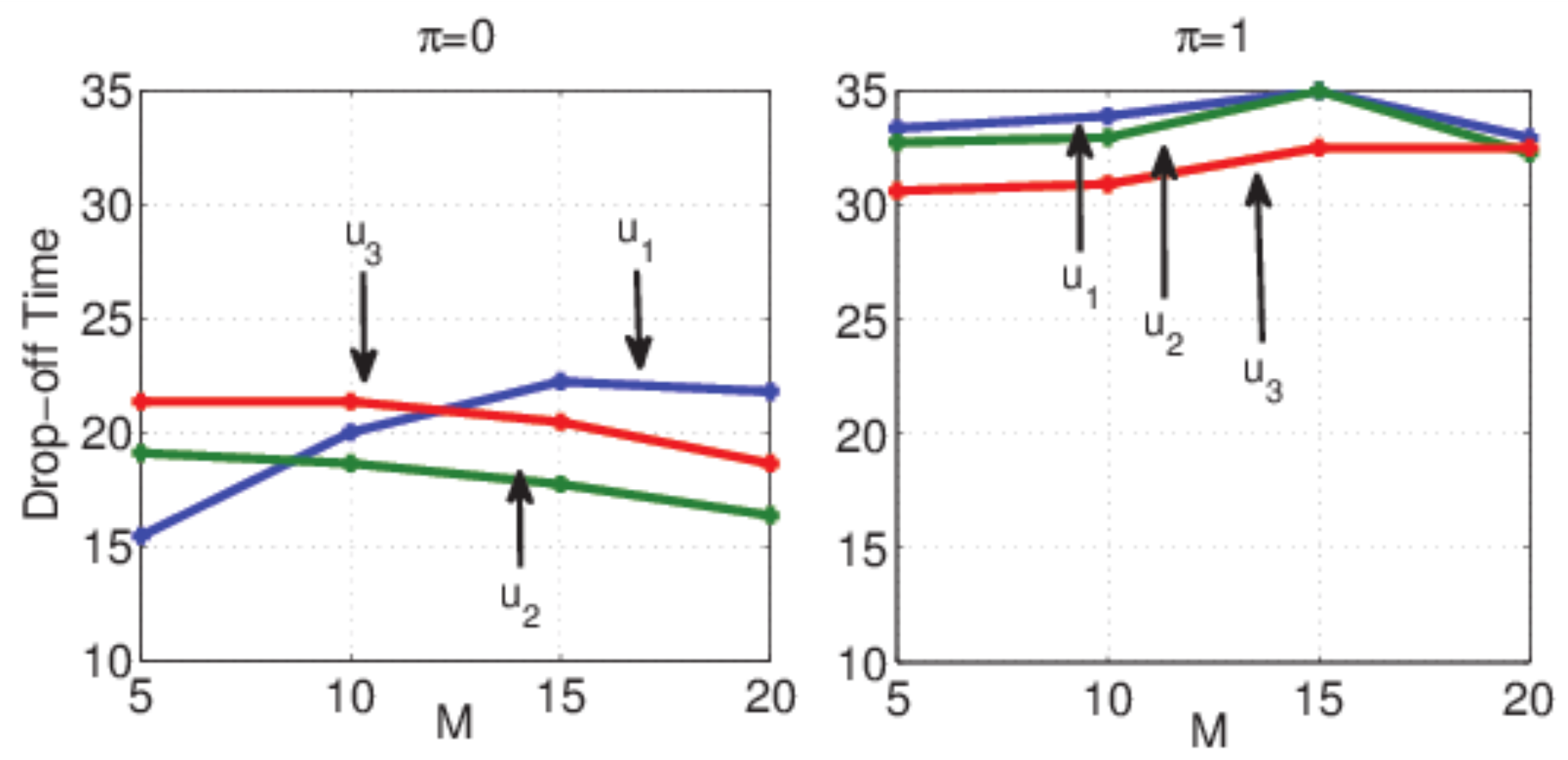}
\caption{Drop-off times in the star topology.}\label{fig:Star_pattern}
\end{figure}

\subsubsection{Cost Comparison}\label{subsec:comparison}
Next, in order to evaluate the efficacy of our dynamic heterogeneous patching policy, we compare our {\hide{inflicted }}aggregate cost
against those of four alternative patching policies. We label our policy{\hide{as}} as \emph{Stratified Dynamic}.
In the simplest alternative policy, all regions use identical  patching intensities that do
not change with time. We then select this fixed and static level of patching so as to  minimize
 the aggregate cost among all possible choices. We refer to this policy as \emph{Static} (St.).
The aggregate cost may be reduced if the static value of the patching is allowed to be distinct for different regions. These values (still fixed over time) are then independently varied and the best combination is selected. We refer to this policy  as \emph{Stratified Static} (S. St.).
The third policy we implement is a \emph{homogeneous} approximation to the heterogeneous network. Specifically, the whole network is approximated by a single region model with an equivalent inter-contact rate. This value is selected such that the \emph{average} pairwise contact rates are equal in both systems.
The optimal control is {\hide{hence }}derived based on this model and applied across all regions to calculate the aggregate cost. We call this policy \emph{Simplified Homogeneous} (S. H.). The simplified homogeneous policy is a special case of Spatially Static (Sp. St.) policies, where a one one-jump bang-bang control is applied to all regions to find the optimum uniform control.

Fig.~\ref{fig:linear_costs} depicts the aggregate costs of all five policies for a linear topology with $M=2,\ldots,5$ {\hide{number of }}regions.
The cost is type-A and patching is replicative, with $\pi=1$. Here, $I_1^0=0.2$, $K_u=0.2$ and the rest of parameters are as before.
As we can clearly observe, our stratified policy achieves the least cost, outperforming the rest.
When the number of regions is small, H. and Sp. St. \hide{is doing}perform better than S. St.,  all of which obviously outperform St.
However, as the number of regions increases and the network becomes more spatially heterogeneous, the homogeneous approximation, and all uniform controls in general, worsen and the S. St. policy quickly overtakes them as the best approximation to the optimal.
For example for $M=5$ regions, our policy outperforms the best static policies by 40\% and the homogeneous  approximation by 100\%, which shows that our results about the structure of the optimal control can result in large cost improvements\hide{in both performance and running time (the latter by restricting the search space)}.  
For $\pi=0$, a similar performance gap is observed. 
\hide{This underscores the significance of considering heterogeneity in the controls. }Specifically, as discussed, optimal drop-off times {for this problem} should vary based on the distance from the originating region, a factor that the Sp. St., H., and St. policies ignore.

\begin{figure}[htb]
\centering
\includegraphics[width= 0.5 \columnwidth]{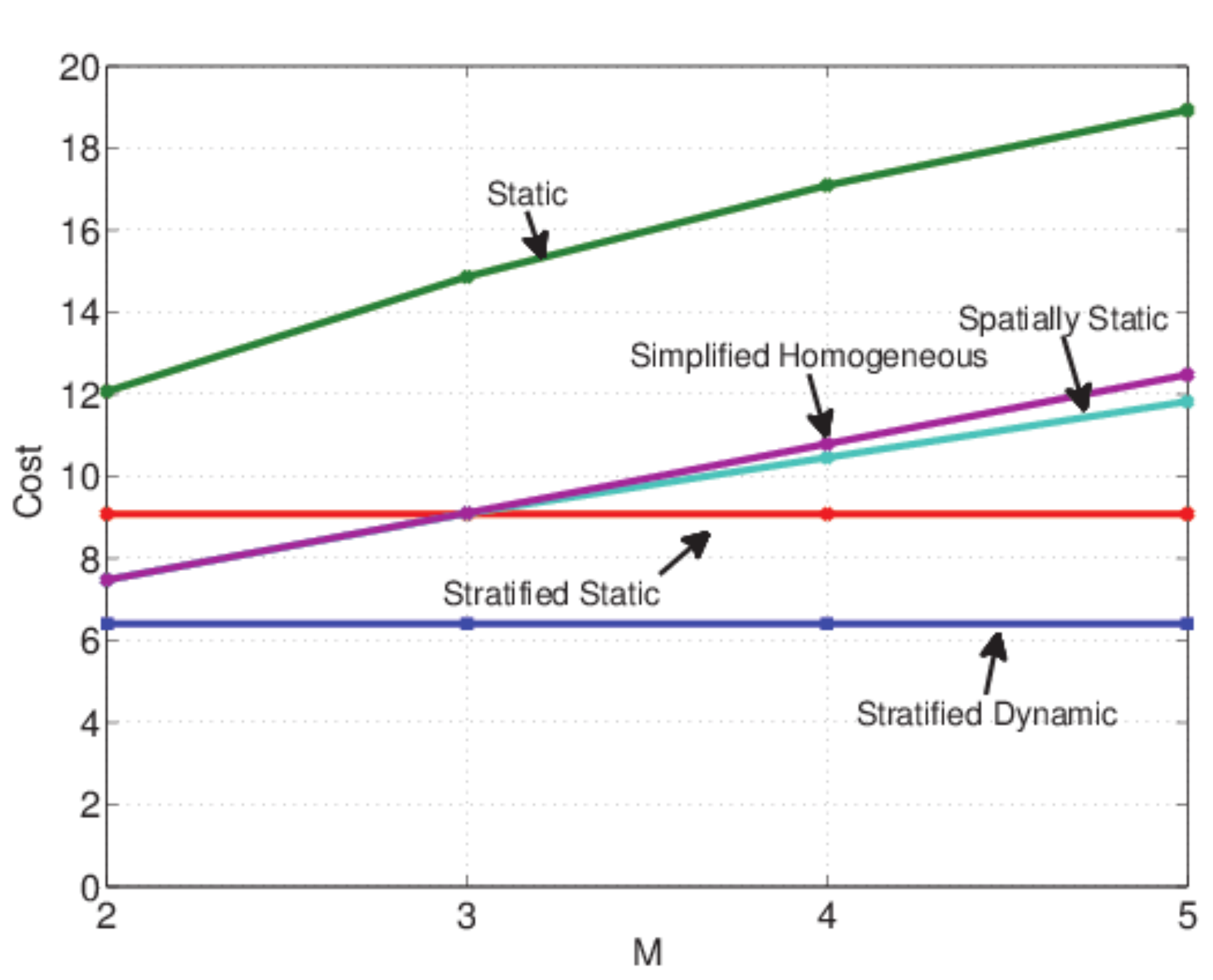}
\caption{Cost of heuristics vs. the optimal policy, linear topology.}\label{fig:linear_costs}
\end{figure}

\subsubsection{Replicative vs. Non-replicative Patching}\label{subsec:nonrep}
As previously stated, any solution to the non-replicative patching problem can be emulated by replicative patching, making this scenario worth investigation, even with the additional security vulnerabilities that the system has to contend with.  In Fig.~\ref{fig:RepNon}, we see the aggregate cost of optimal replicative and non-replicative patching in a complete topology as a function of the size of the network for $M \leq 11$, $\pi=1$, and $K_u=0.2$. Here, even for such modest sizes, replicative patching can be 60\% more efficient than non-replicative patching, a significant improvement. This is especially true for the complete topology and other edge-dense topologies, as in replicative patching, the patch can spread in ways akin to the malware.

\begin{figure}[htb]
\centering
\includegraphics[width= 0.5 \columnwidth]{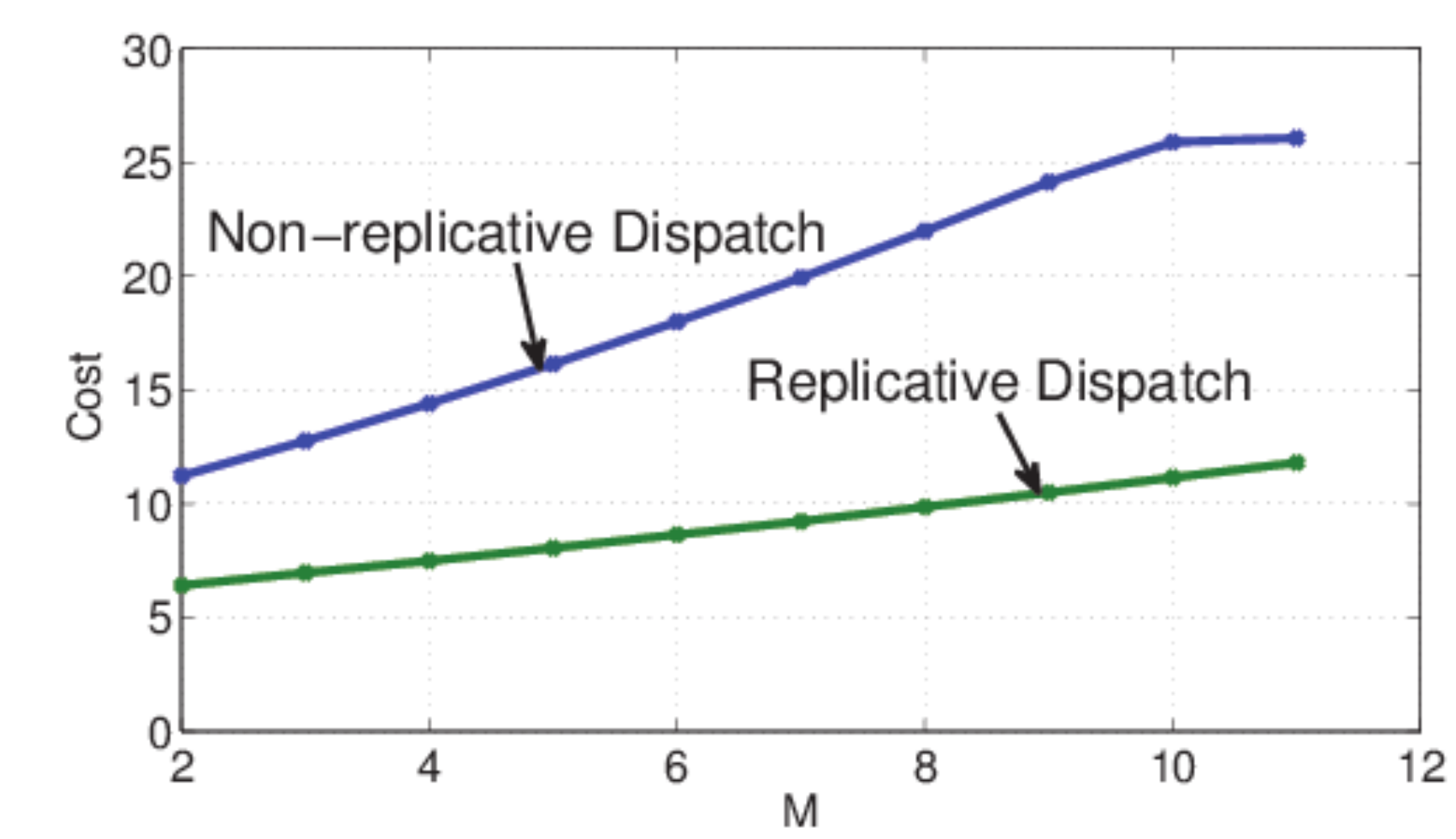}
\caption{Costs of replicative and non-replicative patching, complete topology.}\label{fig:RepNon}
\end{figure}

\section{conclusion and future work}

% We proposed uction of reception gains of susceptible nodes for containing malware outbreaks in mobile wireless networks. Using optimal control tools, we identified the
% optimum policy for dynamically controlling the reception gains  so as to minimize the overall
%  network costs. We analytically proved that  the optimal policies have simple structures
%  when the cost functions are concave and convex, and can therefore be easily implemented
%  in resource constrained devices without requiring constant coordination and information exchange.

% Investigation of dynamic control of infective nodes' transmission gains by the malware (instead
%  of selecting the maximum value throughout) constitutes an interesting direction for future research.
%  Such control may be motivated in scenarios where energy limitations  lead to premature
%  battery depletions of infective nodes owing to  high transmission range selections,
%  which in turn throttles the spread of the infection. Such control may also be necessary
%  in a highly dense network in which a
% malware might want to avoid jamming during its spreading period, in order not to self-throttle its propagation, and then initiate a more effective jamming
% attack. These cases, however, will lead to a dynamic game setting as both the network and an attacker will optimize against each other.

We considered the problem of disseminating security patches in a large resource-constrained heterogeneous network in the mean-field regime.
% Specifically, we considered dynamically activating a controlled fraction of the dispatchers and selecting their transmission rates as an optimal dynamic control probLemma
Using tools from optimal control theory, we analytically proved that optimal dynamic policies for each type of node follow simple threshold-based structures. \hide{, making them amenable to distributed implementation. }%for both non-replicative and replicative settings
We numerically demonstrated the advantage of our heterogeneous policies over homogeneous approximations, as well as over static policies.
For future research, we would like to further investigate the effects of heterogeneities in the structure of networks on both defense and attack strategies.

% \begin{scriptsize}
\bibliographystyle{ieeetr}
\bibliography{Reference}
% \end{scriptsize}

\appendices

\section{Proof of Theorem~\ref{thm:constraints}}
We use the following  general result :
\begin{Lemma}
\label{baselemma}
Suppose the vector-valued function $\boldf = (f_i,1\leq i\leq 3M)$  has component functions given by quadratic forms $
  f_i(t, \bx) = \bx^T Q_i(t) \bx + {p_i}^T \bx \quad (t\in[0,T];\; \bx\in\BS)$,
  where $\BS$  {is the set of $3M$-dimensional vectors} $\bx = (x_1,\dots, x_{3M})$ satisfying $\bx\geq\mathbf{0}$ and $\forall j \in \{1,\ldots,M\};  x_{j} + x_{M+j} +x_{2M+j} = 1$, $Q_i(t)$ is a  matrix whose  components are uniformly, absolutely bounded over $[0,T]$, as are the elements of the vector $p_i$. Then, for an $3M$-dimensional vector-valued function $\bF$, the system of differential equations 
\begin{equation}\label{Vectordf}
\begin{split}
  \dot{\bF} = \boldf(t, \bF) \qquad(0<t\leq T)\\
    \quad\text{subject to initial conditions $\bF(0)\in \BS$}
\end{split}
\end{equation}
 has a unique solution,
 $\bF(t)$, which varies continuously with the initial conditions $\bF_0 \in \BS$  at each $t\in[0,T]$.
 \end{Lemma}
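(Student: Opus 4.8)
The plan is to recognize this as a standard existence–uniqueness–continuous-dependence statement for ordinary differential equations and to prove it via the Picard--Lindel\"of (Cauchy--Lipschitz) theorem together with Gr\"onwall's inequality; the only genuine work is extracting a uniform Lipschitz bound from the stated hypotheses. First I would observe that the constraint set $\BS$ is compact: the conditions $\bx \geq \mathbf{0}$ and $x_j + x_{M+j} + x_{2M+j} = 1$ for each $j$ force every coordinate into $[0,1]$, so $\BS \subseteq [0,1]^{3M}$ is closed and bounded. On any bounded set the quadratic components $f_i(t,\bx) = \bx^T Q_i(t)\bx + p_i^T\bx$ have gradients $\nabla_{\bx} f_i(t,\bx) = \bigl(Q_i(t)+Q_i(t)^T\bigr)\bx + p_i$ that are uniformly bounded in $t\in[0,T]$, since the entries of $Q_i(t)$ and $p_i$ are uniformly absolutely bounded by assumption. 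Hence $\boldf(t,\cdot)$ is Lipschitz on bounded sets with a Lipschitz constant $K$ that can be taken independent of $t$.

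Next I would pass to a globally Lipschitz surrogate, because the quadratic forms themselves are not globally Lipschitz on $\mathbb{R}^{3M}$ and hence a bare appeal to a global existence theorem would permit finite-time blow-up. I would extend $\boldf$ to a bounded, globally Lipschitz function $\tilde{\boldf}$ on $[0,T]\times\mathbb{R}^{3M}$ --- for instance by multiplying by a cutoff equal to $1$ on a ball containing $\BS$ and vanishing outside a slightly larger ball --- so that $\tilde{\boldf}$ agrees with $\boldf$ on $\BS$ while acquiring a global Lipschitz constant in $\bx$, uniform in $t$. The global form of Picard--Lindel\"of (accommodating the finitely many $t$-discontinuities that the coefficients inherit from an admissible control by patching the argument across the breakpoints, or equivalently via its Carath\'eodory refinement) then yields, for each initial condition, a unique solution of $\dot{\bF}=\tilde{\boldf}(t,\bF)$ defined on the whole of $[0,T]$.

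For continuous dependence I would invoke Gr\"onwall directly. Given initial data $\bF_0,\bF_0'$ with solutions $\bF,\bF'$, writing each in integral form and subtracting gives $\|\bF(t)-\bF'(t)\| \leq \|\bF_0-\bF_0'\| + K\int_0^t \|\bF(s)-\bF'(s)\|\,ds$, whence Gr\"onwall's inequality delivers $\|\bF(t)-\bF'(t)\| \leq \|\bF_0-\bF_0'\|\,e^{KT}$ for all $t\in[0,T]$. This is Lipschitz --- and in particular continuous --- dependence on the initial condition, uniformly on $[0,T]$, as claimed; taking $\bF_0=\bF_0'$ also re-confirms uniqueness.

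The main obstacle, and the only point at which the specific structure enters, is that $\tilde{\boldf}$ is arbitrary off $\BS$: the lemma as stated guarantees a unique global solution of the surrogate system, but not that the trajectory remains in $\BS$, and if it left $\BS$ the surrogate dynamics would no longer reflect the original problem. I would therefore flag that invariance of $\BS$ is deliberately \emph{not} part of this generic lemma and is instead supplied when it is applied to the epidemic systems~\eqref{Imm_Sys_NonRep} and~\eqref{Imm_Sys_Rep}: there the normalization is preserved because the relevant triple of right-hand sides sums to zero, so $\tfrac{d}{dt}(S_i+I_i+R_i)=0$, and non-negativity is preserved because each coordinate's negative contributions carry that coordinate as a multiplicative factor, forcing $\dot F_i \geq 0$ whenever $F_i=0$. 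Once this invariance is established, the trajectory stays in the region where $\tilde{\boldf}\equiv\boldf$, so the unique global solution produced here is precisely the solution of the original system.
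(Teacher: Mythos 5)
Your proof is correct, and at its core it is the same Picard--Lindel\"of/Lipschitz argument the paper uses: the paper's proof also begins by showing $\boldf(t,\cdot)$ is Lipschitz uniformly in $t$ (via the algebraic identity $f_i(t,\bx)-f_i(t,\by)=\bigl(Q_i(t)\bx+p_i\bigr)^T(\bx-\by)+(\bx-\by)^TQ_i(t)\by$ and Cauchy--Schwarz, giving the constant $(2C+H)\sqrt{3M}$, versus your gradient bound), and then runs the standard Picard machinery. The two executions differ in ways worth noting. First, the paper works directly with $\boldf$ restricted to $[0,T]\times\BS$, constructs explicit Picard iterates, and obtains continuous dependence on $\bF_0$ as a byproduct of the uniform convergence of iterates that are each jointly continuous in $(t,\bx)$; you instead cite global Picard--Lindel\"of as a black box and get continuous (indeed Lipschitz) dependence from Gr\"onwall. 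Both are valid; the paper's version packages existence, uniqueness, and continuous dependence into one construction, while yours is more modular. Second, your cutoff extension to a globally Lipschitz surrogate $\tilde{\boldf}$ addresses a genuine subtlety that the paper's argument glosses over: the quadratic right-hand side is only Lipschitz on bounded sets, and neither the Picard iterates nor the solution itself is a priori confined to $\BS$ (invariance of $\BS$ is only established afterwards, in the proof of Theorem~\ref{thm:constraints}). Your explicit flagging of this point --- that the lemma deliberately does not assert invariance, which is supplied separately by the normalization and positivity arguments in the application --- matches the paper's overall architecture and makes the logic cleaner. Your remark about the finitely many $t$-discontinuities inherited from the admissible control is also apposite and is not addressed explicitly in the paper.
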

\begin{proof} This lemma is virtually identical to Lemma 1 of \cite{eshghi2013optimal} for $N=3M$, with the difference that here $f_i(t, \bx)$ has an additive ${p_i}^T \bx$ term. Thus, we will only list the changes: As the Euclidean norm $\|Q_i(t)\bx\|$ is uniformly bounded over $(t,\bx)\in[0,T]\times\BS$, there exists $C<\infty$ such that $\sup_{[0,T]\times\BS} \|Q_i(t)\bx\| \leq C$. {Also, $\|{p_i} \| \leq H$ for some  $H<\infty$} as all its elements are bounded. Now, for each $t$, we may write
\begin{align*}
  f_i(t,\bx) - f_i(t,\by)= \bigl(Q_i(t)\bx+{p_i}\bigr)^T (\bx - \by) + (\bx - \by)^T Q_i(t)\by
\end{align*}
Taking absolute values of both sides, we obtain
\begin{align*}
  |f_i(t,\bx) - f_i(t,\by)|\leq \bigl|\bigl(Q_i(t)\bx\bigr)^T (\bx - \by)\bigr|
      + \bigl|(\bx - \by)^T Q_i(t)\by\bigr| + \bigl|{p_i}^T \bigl(\bx - \by\bigr)\bigr|
    \leq  (2C+H) \|\bx-\by\| ~ (t\in[0,T];\; \bx,\by\in\BS),
\end{align*}
by using the triangle and Cauchy-Schwarz inequalities. Hence
\begin{align*}
  \|\boldf(t,\bx) - \boldf(t,\by)\| \leq L\,\|\bx - \by\| \qquad (t\in[0,T];\; \bx,\by\in\BS),
\end{align*}
and so $\boldf(t,\cdot)$ is Lipschitz over $\BS$ where the Lipschitz constant $L = (2C+H) \sqrt{3M}$ may be chosen uniformly for $t\in[0,T]$.

The rest of the proof is exactly as in \cite{eshghi2013optimal}.
\end{proof}

%\begin{proof}
{\em Proof of Theorem ~\ref{thm:constraints}:}{
We write $\bF(0) = \bF_0$,\hide{, in a slightly informal notation, we may thus} and in a slightly informal notation, $\bF = \bF(t) = \bF(t, \bF_0)$ to acknowledge the dependence of $\bF$ on the initial value $\bF_0$. }

We first verify that $ \bS(t) + \bI(t) + \bR(t) = \mathbf{1}$ for all $t$ in both cases.  
By summing the left and right sides of the system of equations~\eqref{Imm_Sys_NonRep} and the $\dot{R}_i$ equation that was left out (respectively the two sides of equations~\eqref{Imm_Sys_Rep}), we see that in both cases for all $i$,
$
  %\frac{d}{dt} g(t) =
 \bigl(\dot{S}_i(t) + \dot{I}_i(t) + \dot{R}_i(t)\bigr) = 0,
$
and, in view of the initial normalization %$g(0) = $
{$\bigl(S_i(0) + I_i(0) + R_i(0)\bigr) = 1$, we have %$g(t) = 1$
$ \bigl(S_i(t) + I_i(t) + R_i(t)\bigr) = 1$ for all $t$ and all $i$. } %Given that the normalization condition holds for all controls, we ignore it henceforth in the proof.

We now verify the non-negativity condition.  Let $\bF = (F_1,\dots,F_{3M})$ be the  state vector in ${3M}$ dimensions whose elements are comprised of $(S_i,1\leq i\leq M)$, $(I_i, 1\leq i\leq M)$ and $(R_i, 1\leq i\leq M)$ in some order. The system of equations~\eqref{Imm_Sys_NonRep} can thus be represented as $\dot{\bF} = \boldf(t, \bF)$, where for $t\in[0,T]$ and  $\bx\in\BS$, the vector-valued function $\boldf = (f_i,1\leq i\leq 3M)$ has component functions  $
  f_i(t, \bx) = \bx^T Q_i(t) \bx + {p_i}^T \bx
$
in which (i) $Q_i(t)$ is a  matrix whose non-zero elements are of the form $\pm\beta_{jk}$, (ii) the elements of $p_i(t)$ are of the form $\pm\bar{\beta}_{jk}R^0_ju_j$ and $\pm\bar{\beta}_{jk} \pi_{jk}R^0_ju_j$, whereas~\eqref{Imm_Sys_Rep} can be represented in the same form but with (i) $Q_i(t)$ having elements $\pm\beta_{jk}$, $\pm\bar{\beta}_{jk}u_j$, and $\pm\bar{\beta}_{jk} \pi_{jk}u_j$, and (ii) $p_i={\bf 0}$. Thus, the components of $Q_i(t)$ are uniformly, absolutely bounded over $[0,T]$. Lemma~\ref{baselemma} establishes that the solution $\bF(t, \bF_0)$ to the systems
 \eqref{Imm_Sys_NonRep} and \eqref{Imm_Sys_Rep} is unique and varies continuously with the initial conditions $\bF_0$; it clearly varies continuously with time.   Next, using elementary calculus, we show in the next paragraph that if $\bF_0\in \text{\bf Int }\BS$ (and, in particular, each component of $\bF_0$ is positive), then each component of the solution $\bF(t, \bF_0)$ of \eqref{Imm_Sys_NonRep} and \eqref{Imm_Sys_Rep} is positive at each $t\in[0,T]$.\hide{\footnote{Throughout the paper, we use positive for strictly positive, etc.}}  Since  $\bF(t, \bF_0)$  varies continuously with $\bF_0$, therefore $\bF(t,\bF_0)\geq\mathbf{0}$ for all $t\in[0,T]$, {$\bF_0\in \BS$}, which completes the overall proof.

Accordingly, let the $S_i$, $I_i$, and $R_i$ component of $\bF_0$ be positive. Since the solution $\bF(t, \bF_0)$  varies continuously with time,
there exists a time, say $t' > 0$, such that each component of  $\bF(t, \bF_0)$ is positive
in the interval $[0, t')$. The result follows trivially if $t' \geq T$. Suppose now that there exists $t''<T$
such that each component of  $\bF(t, \bF_0)$ is positive
in the interval $[0, t'')$, and at least one such component is $0$ at $t''$.

We first examine the non-replicative case. We show that such components can not be $S_i$ for any $i$ and subsequently
rule out $I_i$ and $R_i$ for all $i$.  Note that $u_j(t), I_j(t), S_j(t)$ are bounded in $[0, t'']$ (recall $\left(S_j(t)+I_j(t)+R_j(t)\right) = 1 , S_j(t)
\geq 0, I_j(t) \geq 0, R_j(t) \geq 0$ for all $j\in\{1,\ldots,M\}, t \in [0, t'']$). From (\ref{Imm_Sys_NonRep}a)
$S_i(t'') = S_i(0) e^{-\int_{0}^{t''}\sum_{j=1}^{M}(\beta_{ji}I_j(t) +\bar\beta_{ji} R_j^0u_j(t)) \, dt}$.
Since all $u_j(t), I_j(t)$ are bounded in $[0, t'']$, $S_i(0)>0, R_j^0\geq0$, and $\beta_{ji},\bar\beta_{ji} \geq 0$, therefore $S_i(t'') > 0$. Since $S_i(t) > 0$, $I_i(t) \geq 0$ for all $i, t \in [0, t'']$, and $\beta_{ji}\geq 0$, from (\ref{Imm_Sys_NonRep}b), $\dot{I}_i \geq -I_i\sum_{j=1}^M\pi_{ji} \bar\beta_{ji}R_j^{0}u_j$  for all $i$ in the interval $[0,t'']$. Thus, $I_i(t'') \geq I_i(0) e^{-\int_{0}^{t''} \sum_{j=1}^M\pi_{ji} (\bar\beta_{ji}R_j^{0}u_j(t)) \, dt}.$ Since all $u_j(t), I_j(t), S_j(t)$ are bounded in $[0, t'']$, and $I_i(0)>0$, $\bar\beta_{ji}, \pi_{ji}  \geq 0$,  it follows that $I_i(t'') > 0$ for all $i \geq 0$. Finally, $R_i(t'')>0$ because $R_i(0)>0$ and $\dot{R}_i(t)\geq0$ from the above, so $R_i(t)\geq R_i^0$, and $S_i(t)+I_i(t)\leq 1-R_i^0$ for all $t$ and $i$. This contradicts the definition of $t''$ and in turn implies that $\bF(t,\bF
_0)>0$ for all $t \in [0,T]$, $\bF_0 \in \text{\bf Int } \BS$.

The proof for the replicative case is similar, with the difference that $R_i^0$ is replaced with $R_i$, which is itself bounded.

Since the control and the unique state solution $\bS(t)$, $\bI(t)$ are non-negative,~(\ref{Imm_Sys_NonRep}a, \ref{Imm_Sys_Rep}a)
imply that  $\bS(t)$ is a non-increasing function of time. Thus, $S_j(t) = 0$ if $S_j(0) = 0$ for any $j$.  Using the  argument in the above paragraph and starting from a $t' \in [0, T)$ where $S_j(t') > 0$, $I_j(t') > 0$, or $R_j(t') > 0$,
it may be shown respectively that $S_j(t) > 0$, $I_j(t) > 0$, and $R_j(t) > 0$ for all $t > t'$. All that remains to show now is:
\begin{Lemma}\label{lem:I=0}
There exists $\epsilon > 0$ such that $\mathbf{I}(t) >0$ for $t\in(0,\epsilon)$.
\end{Lemma}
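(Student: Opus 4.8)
The plan is to exploit the connectivity assumption to propagate strict positivity of the infection outward from the initially infected types. The difficulty is precisely that a type $i$ with $I_i(0)=0$ lies on the boundary of $\BS$, so the continuity-in-initial-conditions argument used just above—which delivered strict positivity only for initial data in $\text{\bf Int }\BS$—says nothing about such a type; instead I must extract positivity directly from the structure of the dynamics. First I would record the two ingredients, both furnished by Theorem~\ref{thm:constraints}: every state is non-negative on $[0,T]$, and $S_i(t)>0$ for all $t$ whenever $S_i(0)>0$. The latter applies to every type occurring on a connectivity path, since the neighbour relation builds in the requirement $S>0$, which I read as $S(0)>0$ (consistent with Theorem~\ref{thm:constraints}, i.e. $S_j(t)>0$ iff $S_j(0)\neq 0$).

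Next I would stratify the types by connectivity distance from the initially infected set: set $d(i)=0$ when $I_i(0)>0$, and otherwise let $d(i)$ be the length of a shortest directed neighbour-path $s_1\mapsto\cdots\mapsto s_k=i$ starting from an initially infected $s_1$. The standing hypothesis that every type is either infected or connected to an infected type guarantees $d(i)<\infty$, with $d(i)\le M-1$. I would then prove by induction on $n$ that there is a single $\delta>0$ for which $I_i(t)>0$ on $(0,\delta)$ for all $i$ with $d(i)\le n$. The base case $n=0$ is immediate: the finitely many functions $I_i$ with $d(i)=0$ satisfy $I_i(0)>0$, so by continuity $I_i(t)>0$ on some common $[0,\delta)$.

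The heart of the argument is the inductive step, carried out with a variation-of-parameters estimate. Take $i$ with $d(i)=n+1$ and a predecessor $j$ with $d(j)=n$ and $j\mapsto i$, so that $\beta_{ji}>0$ and $S_i(0)>0$ (hence $S_i(t)>0$ throughout). Setting $a(t):=\sum_{k}\pi_{ki}\bar\beta_{ki}R_k^0u_k(t)$—non-negative and bounded since $u_k\le 1$ and the states are bounded—I would discard the remaining non-negative infection-gain terms in $\dot I_i=\sum_k\beta_{ki}I_kS_i-a(t)I_i$ to obtain $\dot I_i+a(t)I_i\ge \beta_{ji}I_jS_i$, and then integrate against the factor $\Phi(t)=e^{\int_0^t a(s)\,ds}$. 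Because $I_i(0)=0$, this gives
\[
\Phi(t)\,I_i(t)\ \ge\ \int_0^t \Phi(s)\,\beta_{ji}\,I_j(s)\,S_i(s)\,ds .
\]
By the inductive hypothesis $I_j(s)>0$ on $(0,\delta)$, while $S_i(s)>0$, $\beta_{ji}>0$, and $\Phi(s)>0$ there, so the integrand is strictly positive on a set of positive measure and $I_i(t)>0$ for every $t\in(0,\delta)$. Since the same $\delta$ propagates unchanged from each layer to the next, taking $\epsilon=\delta$ yields $\mathbf{I}(t)\succ\mathbf{0}$ on $(0,\epsilon)$ and completes the proof.

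The main obstacle is the bookkeeping of this inductive step: one must keep the non-selected infection terms non-negative (which is exactly where non-negativity of all states from Theorem~\ref{thm:constraints} is used) while isolating the single predecessor contribution $\beta_{ji}I_jS_i$, and verify $S_i>0$ along the whole path—both secured once the neighbour relation's $S>0$ clause is interpreted as initial (hence all-time) susceptibility. The replicative case requires no new idea: it is identical after substituting $R_k(t)$ for $R_k^0$ in $a(t)$, as $R_k(t)$ is likewise non-negative and bounded.
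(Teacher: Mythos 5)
Your proposal is correct and follows essentially the same route as the paper: both stratify the types by connectivity distance from the initially infected set $U$, induct on that distance, and in the inductive step lower-bound $\dot I_i$ by discarding all gain terms except the one coming from an already-infected predecessor $j$ with $\beta_{ji}>0$, $S_i>0$, then conclude via a Gronwall/integrating-factor estimate. The only (harmless) difference is that you keep a single uniform $\delta$ for every layer, whereas the paper staggers the thresholds as $\tfrac{r}{M}\delta$; your variation-of-parameters inequality $\Phi(t)I_i(t)\ge\int_0^t\Phi(s)\beta_{ji}I_j(s)S_i(s)\,ds$ makes the uniform choice work and is, if anything, slightly cleaner.
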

Let $d(i,j)$ be the \emph{distance} from type $j$ to type $i$ (i.e., \hide{one more than the minimum number of neighbours that have to be traversed to get from $j$ to $i$} for all $i$, $d(i,i)=0$ and for all pairs $(i,j)$, $d(i,j)=1+$\hide{min_{k\in {Neighbours(i)}}d(k,j)} minimum number of types in a path from type $j$ to type $i$). Now, define $d(i,U):=\min_{j \in U}d(i,j)$, where $U:=\{\,i: I_i^0>0\,\}$. Since we assumed
that every type $i$ is either in $U$ or is connected to a type in $U$,  $d(i,U)<M$ for all types $i$.

Let $\delta>0$ be a time such that for all types $i$ such that $d(i,U)=0$ (the initially infected types), \hide{and thus $I_i^{(0)}(0)=I_i^0>0$, }we have $I_i(t)>0$ for $t \in [0, \delta)$. Thus, proving Lemma \ref{lem:I=0_2} below will be equivalent to proving Lemma \ref{lem:I=0}, given an appropriate scaling of $\delta$.
\begin{Lemma}\label{lem:I=0_2}
For all $i$ and for all integers $r\geq 0$, if $d(i,U)\leq r$, then $I_i(t)>0$ for $t \in (\frac{r}{M}\delta, T)$.
\end{Lemma}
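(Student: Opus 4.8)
The plan is to prove Lemma~\ref{lem:I=0_2} by induction on $r$, leaning on the \emph{persistence} of infection already extracted in the proof of Theorem~\ref{thm:constraints}: from (\ref{Imm_Sys_NonRep}b) one has $\dot I_i \ge -I_i\sum_{j}\pi_{ji}\bar\beta_{ji}R_j^0 u_j$ with a bounded coefficient, so $I_i(t)\ge I_i(t_0)\exp\!\big(-\int_{t_0}^{t}\sum_j\pi_{ji}\bar\beta_{ji}R_j^0u_j\,ds\big)$, whence $I_i(t_0)>0$ forces $I_i(t)>0$ for every $t\ge t_0$. For the base case $r=0$, the hypothesis $d(i,U)\le 0$ means $i\in U$; the defining property of $\delta$ gives $I_i>0$ on $[0,\delta)$, and persistence promotes this to all of $(0,T)=(\tfrac0M\delta,T)$.

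For the inductive step I would assume the statement for $r$ and take $i$ with $d(i,U)=r+1$ (the case $d(i,U)\le r$ is inherited directly, since $(\tfrac rM\delta,T)\supseteq(\tfrac{r+1}M\delta,T)$). Because $i$ is connected to $U$, a shortest infection-carrying chain from $U$ to $i$ has a penultimate type $k$ with $d(k,U)\le r$ of which $i$ is a neighbour; thus $\beta_{ki}>0$ and $S_i(0)>0$, the latter forcing $S_i(t)>0$ for all $t$ by the susceptible part of Theorem~\ref{thm:constraints}, while the induction hypothesis gives $I_k(t)>0$ on $(\tfrac rM\delta,T)$. Consequently the source $\sum_j\beta_{ji}I_jS_i$ in (\ref{Imm_Sys_NonRep}b) is bounded below by $\beta_{ki}I_k(t)S_i(t)>0$ throughout $(\tfrac rM\delta,T)$. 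Writing (\ref{Imm_Sys_NonRep}b) in variation-of-constants form and integrating from any $t_0\downarrow\tfrac rM\delta$, the strictly positive source makes the integral term strictly positive for every $t>\tfrac rM\delta$, so $I_i(t)>0$ there, and in particular on the interval $(\tfrac{r+1}M\delta,T)$ demanded by the lemma. This closes the induction, and since $d(i,U)<M$ for every $i$, running it to $r=M-1$ covers all types.

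Two comments on execution. The argument in fact propagates positivity with \emph{no} delay, so the same induction yields $I_i>0$ on all of $(0,T)$ for every $i$ at once; this uniform positivity on an initial interval is precisely what Lemma~\ref{lem:I=0} requires, after which persistence completes Theorem~\ref{thm:constraints}. An alternative that sidesteps the integrating factor is a short contradiction: were $I_i\equiv0$ throughout the buffer $(\tfrac rM\delta,\tfrac{r+1}M\delta)$, then $\dot I_i\equiv0$ there on each subinterval of continuity of $\mathbf{u}$, contradicting $\dot I_i\ge\beta_{ki}I_kS_i>0$; hence $I_i$ is positive at some $t^\ast$ in the buffer, and persistence carries it forward past $\tfrac{r+1}M\delta$. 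The only genuine difficulty is bookkeeping rather than analysis: one must orient the neighbour relation correctly so that the term fed by $I_k$ really appears as a \emph{source} in $\dot I_i$ (i.e.\ that $\beta_{ki}$, not $\beta_{ik}$, is the relevant rate), and verify that the susceptible fraction $S_i$ never vanishes. The replicative system~\eqref{Imm_Sys_Rep} is handled identically, with $R_j^0$ replaced by the bounded, eventually positive $R_j(t)$ supplied by Theorem~\ref{thm:constraints}.
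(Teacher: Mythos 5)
Your proof is correct and follows essentially the same route as the paper's: induction on the graph distance $d(i,U)$, using the strictly positive source term $\beta_{ki}I_kS_i$ contributed by an already-infected neighbour (with $S_i>0$ guaranteed by Theorem~\ref{thm:constraints}) together with an integrating-factor bound to conclude $I_i>0$. The only difference is cosmetic: you note that positivity propagates with no delay, whereas the paper keeps the $\delta/M$ staggering as a buffer; both close the induction within $M$ steps.
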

%\begin{proof}
{\em Proof:}
By induction on r.

\emph{Base case:}
$r=0$. If $d(i,U)=0$, this means that the type is initially infected, and thus $I_i(t)>0$ for $t \in (0, T)$ by definition. Therefore the base case holds.

\emph{Induction step:}
Assume that the statement holds for  $r=0,\ldots,k$ and consider $r=k+1$. Since $(\frac{k+1}{M}\delta, T)\subset (\frac{k}{M}\delta, T)$, we need to examine types $i$ such that $d(i,U) = k+1$. In equation (\ref{Imm_Sys_NonRep}b) at $t=\frac{k+1}{M} \delta$, the first sum on the right involves terms like $I_j(\frac{k+1}{M}\delta)S_i(\frac{k+1}{M}\delta)$ where $j$ is a neighbor of $i$, while the second sum involves terms like $I_i(\frac{k+1}{M}\delta)u_j(\frac{k+1}{M}\delta)$. Since $d(i,U)= k+1$, there exist neighbours $j$ of $i$ such that $d(j,U)= k$, and therefore $I_j(t)>0$ for $t \in [\frac{k+1}{M}\delta,T)$ (by the induction hypothesis). Hence since $S_i^0>0$ and $\beta_{ji}>0$ ($i$ and $j$ being neighbours), for such $t$, $\dot{I}_{i}(t)> -I_{i}(t)\sum_{j=1}^M\pi_{j{i}} \bar\beta_{j{i}}R_j^0u_j(t)\geq-GI_{i}(t)$, where $G\geq 0$ is an upperbound on the sum (continuous functions are bounded on a closed and bounded interval). Thus $I_{i}(t)> I_{i}(\frac{k+1}{M}\delta) e^{-Kt}>0$, completing the proof for $r=k+1$.\quad\QED
%\end{proof}
%\end{proof}
\hide{
\section{\textsc{Proof of Lemma~\ref{baselemma}}}\label{appendix_lemma_6}
}
\hide{
We first prove that $\boldf(t,\cdot)$ is Lipschitz over $\BS$, where the Lipschitz constant  may be chosen uniformly as $t$ varies over $[0,T]$.
 Since the components of $Q_i(t)$ are  uniformly, absolutely bounded over $[0,T]$, it follows that the Euclidean norm $\|Q_i(t)\bx\|$ is uniformly bounded over $(t,\bx)\in[0,T]\times\BS$: that is to say, there exists $C<\infty$ such that $\sup_{[0,T]\times\BS} \|Q_i(t)\bx\| \leq C$. {Also, $\|{p_i} \| \leq H$ for some  $H<\infty$} as all its elements are bounded. Now, for each $t$, we may write
\begin{multline*}
  f_i(t,\bx) - f_i(t,\by)= \bx^T Q_i(t) \bx + {p_i}^T\bx - \by^T Q_i(t) \by - {p_i}^T\by
    \\= \bigl(\bx^T Q_i(t) \bx - \bx^T Q_i(t) \by\bigr)
      + \bigl(\bx^T Q_i(t) \by - \by^T Q_i(t) \by\bigr) \\+ {p_i}^T \bigl(\bx - \by\bigr)
    = \bigl(Q_i(t)\bx\bigr)^T (\bx - \by) + (\bx - \by)^T Q_i(t)\by + {p_i}^T \bigl(\bx - \by\bigr).
\end{multline*}
Taking absolute values of both sides, we obtain
\begin{multline*}
  |f_i(t,\bx) - f_i(t,\by)|\leq \bigl|\bigl(Q_i(t)\bx\bigr)^T (\bx - \by)\bigr|
      + \bigl|(\bx - \by)^T Q_i(t)\by\bigr| \\+ \bigl|{p_i}^T \bigl(\bx - \by\bigr)\bigr|
    \leq (\|Q_i(t)\bx\|+\|Q_i(t)\by\| +\|{p_i}\|)\cdot\|\bx-\by\| 
    \\\leq (2C+H) \|\bx-\by\| \qquad (t\in[0,T];\; \bx,\by\in\BS).
\end{multline*}
The first step follows by the triangle inequality, the second by two applications of the Cauchy-Schwarz inequality. Therefore
\begin{multline*}
  \|\boldf(t,\bx) - \boldf(t,\by)\| = \biggl(\sum_{i=1}^{3M}
      \bigl(f_i(t,\bx) - f_i(t,\by)\bigr)^2\biggr)^{\!1/2}
    \\\leq L\,\|\bx - \by\| \qquad (t\in[0,T];\; \bx,\by\in\BS),
\end{multline*}
and so $\boldf(t,\cdot)$ is Lipschitz over $\BS$ where the Lipschitz constant $L = (2C+H) \sqrt{3M}$ may be chosen uniformly as $t$ varies over $[0,T]$.

 We now show, using a standard process of Picard iteration,  that the solution $(t,\bF_0)\mapsto\bF(t,\bF_0)$ of~\eqref{Vectordf} is continuous, hence also uniformly continuous, over the compact set $[0,T]\times\BS$. Starting with any continuous function $\bFu0$ on $[0,T]\times\BS$, recursively form the Picard iterates
\begin{align*}
  \bFu{n}(t,\bx) = \bx + \int_0^t \boldf\bigl(v, \bFu{n-1}(v,\bx)\bigr)\,dv
    \qquad (n\geq1).
\end{align*}
[Subscripts are getting overloaded and so we introduce superscripts in a nonce notation to represent the iteration index (and not higher-order derivatives).]  We will show that  $\bigl\{\,\bFu{n}(t,\bx), n\geq1\,\bigr\}$  converges uniformly to a limit $\bF(t,\bx)$, where $\bF(t,\bx)$ is the unique solution to~\eqref{Vectordf} and is continuous over $[0,T]\times\BS$. This will complete the proof of the lemma.

 By induction we see that, for each $n$, $\bFu{n}$ is continuous, hence uniformly continuous and bounded, over the compact set $[0,T]\times\BS$. Write $K = \sup_{(t,\bx)\in[0,T]\times\BS} \bigl\|\bFu1(t,\bx) - \bFu0(t,\bx)\bigr\|$. The supremum over the compact set $[0,T]\times\BS$ is finite as the function difference inside the norm is continuous. As $\boldf$ is Lipschitz, $\bigl\|\bFu2(t,\bx) - \bFu1(t,\bx)\bigr\|
    \leq L\int_0^t \bigl\|\bFu1(v,\bx) - \bFu0(v,\bx)\bigr\|\,dv \leq KLt$.
So, by induction:
\begin{equation*}
  \bigl\|\bFu{n+1}(t,\bx) - \bFu{n}(t,\bx)\bigr\|\leq K\frac{(Lt)^n}{n!}.
\end{equation*}
By repeated use of the triangle inequality, it follows that
\begin{align*}
  \bigl\|\bFu{n+m}(t,\bx) - \bFu{n}(t,\bx)\bigr\|\leq K\sum_{k=n}^{n+m-1}\frac{(Lt)^k}{k!} 	\Longrightarrow\\
  \sup_{(t,\bx)\in[0,T]\times\BS}\bigl\|\bFu{n+m}(t,\bx) - \bFu{n}(t,\bx)\bigr\|
    \leq K\sum_{k=n}^{n+m-1}\frac{(LT)^k}{k!}\\\leq K e^{LT} \frac{(LT)^n}{n!}.
\end{align*}
The bound on the right decays to zero uniformly (as $n\to \infty$) and so $\bigl\{\,\bFu{n}(t,\bx), n\geq1\,\bigr\}$ is a Cauchy sequence converging uniformly to a limit $\bF(t,\bx)$. As the uniform limit of continuous functions is continuous, it follows that $\bF(t,\bx)$ is continuous over $[0,T]\times\BS$. Identifying $\bx$ with $\bF_0$, we next  show using  the usual Picard argument that this limit function $\bF(t,\bx)$ is a solution to~\eqref{Vectordf}.

 For any $0\leq t\leq T$, we have

\begin{align*}
  \biggl\| &\bFu{n+1}(t,\bx)
    - \biggl(\bx + \int_0^t \boldf\bigl(v,\bF(v,\bx)\bigr)\,dv\biggr)\biggr\|\allowdisplaybreaks\\&= \biggl\|\int_0^t \bigl[ \boldf\bigl(v,\bFu{n}(v,\bx)\bigr)
      - \boldf\bigl(v,\bF(v,\bx)\bigr)\bigr]\,dv\biggr\|\allowdisplaybreaks\\
    &\leq \int_0^t \bigl\| \boldf\bigl(v,\bFu{n}(v,\bx)\bigr)
      - \boldf\bigl(v,\bF(v,\bx)\bigr)\bigr\|\,dv
    \allowdisplaybreaks\\&\leq L\int_0^t \bigl\|\bFu{n}(v,\bx)  - \bF(v,\bx)\bigr\|\,dv\allowdisplaybreaks\\
    &\leq LT\sup_{0\leq v\leq T} \bigl\|\bFu{n}(v,\bx)  - \bF(v,\bx)\bigr\|.
\end{align*}
As $n\to\infty$, the term $\bFu{n+1}(t,\bx)$ on the left tends uniformly to the limit function $\bF(t,\bx)$ while the entire right-hand side tends to zero. It follows that
\begin{equation}\label{Vectordf2}\tag{\ref{Vectordf}$'$}
  \bF(t,\bx) = \bx + \int_0^t \boldf\bigl(v,\bF(v,\bx)\bigr)\,dv,
\end{equation}
which, by differentiation, is seen to be the same as~\eqref{Vectordf} with $\bx$ identified with $\bF_0$.

We next show that \eqref{Vectordf2} has a unique solution. Otherwise, let $\bF_1(t,\bx)$ and $\bF_2(t,\bx)$ constitute two distinct solutions of~\eqref{Vectordf2}. Let $J = J(\bx) = \sup_{0\leq t\leq T} \|\bF_1(t,\bx) - \bF_2(t,\bx)\|$. Then
\small
\begin{align*}
  \|\bF_1(t,\bx) &- \bF_2(t,\bx)\| = \biggl\|\int_0^t \bigl[ \boldf\bigl(v, \bF_1(v,\bx)\bigr)
      - \boldf\bigl(v, \bF_2(v,\bx)\bigr)\bigr]\,dv\biggr\|\\
    &\leq \int_0^t \bigl\|\boldf\bigl(v, \bF_1(v,\bx)\bigr)
      - \boldf\bigl(v, \bF_2(v,\bx)\bigr)\bigr\|\,dv\\&\overset{(\ast)}\leq L\int_0^t \|\bF_1(v,\bx) - \bF_2(v,\bx)\|\,dv
    \overset{(\ast\ast)}\leq JLt.
\end{align*}
\normalsize
Working iteratively by applying the bound~$(\ast\ast)$ to the integrand in the step~$(\ast)$, we see that
\small
\begin{align*}
  \|\bF_1(t,\bx) - \bF_2(t,\bx)\|
    &\leq L\int_0^t \|\bF_1(v,\bx) - \bF_2(v,\bx)\|\,dv
    \\&\leq JL^2\int_0^t v\,dv = \frac{JL^2t^2}2,
\end{align*}
\normalsize
whence, by induction, we obtain
$
  \|\bF_1(t,\bx) - \bF_2(t,\bx)\| \leq \frac{JL^nt^n}{n!}
$
for each $n$. The right-hand side tends to zero as $n\to\infty$ and so, by letting $n\to\infty$ on both sides, we see that $\bF_1(t,\bx) = \bF_2(t,\bx)$. Thus,  the system~\eqref{Vectordf} has a unique solution $\bF$.
\end{proof}}

\hide{\section{General Properties}\label{General Properties}
\subsection{Proof of Property \ref{property1}}
\begin{proof}
By contradiction. We assume $g(t_0) = L$ and $\dot{g}({t_0}^+)>0$. Then there exists a $\delta \in (0,t_1- t_0)$ such that  $\dot{g}(t)>0$ for $t \in(t_0, t_0+\delta)$. But $g(t_0+\delta)= g(t_0)+\int_{t_0}^{t_0+\delta} \! \dot{g}(x) \, \mathrm{d} x\hide{=L+\int_{t_0}^{t_0+\delta} \! \dot{g}(x) \, \mathrm{d} x}>L$, which is a contradiction. Thus the property holds. The proof for $g(t) > L$ has the signs interchanged.\hide{ The same argument can be applied from the left to get the left-sided counterparts.}
\end{proof}

\subsection{Proof of Property \ref{property2}}
\begin{proof}
We asume $g(a)=g(b)=L$. If  $\frac{dg}{dx}(a^+)>0$, there exists an $\epsilon>0$ such that $g(x)>L$ for all $x \in (a,a+\epsilon)$ and if $\frac{dg}{dx}(b^-)>0$ then there exists an $\alpha>0$ such that $g(x)<L$ for all $x \in (b-\alpha,b)$. Now $g(a+\frac{\epsilon}{2})>L$ and $g(b-\frac{\alpha}{2})<L$; thus, due to the continuity of $g(t)$, the intermediate value theorem states that there must exist a $y \in (a+\frac{\epsilon}{2},b-\frac{\alpha}{2})$ such that $g(y)=L$, which is in contradiction with the assumption that $g(x)\neq L$ for $x \in (a,b)$. The property follows.
\end{proof}}
\end{document}